\documentclass[11pt]{article}

\usepackage{epsfig} % for postscript graphics files
\usepackage{amsmath} % assumes amsmath package installed
\usepackage{amssymb}  % assumes amsmath package installed
\usepackage{amsthm}
\usepackage{cite}
\usepackage{color}
\usepackage{enumitem}
\usepackage{url}
\usepackage{subcaption}

\newcommand*{\real}{\mathbb{R}}

\theoremstyle{plain}
\newtheorem{lemma}{Lemma}
\newtheorem{theorem}{Theorem}

\theoremstyle{definition}
\newtheorem{assumption}{Assumption}
\newtheorem{definition}{Definition}
\newtheorem{algorithm}{Algorithm}
\newtheorem{remark}{Remark}

\author{Kin Cheong Sou}
\title{Minimum Equivalent Precedence Relation Systems}
\date{\today}

\begin{document}
\maketitle

\begin{abstract}
In this paper two related simplification problems for systems of linear inequalities describing precedence relation systems are considered. Given a precedence relation system, the first problem seeks a minimum subset of the precedence relations (i.e., inequalities) which has the same solution set as that of the original system. The second problem is the same as the first one except that the ``subset restriction'' in the first problem is removed. This paper establishes that the first problem is NP-hard. However, a sufficient condition is provided under which the first problem is solvable in polynomial-time. In addition, a decomposition of the first problem into independent tractable and intractable subproblems is derived. The second problem is shown to be solvable in polynomial-time, with a full parameterization of all solutions described. The results in this paper generalize those in [Moyles and Thompson 1969, Aho, Garey, and Ullman 1972] for the minimum equivalent graph problem and transitive reduction problem, which are applicable to unweighted directed graphs.
\end{abstract}

\section{Introduction}

\subsection{Statement of problem}
In this paper we consider precedence relation systems with $n$ variables of the form:
\begin{equation} \label{eqn:prec_sys}
x_i - x_j \le c_{ij}, \quad (i,j) \in E,
\end{equation}
where $c_{ij} \in \real$ and $E \subseteq \{1,\ldots,n\} \times \{1,\ldots,n\}$ are given, and $x_1, x_2, \ldots, x_n$ are the variables. $E$ is the index set of all precedence relations in (\ref{eqn:prec_sys}). Let $c(E) \in \real^{|E|}$ be the vector edge weights such that if the $k^\text{th}$ element of $E$ is $(i,j)$, then $c_k = c_{ij}$. Also, let $V = \{1,2,\ldots,n\}$ denote the set of variable indices. Then, system~(\ref{eqn:prec_sys}) is described by the triple $(V,E,c(E))$.

%\begin{assumption} \label{asm:feasibility}
%There exists $x \in \real^n$ satisfying (\ref{eqn:prec_sys}).
%\end{assumption}
%\begin{assumption} \label{asm:simple_pg}
%For each specific pair of $i$ and $j$, $E$ contains at most one order pair $(i,j)$. Also, if $(i,j) \in E$ then $i \neq j$.
%\end{assumption}
%\begin{remark}
%Assumption~\ref{asm:simple_pg} is imposed without loss of generality. If for any pair $(i,j)$ multiple precedence relations hold with $x_i - x_j \le c_{ij}^k$ for $k = 1,2,\ldots,m$, then $x_i - x_j \le \min_k \{c_{ij}^k\}$ summarizes the same relation. Also, with Assumption~\ref{asm:feasibility} it holds that $0 \le c_{ii}$ for all $i = 1,\ldots,n$, and hence these relations (if exist) can be removed from (\ref{eqn:prec_sys}).
%\end{remark}

This paper considers two related problems regarding the simplification of precedence relation systems. 

{\bf The first problem -- maximum index set of redundant relations problem:} we seek a \emph{maximum (cardinality) index set of redundant relations}, with the definition of an index set of redundant relations given by:
\begin{definition}[Index set of redundant relations] \label{def:redundant_relations}
Let $(V,E,c(E))$ be a precedence relation system, then $R \subseteq E$ is called an index set of redundant relations of $(V,E,c(E))$ if 
\begin{equation} \label{eqn:redundant_sys_set}
\text{$x \in \real^n$ satisfies $(V,E,c(E))$} \iff \text{$x$ satisfies $(V,(E \setminus R),c(E \setminus R))$}. 
\end{equation}
\end{definition}
In this paper, two precedence relation systems are equivalent (with symbol $\equiv$) if they have the same solution set. Therefore, condition~(\ref{eqn:redundant_sys_set}) can be stated as $(V,(E \setminus R),c(E \setminus R)) \equiv (V,E,c(E))$. In its minimization form, the first problem can be posed as finding a minimum (cardinality) subset $E' \subseteq E$ (i.e., $E' = E \setminus R$) such that $(V,E',c(E')) \equiv (V,E,c(E))$.

{\bf The second problem -- equivalent reduction problem:} we seek an \emph{equivalent reduction} of a given precedence relation system $(V,E,c(E))$ as follows:
\begin{definition}[Equivalent reduction] \label{def:equivalent_reduction}
Let $(V,E,c(E))$ be a precedence relation system. A precedence relation system $(V,E^\prime,c^\prime(E^\prime))$ with $E^\prime \subseteq V \times V$ and $c^\prime(E^\prime) \in \real^{|E^\prime |}$ is an equivalent reduction of $(V,E,c(E))$ if it satisfies the following two conditions:
\begin{enumerate}[label=\ref{def:equivalent_reduction}.\alph*]
\item\label{def:er_equiv} $x \in \real^n$ satisfies $(V,E,c(E)) \iff x$ satisfies $(V,E^\prime,c^\prime(E^\prime))$,
\item with respect to property \ref{def:er_equiv}, $E^\prime$ has the minimum cardinality.
\end{enumerate}
\end{definition}

The maximum index set of redundant relations problem (i.e., the first problem) is a restriction of the equivalent reduction problem (i.e., the second problem). In the search of the minimum equivalent system of precedence relations for the first problem (in its minimization version) we are restricted to a subset of (\ref{eqn:prec_sys}), whereas in the second problem there is no such restriction. In addition, according to definition the values in $c^\prime$ in the second problem need not be the same as $c$ as in the first problem. The distinction between the two problems considered in this paper is analogous to the distinction between minimum equivalent graph in \cite{Moyles:1969:AFM:321526.321534} and transitive reduction in \cite{AGU_transitive_reduction}, in the setting of unweighted directed graph simplification.

\subsection{Main contributions and previous works}

The main contributions of this paper are as follows:
\begin{enumerate}
\item We derive a sufficient condition under which the maximum index set of redundant relations problem has a unique solution and is polynomial-time solvable. In addition, we show that in general the maximum index set of redundant relations problem is NP-hard.
\item We show that the maximum index set of redundant relations problem can be decomposed into a finite number of independent subproblems, one of which being polynomial-time solvable and the rest NP-hard.
\item Based on the decomposition, we provide a parameterization of all solutions to the maximum index set of redundant relations problem.
\item We provide a complete parameterization of all solutions to the equivalent reduction problem. The parameterization suggests a procedure that can find any solution to this problem in polynomial time.
\end{enumerate}

In essence, the results in this paper are generalizations of those in \cite{Moyles:1969:AFM:321526.321534,AGU_transitive_reduction}. The generalization is in the sense that the results in \cite{Moyles:1969:AFM:321526.321534,AGU_transitive_reduction} pertain \emph{unweighted} directed graphs, while the results in this paper pertain \emph{weighted} directed graphs (the relation between weighted directed graphs and precedence inequality systems will be explained in the sequel). When $c_{ij} = 0$ for all $(i,j) \in E$, it can be shown that (\ref{eqn:redundant_sys_set}) is equivalent to the condition that unweighted directed graph $(V,E)$, with $V$ and $E$ being the node set and edge set respectively, has the same reachability as $(V,E \setminus R)$. That is, there is a walk from $i \in V$ to $j \in V$ in $(V,E)$ if and only if there is a walk from $i$ to $j$ in $(V,E \setminus R)$. Thus, the minimum equivalent graph problem for \emph{unweighted} directed graphs \cite{Moyles:1969:AFM:321526.321534} can be reduced to the maximum index set of redundant relations problem with $c_{ij} = 0$ for all $(i,j) \in E$. An implication of the reduction is that the results in this paper can be specialized to obtain those in \cite{Moyles:1969:AFM:321526.321534}. In addition, we establish that even in the generalized setup the complexity and decomposition results in this paper are analogous to those in \cite{Moyles:1969:AFM:321526.321534}. However, the results in this paper and in \cite{Moyles:1969:AFM:321526.321534} are not exactly the same -- there is a difference in the equivalence classes (in the node set) defining the decompositions. Similarly, an instance of the transitive reduction problem (studied in  \cite{AGU_transitive_reduction}) for unweighted directed graph $(V,E)$ can be solved as an instance of equivalent reduction problem with $(V,E,c(E))$ with $c_{ij} = 0$ for all $(i,j) \in E$. In addition, the complexity result and the parameterization of the set of all equivalent reductions provided in this paper are analogous to those of \cite{AGU_transitive_reduction}.

In comparison with methods to simplify general sets of linear equalities and inequalities (e.g., \cite{telgen1983identifying,greenberg1996consistency}), this paper proposes more specialized and time-efficient algorithms for the more restrictive setting of precedence relation inequalities in (\ref{eqn:prec_sys}).

\subsection{Application motivations}
The precedence relation system in (\ref{eqn:prec_sys}) arises in applications such as machine scheduling (e.g., \cite{Finta1996247,Balas:1995:OPD:208828.208841,Chekuri199929,Brucker199977,Wikum199487}), chemical process planning (e.g., \cite{chemical}), smart grid (e.g., \cite{seaport_scheduling_CDC2011,Sou_ECC2013}), parallel computing (e.g., \cite{1639388}) and flexible manufacture systems (e.g., \cite{Pinilla2001341}). In particular, in \cite{Brucker199977,Wikum199487} constraints in (\ref{eqn:prec_sys}) are referred to as positive and negative time-lag constraints and generalized precedence constraints, respectively. Scheduling problems with (\ref{eqn:prec_sys}) are analyzed in \cite{Brucker199977,Wikum199487} and the subsequent literature.

It will be shown, for instance, that the equivalent reduction problem can be solved in $O({|V|}^3)$ time. Hence, algorithms that require more than $O({|V|}^3)$ time for problems involving precedence constraints in (\ref{eqn:prec_sys}) can potentially benefit from the simplification results in this paper. For example, \cite{SSJ_DP2014_part2} considered a nonconvex resource allocation problem where the decision variables $x_i$'s are the start times of tasks to be scheduled. In addition, $x_i$'s are precedence-constrained as in (\ref{eqn:prec_sys}). It was shown in \cite{SSJ_DP2014_part2} that the computation effort for solving the resource allocation problem using dynamic programming increases exponentially with the cardinality of the minimum feedback vertex set of the undirected version of graph $(V,E)$. By solving the precedence relation system simplification problems in this paper, and replacing $(V,E)$ with an equivalent $(V,E')$ with $|E'| < |E|$, it is possible to reduce the computation effort for solving the resource allocation problem in \cite{SSJ_DP2014_part2}.

In \cite{Coffman_Graham} Coffman and Graham derive an algorithm to solve a machine scheduling problem with a special case of (\ref{eqn:prec_sys}), where $x_i$ is interpreted as the start time of task $i$ in a $n$-task scheduling problem and $c_{ij} = -p_i$ with $p_i$ being the given processing time for task $i$. The algorithm in \cite{Coffman_Graham} needs to remove all redundant constraints in the specialized version of (\ref{eqn:prec_sys}). This can be achieved by applying transitive reduction in \cite{AGU_transitive_reduction} to an appropriately constructed unweighted directed acyclic graph. However, in \cite{Brucker199977,Wikum199487} machine scheduling problems with (\ref{eqn:prec_sys}) in its full generality are considered. We are not aware of any solution algorithm for the generalized machine scheduling problem that is similar to the one by Coffman and Graham in \cite{Coffman_Graham} using transitive reduction. However, it seems plausible that if a Coffman-Graham-like algorithm is to be developed for the general machine scheduling problem, the simplification results presented in this paper for (\ref{eqn:prec_sys}) in its full generality would be needed.

\subsection{Organization}
The rest of the paper is organized as follows: Section~\ref{sec:graph_interpretation} defines the precedence graph associated with (\ref{eqn:prec_sys}) and lists some results that are useful in the sequel. Section~\ref{sec:graph_interpretation} also establishes the equivalence between the algebraic notion of index set of redundant relations with a precedence graph-based concept to be defined as redundant edge set. Then, Section~\ref{sec:maximum_redundant_edge_set} develops the complexity and decomposition results for the problem of finding the maximum redundant edge set. After that, Section~\ref{sec:equiv_reduction} is dedicated to the problem of equivalent reduction. It characterizes the set of all equivalent reductions of any precedence relation system, and establishes that any equivalent reduction can be found in polynomial-time. Section~\ref{sec:conclusion} concludes the paper.
%Section~\ref{sec:case_study} presents some numerical case studies to demonstrate the practical usefulness of the results in this paper.

\section{Graph interpretation of the main problems} \label{sec:graph_interpretation}
Section~\ref{subsec:precedence_graph} defines precedence graph and establishes some properties necessary for the subsequent analyses. Then, Section~\ref{subsec:reformulation} describes a reformulation of the maximum index set of redundant relations problem into an equivalent graph-based problem involving the precedence graph of (\ref{eqn:prec_sys}). The graph-based formulation will be studied in detail in Section~\ref{sec:maximum_redundant_edge_set}.

\subsection{Precedence graph description and supplementary results} \label{subsec:precedence_graph}
In this paper, we make extensive use of precedence graph to describe a precedence relation system. In fact, we use these two concepts interchangeably. A precedence graph $(V,E,c(E))$ is an edge weighted directed graph with node set, edge set and vector of edge weights being $V$, $E$ and $c(E)$ respectively. In addition, a precedence graph corresponds to a precedence relation system sharing the same triple $(V,E,c(E))$. The correspondence is as follows:
\begin{table}[h]
\begin{center}
\begin{tabular}{|c|c|c|}
\hline
& precedence graph & precedence relation system \\
\hline
$V$ & node set & index set of variables \\
\hline
$E$ & edge set & index set of inequalities \\
\hline
$c_{ij}$ for $(i,j) \in E$ & edge weight & $x_i - x_j \le c_{ij}$ \\
\hline
\end{tabular}
\end{center}
\end{table}

The following standard graph notions are defined in order to make the subsequent discussions more precise. A walk from $u \in V$ and $v \in V$ is defined as $(u = i_0, i_1, \ldots, i_m = v)$ where $(i_k,i_{k+1}) \in E$ for $k = 0,\ldots,m-1$ and the traversed nodes $i_0,\ldots,i_m$ are not necessarily distinct. A closed walk is a walk $(i_0,i_1,\ldots,i_m)$ with the additional requirement that $i_0 = i_m$. A (simple) path from $u$ to $v$ is a walk with the additional requirement that all traversed nodes are distinct. A cycle is a closed walk where $i_0 = i_m$ and all other traversed nodes are distinct. The weight of a walk (e.g., path, cycle) is the sum of the weights of all traversed edges, with the edge weight added as many times as an edge is traversed. This paper allows degenerate path $(u)$ for $u \in V$ (i.e., a single node), and the weight associated with $(u)$ is zero.

The following symbols will be used throughout the paper. For any two nodes $u$ and $v$, the symbol $p_{uv}$ is used to denote a path from $u$ to $v$. The weight of a path $p_{uv}$ is denoted $c_{p_{uv}}$. Similarly, the symbol $w_{uv}$ is used to denote a walk from $u$ to $v$ with the corresponding walk weight denoted $c_{w_{uv}}$. The symbol $u \leadsto v$ is used to substitute the phrase ``from $u$ to $v$''. The symbol $u \rightarrow v$ is used to denote an edge from $u$ to $v$.

Because of the associated precedence inequalities, a precedence graph has the following property:
\begin{lemma} \label{thm:no_negative_weight_cycle}
Let $G = (V,E,c(E))$ be a precedence graph. If the corresponding precedence inequality system has at least one solution, then the weights of all closed walks (e.g., cycles) in $G$ are nonnegative.
\end{lemma}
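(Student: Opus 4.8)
The plan is to prove the contrapositive in spirit: I will show that the existence of a closed walk with negative weight forces the precedence inequality system to be infeasible, which contradicts the hypothesis that a solution exists. So suppose, for the sake of contradiction, that there is a closed walk $w_{uu} = (i_0, i_1, \ldots, i_m)$ with $i_0 = i_m = u$ and negative weight $c_{w_{uu}} < 0$. The key observation is that each edge $(i_k, i_{k+1}) \in E$ carries the inequality $x_{i_k} - x_{i_{k+1}} \le c_{i_k i_{k+1}}$, and these inequalities must hold simultaneously for any feasible $x$.

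The main step is to sum these inequalities along the closed walk. First I would add the $m$ inequalities
\begin{equation} \label{eqn:sum_ineq}
x_{i_k} - x_{i_{k+1}} \le c_{i_k i_{k+1}}, \quad k = 0, 1, \ldots, m-1,
\end{equation}
which are all satisfied by the assumed solution $x$. On the left-hand side the sum telescopes: every intermediate variable $x_{i_k}$ for $1 \le k \le m-1$ appears once with a positive sign and once with a negative sign and thus cancels, leaving $x_{i_0} - x_{i_m}$. Since the walk is closed, $i_0 = i_m$, so the left-hand side is exactly zero. On the right-hand side the sum is precisely the weight of the closed walk, namely $c_{w_{uu}}$. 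Hence I obtain $0 \le c_{w_{uu}}$.

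This inequality directly contradicts the assumption $c_{w_{uu}} < 0$, completing the argument. I would then note that since every cycle is in particular a closed walk, the statement specialized to cycles follows immediately, which covers the parenthetical remark in the lemma. The conclusion holds for \emph{all} closed walks, not merely simple cycles, because the telescoping cancellation on the left-hand side does not require the traversed nodes to be distinct — repeated nodes simply contribute their variable with matching positive and negative occurrences at each visit, and everything still cancels down to $x_{i_0} - x_{i_m}$.

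I do not anticipate a serious obstacle here; the proof is essentially a one-line telescoping sum once the inequalities are written down. The only point requiring a modicum of care is to state the cancellation cleanly for a general closed walk with possibly repeated vertices, rather than appealing to the simpler cycle case, so as to justify the full generality claimed in the statement. Writing the summation index range carefully and observing that the multiset of signed variable occurrences cancels pairwise is all that is needed.
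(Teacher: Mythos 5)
Your proof is correct and follows essentially the same route as the paper: sum the edge inequalities along the closed walk, let the left-hand side telescope to $x_{i_0}-x_{i_m}=0$ using $i_0=i_m$, and conclude that the walk weight is nonnegative. Framing it as a contradiction rather than a direct derivation is immaterial, and your care about repeated vertices is a fair (if minor) refinement of the same argument.
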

\begin{proof}
Let $(i_0,i_1,\ldots,i_m)$ with $i_0 = i_m$ denote a closed walk in $G$. By the statement assumption, there exists $x \in \real^n$ satisfying the inequalities corresponding to the edges in the closed walk:
\begin{equation} \label{eqn:cycle_constraints}
\begin{array}{ccc}
x_{i_0}  - x_{i_1} & \le & c_{i_0 i_1} \\
x_{i_1}  - x_{i_2} & \le & c_{i_1 i_2} \\
& \vdots & \\
x_{i_{m-1}}  - x_{i_{m}} & \le & c_{i_{m-1} i_m}.
\end{array}
\end{equation}
Summing up all inequalities, with the fact that $i_0 = i_m$, leads to the desired inequality that $0 \le c_{i_0 i_1} + c_{i_1 i_2} + \ldots + c_{i_{m-1} i_m}$.
\end{proof}

Certain assumptions on the graphs considered in this paper are made:
\begin{itemize}
\item We call a precedence graph feasible if its corresponding precedence relation system has at least one solution. In this paper, all except one precedence graphs are assumed to be feasible (the only exception is in the last part of the proof of Theorem~\ref{thm:equiv_reduction}). This is due to the fact that all given precedence graphs can be assumed to be feasible, and there is no need to consider derived precedence graphs that are not feasible except in the only exception mentioned above. Hence, if there is no mentioning of the feasibility of a precedence graph, Lemma~\ref{thm:no_negative_weight_cycle} is assumed to be applicable to this graph.
\item It is assumed that in all graphs there is at most one (directed) edge from one node to another. In other words, no parallel edges are allowed. This assumption is obvious for precedence graphs: if for any pair $(i,j)$ multiple precedence relations hold with $x_i - x_j \le c_{ij}^k$ for $k = 1,2,\ldots,m$, then $x_i - x_j \le \min_k \{c_{ij}^k\}$ summarizes the same relations. Since all other graphs (e.g., those in statement assumptions) in fact represent precedence graphs, the no-parallel-edge assumption is imposed on these graphs as well.
\item It is assumed that there is no self-loop of the form $(i,i)$ in all graphs. Again, this assumption is obvious for precedence graphs: if the precedence graph is feasible then a self-loop means $x_i - x_i = 0 \le c_{ii}$. This inequality on $c_{ii}$ can be removed without affecting the rest of the precedence relation system. It will also be obvious that for the only exception precedence graph whose feasibility cannot be taken for granted, there is no need to include self-loop in it.
\end{itemize}

In summary, the following are the standing assumptions of this paper:

\begin{assumption} \label{asm:graph_assumptions} \hspace{0cm}
\begin{enumerate}[label=\ref{asm:graph_assumptions}.\alph*]
\item \label{asm:feasibility} With only one exception in the proof of Theorem~\ref{thm:equiv_reduction}, all precedence graphs are feasible.
\item \label{asm:no_negative_cycle} With only one exception in the proof of Theorem~\ref{thm:equiv_reduction}, no precedence graph contains any negative weight closed walks.
\item \label{asm:parallel_edge} There is no parallel edges between nodes in any graph.
\item \label{asm:self_loop} There is no self-loop in any graph.
\end{enumerate}
\end{assumption}
\begin{remark}
If a graph contains a negative weight cycle then it contains a negative weight closed walk. Conversely, if a graph contains a negative weight closed walk, then it must contain a negative weight cycle because a closed walk can be decomposed into a finite number of cycles, with the walk weight being the sum of the weights of the cycles (see Appendix~\ref{app:walk_decomp} for further details). Therefore, Assumption~\ref{asm:no_negative_cycle} is in fact equivalent to the assumption that the no precedence graph contains any negative weight cycles. In addition, the no-negative-weight-cycle and no-negative-weight-closed-walk assumptions will be used interchangeably in this paper for convenience.
\end{remark}

The following statement is important in the subsequent developments. For instance, the statement establishes the equivalence between the algebraic conditions in (\ref{eqn:redundant_sys_set}) and a graph theoretic condition in the corresponding precedence graph. A similar algebraic/graph equivalence can be established for the case of Definition~\ref{def:er_equiv}, with the aid of this same statement.

\begin{lemma} \label{thm:implication}
Let $G = (V,E,c(E))$ be a precedence graph, and its corresponding precedence inequality system be
\begin{equation} \label{eqn:Aij}
x_i - x_j \le c_{ij}, \quad (i,j) \in E.
\end{equation}
Let $u,v \in V$, $u \neq v$, $b_{uv} \in \real$ be given. Then the following two conditions are equivalent:
\begin{enumerate}[label=\ref{thm:implication}.\alph*]
\item\label{lem:implication} Whenever $x \in \real^n$ satisfies (\ref{eqn:Aij}), $x$ satisfies the inequality $x_u - x_v \le b_{uv}$.
\item\label{lem:rp} There exists a path $p_{uv}$, $u \leadsto v$ in $G$ such that the weight of $p_{uv}$, denoted $c_{p_{uv}}$, satisfies $c_{p_{uv}} \le b_{uv}$.
\end{enumerate}
\end{lemma}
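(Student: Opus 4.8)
The plan is to prove the two implications separately, with \ref{lem:rp}~$\Rightarrow$~\ref{lem:implication} being routine and the converse requiring an explicit construction.

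For \ref{lem:rp}~$\Rightarrow$~\ref{lem:implication}, I would take the path $p_{uv} = (u = i_0, i_1, \ldots, i_m = v)$ guaranteed by \ref{lem:rp} and sum the $m$ inequalities $x_{i_k} - x_{i_{k+1}} \le c_{i_k i_{k+1}}$ associated with its edges, exactly as in the proof of Lemma~\ref{thm:no_negative_weight_cycle}. The left-hand side telescopes to $x_u - x_v$ and the right-hand side is $c_{p_{uv}}$, so any $x$ satisfying (\ref{eqn:Aij}) satisfies $x_u - x_v \le c_{p_{uv}} \le b_{uv}$.

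For the converse \ref{lem:implication}~$\Rightarrow$~\ref{lem:rp}, I would argue the contrapositive: assuming \ref{lem:rp} is false, I exhibit a particular $\bar{x}$ satisfying (\ref{eqn:Aij}) with $\bar{x}_u - \bar{x}_v > b_{uv}$, which negates \ref{lem:implication}. The construction is based on single-source shortest path distances. Let $S \subseteq V$ be the set of nodes reachable from $u$ (including $u$), and for $j \in S$ let $\delta(j)$ be the minimum weight over all paths $u \leadsto j$. By Assumption~\ref{asm:no_negative_cycle} no closed walk has negative weight, so cycles can be deleted from any walk without increasing its weight; hence $\delta(j)$ is finite, well defined, and equals the infimum of walk weights. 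I would then set $\bar{x}_j = -\delta(j)$ for $j \in S$. To verify feasibility I rely on the shortest-path relaxation inequality $\delta(j) \le \delta(i) + c_{ij}$, valid for every edge $(i,j)$ with $i \in S$; rearranged, this is precisely $\bar{x}_i - \bar{x}_j = \delta(j) - \delta(i) \le c_{ij}$. The one subtlety is nodes outside $S$. Here the key structural observation is that no edge leaves $S$: if $i \in S$ and $(i,j) \in E$, then $j$ is reachable from $u$, so $j \in S$. Consequently I can define $\bar{x}$ on $V \setminus S$ independently, by taking any feasible solution $\hat{x}$ (which exists by Assumption~\ref{asm:feasibility}) and setting $\bar{x}_j = \hat{x}_j - K$ for a large constant $K$; edges within $V \setminus S$ remain satisfied because the common shift cancels, while edges from $V \setminus S$ into $S$ are satisfied once $K$ is large enough. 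Finally, since $\bar{x}_u = -\delta(u) = 0$, I compute $\bar{x}_u - \bar{x}_v = \delta(v)$ when $v \in S$ and $\bar{x}_u - \bar{x}_v = K - \hat{x}_v$ when $v \notin S$. The failure of \ref{lem:rp} means either $v \notin S$ (no path at all) or $\delta(v) > b_{uv}$ (every path is too heavy); in both cases $\bar{x}_u - \bar{x}_v > b_{uv}$ for the chosen large $K$, contradicting \ref{lem:implication}.

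I expect the main obstacle to be the rigorous handling of the converse direction: confirming that $\delta(\cdot)$ is well defined requires the no-negative-cycle hypothesis (Assumption~\ref{asm:no_negative_cycle}), and confirming that the piecewise construction is globally feasible hinges on the observation that $S$ has no outgoing edges, together with a sufficiently large shift $K$ for the unreachable part. Once these two points are secured, the computation of $\bar{x}_u - \bar{x}_v$ and the case split on whether $v$ is reachable from $u$ are immediate.
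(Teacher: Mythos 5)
Your proof is correct, but it follows a genuinely different route from the paper's. The paper proves the equivalence by linear programming duality: it writes condition \ref{lem:implication} as a bound on the optimal value of the LP $\max\, a_{uv}^T x$ subject to (\ref{eqn:Aij}), passes to the dual, invokes total unimodularity of the incidence matrix to replace the dual by a $0$--$1$ integer program, and then runs an explicit edge-peeling procedure to decompose an optimal $0$--$1$ dual solution into one $u\leadsto v$ walk plus closed walks, whose nonnegative weights (Assumption~\ref{asm:no_negative_cycle}) yield the desired path. You instead prove \ref{lem:rp}$\Rightarrow$\ref{lem:implication} by telescoping the edge inequalities along the path (which the paper obtains only implicitly, as the easy inclusion $J_B \le c_{p_{uv}}$), and you prove the converse contrapositively by exhibiting a feasible point: the potential $\bar{x}_j = -\delta(j)$ on the set $S$ of nodes reachable from $u$, glued to a shifted feasible solution on $V\setminus S$. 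The three points your construction hinges on all check out: $\delta$ is well defined and satisfies the relaxation inequality $\delta(j)\le\delta(i)+c_{ij}$ precisely because of Assumption~\ref{asm:no_negative_cycle}; no edge leaves $S$, so the two pieces interact only through edges entering $S$, which a single sufficiently large shift $K$ (finitely many edges) handles; and Assumption~\ref{asm:feasibility} supplies the $\hat{x}$ needed off $S$. Your argument is more elementary and self-contained, essentially the classical ``difference-constraint systems are certified by shortest-path potentials'' argument; the paper's duality route is heavier but exposes the polyhedral structure (the totally unimodular dual) that it reuses in spirit elsewhere. Either proof is acceptable.
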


\begin{proof}
Since $u \neq v$, we can define $a_{uv} \in \real^n$ such that
\begin{displaymath}
a_{uv}(k) = \left\{ \begin{array}{rl} 1 & \quad k = u \\ -1 & \quad k = v \\ 0 & \quad \text{otherwise} \end{array} \right.\ .
\end{displaymath}
With (standing) Assumption~\ref{asm:self_loop}, if $(i,j) \in E$ then $i \neq j$. Hence, for each $(i,j) \in E$ we can define the vector $a_{ij} \in \real^n$ such that
\begin{displaymath} %\label{eqn:incidence_matrix}
a_{ij}(k) = \left\{ \begin{array}{rl} 1 & \quad k = i \\ -1 & \quad k = j \\ 0 & \quad \text{otherwise} \end{array} \right.\
\end{displaymath}
That is, $a_{ij}$ defines the column of the incidence matrix of $G$ for edge $(i,j)$. With $a_{ij}$, the inequality $x_i - x_j \le c_{ij}$ can be written as ${a_{ij}}^T x \le c_{ij}$. Furthermore, let $J_P$ denote the optimal objective value of the following linear program
\begin{equation} \label{opt:J_P}
\begin{array}{ccl}
J_P := & \underset{x \in \real^n}{\max} & {a_{uv}}^T x \vspace{2mm} \\
& \text{subject to} & {a_{ij}}^T x \le c_{ij}, \quad \forall (i,j) \in E \vspace{2mm}.
\end{array}
\end{equation}
Then, condition~\ref{lem:implication} holds if and only if $J_P \le b_{uv}$. The linear programming dual of (\ref{opt:J_P}) and its optimal objective value can be written as
\begin{equation} \label{opt:J_D}
\begin{array}{ccl}
J_D := & \underset{y_{ij} \in \real}{\min} & \sum\limits_{(i,j) \in E} y_{ij} c_{ij} \vspace{2mm} \\
& \text{subject to} & \sum\limits_{(i,j) \in E} y_{ij} a_{ij} = a_{uv} \vspace{2mm} \\
& & y_{ij} \ge 0, \quad \forall (i,j) \in E.
\end{array}
\end{equation}
Due to Assumption~\ref{asm:feasibility}, (\ref{eqn:Aij}) has at least one solution. Hence, the feasible set of (\ref{opt:J_P}) is nonempty. Thus, by linear programming duality (e.g., \cite{BT97}) it holds that $J_D = J_P$ with the convention that $J_D = \infty$ whenever (\ref{opt:J_D}) is infeasible. Hence,
\begin{equation} \label{eqn:impli_eq_cond}
\text{condition~\ref{lem:implication} holds} \iff J_P \le b_{uv} \iff J_D \le b_{uv}.
\end{equation}
Note that $a_{uv}$ contains only $-1$, 1 or 0. In addition, $a_{ij}$ for $(i,j) \in E$ in the constraint of (\ref{opt:J_D}) are columns of an incidence matrix (of $G$), which is totally unimodular (e.g., \cite{Schrijver_CCO}). Thus, a standard combinatorial optimization argument (e.g., \cite{Schrijver_CCO}) implies that the relations in (\ref{eqn:impli_eq_cond}) can be extended to
\begin{equation} \label{eqn:impli_eq_cond1}
\text{condition~\ref{lem:implication} holds} \iff J_P \le b_{uv} \iff J_D \le b_{uv} \iff J_B \le b_{uv},
\end{equation}
where $J_B$ is the optimal objective value of the following $0-1$ binary linear integer problem
\begin{equation} \label{opt:J_B}
\begin{array}{ccl}
J_B := & \underset{y_{ij} \in \real}{\min} & \sum\limits_{(i,j) \in E} y_{ij} c_{ij} \vspace{2mm} \\
& \text{subject to} & \sum\limits_{(i,j) \in E} y_{ij} a_{ij} = a_{uv} \vspace{2mm} \\
& & y_{ij} \in \{0,1\}, \quad \forall (i,j) \in E.
\end{array}
\end{equation}
That is, (\ref{opt:J_B}) is almost the same as (\ref{opt:J_D}) except that the decision variables in (\ref{opt:J_B}) are restricted to 0 or 1. Next, we establish that $J_B \le b_{uv}$ is equivalent to condition~\ref{lem:rp}. That is,
\begin{equation} \label{eqn:J_B_puv}
J_B \le b_{uv} \iff \text{$\exists$ path $p_{uv} : u \leadsto v$ in $G$ such that $c_{p_{uv}} \le b_{uv}$}.
\end{equation}
One side of the implication is easy to establish: if there is a path $p_{uv} : u \leadsto v$ in $G$ such that $c_{p_{uv}} \le b_{uv}$, then by assigning $y_{ij} = 1$ if and only if $(i,j)$ is part of the path we obtain a feasible solution of (\ref{opt:J_B}) with objective value being $c_{p_{uv}} \le b_{uv}$. Hence, $J_B \le b_{uv}$. Conversely, if $J_B \le b_{uv}$ then the following procedure can be employed to retrieve a walk $p_{uv} : u \leadsto v$ in $G$ such that $c_{p_{uv}} \le b_{uv}$:\begin{itemize}
\item Initialize $W \leftarrow \emptyset$, and $Y \leftarrow \{(i,j) \in E \mid y_{ij}^\star = 1 \}$ with $y^\star$ being an optimal solution to (\ref{opt:J_B})
\item While $Y \neq \emptyset$, do
\begin{enumerate}
\item If $(u,s) \in Y$ for some $s$ then set $(i,j) \leftarrow (u,s)$, otherwise set $(i,j)$ as an arbitrary edge in $Y$
\item $Y \leftarrow Y \setminus \{(i,j)\}$
\item $w \leftarrow (i,j)$
\item While $(j,t) \in Y$ for some $t$, do
\begin{enumerate}
\item $Y \leftarrow Y \setminus \{(j,t)\}$
\item $w \leftarrow w+t$ (meaning that $w$ is appended in the end by $t$)
\item $j \leftarrow t$
\end{enumerate}
\item End (of second while)
\item $W \leftarrow W \cup \{w\}$
\end{enumerate}
\item End (of first while)
\end{itemize}
To analyze the procedure the following nonnegative degree counters are needed: $d^+(i)$ is the out-degree at node $i$ in the graph $(V,Y)$, where $Y$ is being updated in the procedure. Similarly, $d^-(i)$ is the in-degree at node $i$. Initially, the degree counters satisfy
\begin{equation} \label{eqn:deg_initial}
d^+(u) = d^-(u) + 1, \quad d^+(v) = d^-(v) - 1, \quad d^+(i) = d^-(i), \; i \neq u,v,
\end{equation}
because for each $k \in \{1,2,\ldots,n\}$ the constraint in (\ref{opt:J_B}) can be written as
\begin{displaymath}
\sum\limits_{(i,j) \in E} \!\!\!\! y_{ij}^\star a_{ij}(k) = \!\!\!\! \sum\limits_{j \mid (k,j) \in E} \!\!\!\! y_{kj}^\star - \!\!\!\! \sum\limits_{j \mid (j,k) \in E} \!\!\!\! y_{jk}^\star = a_{uv}(k) = \left\{ \begin{array}{rl} 1 & \quad k = u \\ -1 & \quad k = v \\ 0 & \quad k \neq u, v \end{array} \right.\ .
\end{displaymath}

Now we first analyze some general properties of the procedure. Notice that the procedure should terminates in finite number of passes of the first while-loop because within each pass at least one edge is removed from $Y$ and $Y$ is finite. Also, with a similar argument we can establish that the second while-loop terminates for each pass of the first while-loop. At the end of a pass of the first while-loop, a walk $w = (i = i_0, i_1, \ldots, i_m)$ is retrieved. There can be two possibilities with $i_m$: either (I) $i_m = i_0$ (i.e., $w$ is a closed walk) or (II) $i_m \neq i_0$. Let $d_{\text{fore}}^+(i)$ and $d_{\text{fore}}^-(i)$ denote, respectively, the out-degree and in-degree at node $i$ before the pass of the first while-loop. Similarly, let $d_{\text{aft}}^+(i)$ and $d_{\text{aft}}^-(i)$ denote the analogous degrees after the pass of the first while-loop. Then, for case (I) with $i_0 = i_m$ the degree adjustments (after a pass of the first while-loop) are as follows:
\begin{equation} \label{eqn:deg_adj_cw}
\begin{array}{ccl}
d_{\text{aft}}^+(i_q) & = & d_{\text{fore}}^+(i_q) - r_q, \quad \forall \; q \in \{0,1,\ldots,m\},\vspace{2mm} \\
d_{\text{aft}}^-(i_q) & = & d_{\text{fore}}^-(i_q) - r_q, \quad \forall \; q \in \{0,1,\ldots,m\}, \vspace{2mm} \\
d_{\text{aft}}^+(i_m) & = & 0.
\end{array}
\end{equation}
In (\ref{eqn:deg_adj_cw}), $r_q$ are some nonnegative numbers satisfying $d_{\text{aft}}^+(i_q) \ge 0$ and $d_{\text{aft}}^-(i_q) \ge 0$. In other words, the difference between the out-degree and in-degree for each node remains the same in case (I) (when $w$ is a closed walk). For case (II) with $i_0 \neq i_m$ the degree adjustments are as follows:
\begin{equation} \label{eqn:deg_adj_w}
\begin{array}{rcl}
d_{\text{aft}}^+(i_0) & = & d_{\text{fore}}^+(i_0) - r_0, \vspace{2mm} \\
d_{\text{aft}}^-(i_0) & = & d_{\text{fore}}^-(i_0) - r_0 + 1, \vspace{2mm} \\
d_{\text{aft}}^+(i_q) & = & d_{\text{fore}}^+(i_q) - r_q, \quad \forall \; q \in \{1,2,\ldots,m-1\},\vspace{2mm} \\
d_{\text{aft}}^-(i_q) & = & d_{\text{fore}}^-(i_q) - r_q, \quad \forall \; q \in \{1,2,\ldots,m-1\}, \vspace{2mm} \\
d_{\text{aft}}^+(i_m) & = & d_{\text{fore}}^+(i_m) - r_m + 1, \vspace{2mm} \\
d_{\text{aft}}^-(i_m) & = & d_{\text{fore}}^-(i_m) - r_m, \vspace{2mm} \\
d_{\text{aft}}^+(i_m) & = & 0.
\end{array}
\end{equation}
From (\ref{eqn:deg_adj_w}), the in-degree counter of the end node $i_m$ must satisfy
\begin{equation} \label{eqn:deg_adj_w_im}
d_{\text{aft}}^-(i_m) = d_{\text{fore}}^-(i_m) - d_{\text{fore}}^+(i_m) - 1.
\end{equation}

Now we analyze the first pass of the first while-loop. With $J_B \le b_{uv}$ assumed (hence feasibility), the constraint in (\ref{opt:J_B}) requires that $(u,s) \in Y$ for some $s$. Hence, $i_0 = u$ for the walk $w$ in the first pass. In addition, $w$ cannot be a closed walk because (\ref{eqn:deg_initial}) (with degree counters interpreted as those before the pass) and (\ref{eqn:deg_adj_cw}) together imply that $d_{\text{aft}}^-(u) = d_{\text{fore}}^-(u) - d_{\text{fore}}^+(u) = -1$ which is impossible. Hence, case (II) must hold. Then, (\ref{eqn:deg_initial}) (with degree counters interpreted as those before the pass) and (\ref{eqn:deg_adj_w_im}) together imply that 
\begin{equation} \label{eqn:im_v}
i_m = v, \quad d_{\text{aft}}^-(v) = 0,
\end{equation}
since all other choices of $i_m$ would result in negative in-degree after the pass. Therefore, $w$ is in fact a walk from $u$ to $v$, which is denoted as $w_{uv}$ and stored in $W$ in step 6. Furthermore, (\ref{eqn:deg_adj_w}) and (\ref{eqn:im_v}) together suggest that  after the first pass of the first while-loop the degree counters (in particular those of $u$) are updated to
\begin{equation} \label{eqn:degree_after1}
d^+(g) = d^-(g), \quad \forall \; g \in V, \quad d^+(v) = d^-(v) = 0.
\end{equation}
In summary, $v$ is no longer incident to any edge in $Y$ and the out-degree and in-degree for each node become equal (including the case of $u$), when the first pass of the first while-loop is finished. 

Next, we analyze the subsequent passes of the first while-loop. (\ref{eqn:degree_after1}) (with degree counters interpreted as those before the second pass) and (\ref{eqn:deg_adj_w_im}) implies that if the walk in the second pass is not a closed walk then $v$ must be the end node of this walk. Therefore, the walk from the second pass must be a close walk denoted $w_2$ because (\ref{eqn:degree_after1}) specifies that $v$ is no longer incident to any edge in $Y$ at this stage. In addition, by (\ref{eqn:deg_adj_cw}) the in/out degree difference for each node remains zero after the second pass. This pattern continues for all subsequent passes, generating closed walks $w_3, \ldots, w_N$ until $Y$ becomes an empty set. Upon termination of the first while-loop, $W$ contains $w_{uv}$ (i.e., the walk from $u$ to $v$) and closed walks $w_2, w_3, \ldots, w_N$. In addition, it can be seen that
\begin{displaymath}
c_{w_{uv}} + c_{w_2} + c_{w_3} + \ldots + c_{w_N} = J_B \le b_{uv}.
\end{displaymath}
This suggests that $c_{w_{uv}} \le b_{uv}$ since Assumption~\ref{asm:no_negative_cycle} states that $c_{w_q} \ge 0$ for $q = 2,3,\ldots,N$. The walk $w_{uv}$ may not be a path from $u$ to $v$. However, as explained in Appendix~\ref{app:walk_decomp} $w_{uv}$ can be further decomposed into a path $p_{uv} : u \leadsto v$ and a finite number of cycles, which have nonnegative weights because of Assumption~\ref{asm:no_negative_cycle} (this argument will in fact be formalized in Lemma~\ref{thm:walk2path} to be described). Hence, $p_{uv} \le w_{uv} \le b_{uv}$. In conclusion, when $J_B \le b_{uv}$, we can indeed obtain a path $p_{uv} : u \leadsto v$ with weight $p_{uv} \le b_{uv}$. This establishes (\ref{eqn:J_B_puv}). Finally, combining (\ref{eqn:impli_eq_cond1}) and (\ref{eqn:J_B_puv}) yields the desired statement that
\begin{displaymath}
\text{condition~\ref{lem:implication} holds} \iff J_B \le b_{uv} \iff \text{condition~\ref{lem:rp} holds}.
\end{displaymath}
\end{proof}

\subsection{Graph representation of maximum index set of redundant relations problem} \label{subsec:reformulation}
For the analysis in the sequel, the notion of index set of redundant relations defined in (\ref{eqn:redundant_sys_set}) will be reinterpreted as an equivalent and graph based concept of \emph{redundant edge set} to be defined in Definition~\ref{def:redundant_edge_set}. Consequently, the problem of finding the maximum index set of redundant relations in (\ref{eqn:prec_sys}) can be reformulated as the problem of finding the \emph{maximum redundant edge set} in the corresponding precedence graph. A consequence of the reformulation, as will be clear, is that the algebraic problem of finding the maximum index set of redundant relations can be solved using readily available graph-based algorithms (e.g., Floyd-Warshall shortest path algorithm).
\begin{definition}[Redundant edge set] \label{def:redundant_edge_set}
For an edge weighted directed graph $G = (V,E,c(E))$, an edge subset $R \subseteq E$ is called a redundant edge set of $G$ if either $R = \emptyset$ or when $R \neq \emptyset$, it holds that
\begin{equation} \label{eqn:redundant_edge_set}
\begin{array}{l}
\textrm{for every $(u,v) \in R$ with weight $c_{uv}$, there exists a path $p_{uv}$} \vspace{1.5mm} \\
 \textrm{in $(V, E \setminus R, c(E \setminus R))$ from $u$ to $v$ such that $c_{p_{uv}} \le c_{uv}$}.
\end{array}
\end{equation}
\end{definition}

\begin{remark}
For an edge weighted directed graph without negative weight cycles (e.g., a precedence graph), the definition of redundant edge set can be relaxed. Specifically, the existence of a path $p_{uv}$ in condition (\ref{eqn:redundant_edge_set}) can be replaced with the existence of a walk $w_{uv}$ from $u$ to $v$ in $(V, E \setminus R, c(E \setminus R))$ such that $c_{w_{uv}} \le c_{uv}$. The relaxation is justified by the following statement.
\end{remark}

\begin{lemma} \label{thm:walk2path}
Let $G = (V,E,c(E))$ be an edge weighted directed graph without negative weight cycles (e.g., a precedence graph). For any two nodes $u, v \in V$ with $u \neq v$. If $w_{uv}: (u = i_0, i_1, \ldots, i_m = v)$ is a walk such that $c_{w_{uv}} \le c_{uv}$. Then there exists a path $p_{uv} : (i_{k(1)}, i_{k(2)}, \ldots, i_{k(q)})$ where $k(1), k(2), \ldots, k(q)$ is a subsequent of $0,1,\ldots,m$ such that $k(1) = 0$ and $k(q) = m$. In addition, the weight of $p_{uv}$, denoted $c_{p_{uv}}$, satisfies $c_{p_{uv}} \le c_{w_{uv}} \le c_{uv}$.
\end{lemma}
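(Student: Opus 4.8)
The plan is to reduce the walk to a path by repeatedly excising closed subwalks, using the no-negative-cycle hypothesis to guarantee that each excision does not increase the weight. Formally I would argue by induction on the length $m$ (the number of edges) of the walk $w_{uv} = (u = i_0, i_1, \ldots, i_m = v)$, maintaining the invariant that at every stage we have a walk from $u$ to $v$ whose node sequence is a subsequence $(i_{k(1)}, \ldots, i_{k(q)})$ of the original with $k(1) = 0$ and $k(q) = m$, and whose weight is at most $c_{w_{uv}}$. For the base case, if the nodes $i_0, \ldots, i_m$ are pairwise distinct then $w_{uv}$ is already a simple path and the claim holds with $c_{p_{uv}} = c_{w_{uv}} \le c_{uv}$.

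For the inductive step, suppose some node repeats, i.e. there exist indices $a < b$ with $i_a = i_b$. Then the segment $(i_a, i_{a+1}, \ldots, i_b)$ is a closed walk, so by Assumption~\ref{asm:no_negative_cycle} --- or equivalently, by the fact (see the remark following Assumption~\ref{asm:graph_assumptions} and Appendix~\ref{app:walk_decomp}) that the absence of negative cycles forces every closed walk to have nonnegative weight --- its weight $c_{i_a i_{a+1}} + \cdots + c_{i_{b-1} i_b}$ is nonnegative. Deleting this segment (retaining a single copy of the repeated node) produces a strictly shorter walk $w'_{uv}$ from $u$ to $v$ whose node sequence is still a subsequence of the original and whose weight satisfies $c_{w'_{uv}} = c_{w_{uv}} - (\text{closed-walk weight}) \le c_{w_{uv}}$. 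Since $w'_{uv}$ has fewer edges and inherits $c_{w'_{uv}} \le c_{uv}$, the induction hypothesis applied to it yields the desired path $p_{uv}$ with $c_{p_{uv}} \le c_{w'_{uv}} \le c_{w_{uv}} \le c_{uv}$.

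The one point requiring a little care is preserving the endpoint indices $k(1) = 0$ and $k(q) = m$ across excisions, and this is exactly where the hypothesis $u \neq v$ enters. When excising the closed subwalk between positions $a$ and $b$ we may keep either the copy at position $a$ or the copy at position $b$. If $b < m$ I keep the earlier copy (position $a$), which removes only indices in $\{a+1, \ldots, b\}$, all of which are $\ge 1$, and hence never deletes index $0$; if instead $b = m$, then $i_a = i_b = v \neq u = i_0$ forces $a \ge 1$, so keeping the later copy (position $b = m$) removes indices in $\{a, \ldots, m-1\}$, all $\ge 1$, and thus retains index $0$ while now also retaining index $m$. Either way both endpoint indices survive, so the invariant is maintained.

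I do not expect a serious obstacle here: termination is immediate because each excision strictly decreases the number of edges, and the substantive ingredient --- that removing a repeated-node loop cannot increase the walk weight --- is handed to us directly by the no-negative-cycle assumption (the same mechanism underlying Lemma~\ref{thm:no_negative_weight_cycle}). The only mildly technical bookkeeping is the endpoint-preservation argument above, which the assumption $u \neq v$ disposes of cleanly.
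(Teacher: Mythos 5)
Your proposal is correct and follows essentially the same route as the paper: the paper's proof simply invokes the walk-decomposition of Appendix~\ref{app:walk_decomp} (repeated excision of closed subwalks, each of nonnegative weight since closed walks decompose into cycles), and your induction-on-length argument is a self-contained rendering of exactly that mechanism, with the endpoint-preservation bookkeeping made explicit.
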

\begin{proof}
According to Appendix \ref{app:walk_decomp}, the walk $w_{uv}$ can be decomposed into a path $p_{uv}$ of the form described in the statement and a finite number of cycles. In addition, the weight of the walk $w_{uv}$ is the sum of the weight of the path $p_{uv}$ and those of the cycles (if any). Since the weight of all cycles are nonnegative as assumed in the statement, it holds that $c_{p_{uv}} \le c_{w_{uv}}$.
\end{proof}

%\begin{remark}
%By stating that $R$ is a redundant edge set, without any reference to the graph, we mean that the underlying graph is $G$ the precedence graph of the original system~(\ref{eqn:prec_sys}). If the underlying graph $\tilde{G}$ is different from the original precedence graph $G$, then we will explicitly mention $R$ is a redundant edge set in $\tilde{G}$.
%\end{remark}

Now we establish the equivalence between the notions of index set of redundant relations and redundant edge set.
\begin{lemma} \label{thm:redundant_edge_set}
Let $G = (V,E,c(E))$ denote a precedence relation system and its precedence graph. Then $R \subseteq E$ is an index set of redundant relations of $G$ (Definition~\ref{def:redundant_relations}) if and only if it is a redundant edge set of $G$ (Definition~\ref{def:redundant_edge_set}).
\end{lemma}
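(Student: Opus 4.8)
The plan is to show that both definitions reduce, edge by edge, to the same logical-implication condition, which Lemma~\ref{thm:implication} then converts into the path condition. I would first dispose of the trivial case $R = \emptyset$: Definition~\ref{def:redundant_relations} holds because the two systems coincide, and Definition~\ref{def:redundant_edge_set} holds by its explicit first clause. So assume $R \neq \emptyset$.

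First I would observe that the biconditional in (\ref{eqn:redundant_sys_set}) has one direction for free: since $E \setminus R \subseteq E$, any $x$ satisfying $(V,E,c(E))$ automatically satisfies the smaller system $(V,E\setminus R,c(E\setminus R))$, because dropping inequalities can only enlarge the solution set. Hence $R$ is an index set of redundant relations if and only if the reverse inclusion holds, i.e., every $x$ satisfying the reduced system also satisfies the full system. Because the reduced system already enforces all inequalities indexed by $E \setminus R$, this reverse inclusion is equivalent to the statement that, for each removed edge $(u,v) \in R$, the inequality $x_u - x_v \le c_{uv}$ is implied by $(V,E\setminus R,c(E\setminus R))$. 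Thus Definition~\ref{def:redundant_relations} is equivalent to: for every $(u,v) \in R$, condition~\ref{lem:implication} holds with the ambient graph taken to be $G' := (V,E\setminus R,c(E\setminus R))$ and $b_{uv} := c_{uv}$.

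Next I would invoke Lemma~\ref{thm:implication} on the graph $G'$. To do so I must check its hypotheses, in particular Assumption~\ref{asm:feasibility}: since $G'$ carries a subset of the constraints of the feasible graph $G$, any solution of $G$ is a solution of $G'$, so $G'$ is itself feasible and, by Lemma~\ref{thm:no_negative_weight_cycle}, contains no negative weight cycle. Each removed edge $(u,v) \in R$ satisfies $u \neq v$ by Assumption~\ref{asm:self_loop}, so the lemma applies with $b_{uv} = c_{uv}$ and tells us that condition~\ref{lem:implication} (for $G'$) holds if and only if condition~\ref{lem:rp} holds, namely that there exists a path $p_{uv}$ from $u$ to $v$ in $G'$ with $c_{p_{uv}} \le c_{uv}$. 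Quantifying over all $(u,v) \in R$, this is exactly the defining condition (\ref{eqn:redundant_edge_set}) of a redundant edge set. Chaining the two equivalences yields the claim.

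I do not expect a serious obstacle; the argument is essentially a bookkeeping reduction plus a single application of Lemma~\ref{thm:implication}. The one point requiring care is the choice of the ambient graph: Lemma~\ref{thm:implication} must be applied to the reduced graph $G'$ rather than to $G$, so that the path witnessing the implication lives in $(V,E\setminus R)$, matching Definition~\ref{def:redundant_edge_set}; and one must verify that $G'$ inherits feasibility from $G$ before the lemma can be used. A minor secondary point is to make explicit that removing \emph{all} of $R$ at once (not merely the single edge $(u,v)$) is what makes the two ``reduced'' systems agree in both definitions.
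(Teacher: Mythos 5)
Your proposal is correct and follows essentially the same route as the paper: reduce the biconditional in Definition~\ref{def:redundant_relations} to the per-edge implication condition, then apply Lemma~\ref{thm:implication} to the reduced graph $(V,E\setminus R,c(E\setminus R))$ with $b_{uv}=c_{uv}$ to obtain exactly condition (\ref{eqn:redundant_edge_set}). Your explicit verification that the reduced graph inherits feasibility is a point the paper handles more tersely (by noting the subgraph still satisfies Assumption~\ref{asm:graph_assumptions}), but the argument is the same.
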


\begin{proof}
The statement holds trivially if $R = \emptyset$. Therefore, the rest of the proof assumes that $R \neq \emptyset$. By Definition~\ref{def:redundant_relations}, $R$ is an index set of redundant relations if and only if
\begin{equation} \label{eqn:pf_red_edge_set_imp}
\begin{array}{cl}
& \text{$x \in \real^n$ satisfies $x_u - x_v \le c_{uv}, \quad \forall (u,v) \in (E \setminus R)$} \vspace{2mm} \\ \implies & x_u - x_v \le c_{uv}, \quad \forall (u,v) \in R.
\end{array}
\end{equation}
$(V, E \setminus R, c(E \setminus R))$, being a subgraph of $G$, is a precedence graph (with Assumption~\ref{asm:graph_assumptions} satisfied). By applying Lemma~\ref{thm:implication} with $(V, E \setminus R, c(E \setminus R))$ for $b_{uv} = c_{uv}$ for each $(u,v) \in R$, the condition in (\ref{eqn:pf_red_edge_set_imp}) is equivalent to
\begin{displaymath}
\begin{array}{l}
\text{for each $(u,v) \in R$ there exists a path $p_{uv} : u \leadsto v$} \vspace{2mm} \\
\text{in $(V, E \setminus R, c(E \setminus R))$ such that $c_{p_{uv}} \le c_{uv}$}.
\end{array}
\end{displaymath}
This is the same as the condition that $R$ is a redundant edge set, according to Definition~\ref{def:redundant_edge_set}.

\end{proof}

In view of Lemma~\ref{thm:redundant_edge_set}, the problem of finding the maximum index set of redundant relations in (\ref{eqn:prec_sys}) can be posed as the problem of finding the \emph{maximum (cardinality) redundant edge set} in the precedence graph associated with (\ref{eqn:prec_sys}), with the definition of redundant edge set given in (\ref{eqn:redundant_edge_set}).

\section{Maximum redundant edge set problem} \label{sec:maximum_redundant_edge_set}
This section presents the results on the maximum redundant edge set problem. In Section~\ref{subsec:complexity} we will establish that the maximum redundant edge set problem is polynomial-time solvable if the precedence graph does not have any zero-weight cycle. On the other hand, we will show that in general the problem is NP-hard. Section~\ref{subsec:decomposition} specifies that the maximum redundant edge set problem can be decomposed into several subproblems: one subproblem is polynomial-time solvable and the other subproblems are NP-hard. Section~\ref{subsec:computation} discusses some computation issues pertaining the decomposition result in Section~\ref{subsec:decomposition}.

\subsection{Complexity of maximum redundant edge set problem} \label{subsec:complexity}
This subsection answers the following basic and complexity related questions regarding the maximum redundant edge set problem.

\begin{enumerate}
\item When is there a nonempty redundant edge set?
\item If there exists at least one redundant edge set, when is the maximum redundant edge set unique and how to compute it?
\item When is it computationally intractable to solve the maximum redundant edge set problem?
\end{enumerate}
The answer to the first question lies in the following concept:
\begin{definition}[Redundant edge] \label{def:redundant_edge}
For an edge weighted directed graph $G = (V,E,c(E))$, an edge $(i,j) \in E$ is called a redundant edge of $G$ if the singleton $\{(i,j)\}$ is a redundant edge set of $G$ (defined in (\ref{eqn:redundant_edge_set})).
\end{definition}
\begin{remark}
In general, an arbitrary set of redundant edges need not be a redundant edge set defined in (\ref{eqn:redundant_edge_set}) in Definition~\ref{def:redundant_edge}, because the ``replacement path'' of one redundant edge might contain another redundant edge.
\end{remark}
\begin{lemma} \label{thm:nonempty_redundant_edge_set}
Let $G = (V,E,c(E))$ be an edge weighted directed graph. Then, there exists a nonempty redundant edge set in $G$ if and only if there exists a redundant edge in $G$.
\end{lemma}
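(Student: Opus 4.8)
The plan is to prove the two implications separately. The forward ("if") direction is essentially a restatement of definitions, while the reverse ("only if") direction rests on a single monotonicity observation: a replacement path that survives deletion of an entire edge set $R$ a fortiori survives deletion of just one edge of $R$.

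For the "if" direction, I would argue as follows. Suppose $G$ contains a redundant edge $(i,j)$. By Definition~\ref{def:redundant_edge}, this means precisely that the singleton $\{(i,j)\}$ is a redundant edge set of $G$ in the sense of (\ref{eqn:redundant_edge_set}). Since $\{(i,j)\} \neq \emptyset$, this singleton is itself a nonempty redundant edge set, so such a set exists. Nothing further is needed here.

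For the "only if" direction, suppose $G$ admits a nonempty redundant edge set $R$. I would pick an arbitrary edge $(u,v) \in R$, which is possible because $R \neq \emptyset$, and show that $(u,v)$ is a redundant edge. By the defining condition (\ref{eqn:redundant_edge_set}) applied to $(u,v) \in R$, there is a path $p_{uv}$ in $(V, E \setminus R, c(E \setminus R))$ with $c_{p_{uv}} \le c_{uv}$. The key step is the set inclusion $E \setminus R \subseteq E \setminus \{(u,v)\}$, which holds because $(u,v) \in R$. Consequently every edge of $p_{uv}$, lying in $E \setminus R$, also lies in $E \setminus \{(u,v)\}$, so $p_{uv}$ is equally a path in $(V, E \setminus \{(u,v)\}, c(E \setminus \{(u,v)\}))$ with the same weight $c_{p_{uv}} \le c_{uv}$. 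This verifies condition (\ref{eqn:redundant_edge_set}) for the singleton $\{(u,v)\}$, so $\{(u,v)\}$ is a redundant edge set and hence $(u,v)$ is a redundant edge of $G$ by Definition~\ref{def:redundant_edge}.

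There is no deep obstacle in this argument; the only point requiring care is the direction of the inclusion $E \setminus R \subseteq E \setminus \{(u,v)\}$ and the accompanying observation that $p_{uv}$ genuinely avoids the edge $(u,v)$ itself (it does, since $p_{uv}$ uses only edges of $E \setminus R$ while $(u,v) \in R$). Intuitively, enlarging the set of permitted edges from $E \setminus R$ to $E \setminus \{(u,v)\}$ can only make the required replacement path easier to find, so individual redundancy of $(u,v)$ follows from collective redundancy of $R$. Combining the two directions yields the claimed equivalence.
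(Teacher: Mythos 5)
Your proposal is correct and follows essentially the same route as the paper: the sufficiency direction is the immediate observation that a redundant edge gives a singleton redundant edge set, and the necessity direction rests on the same monotonicity fact (a replacement path in $E \setminus R$ survives in the larger edge set $E \setminus \{(u,v)\}$), which the paper states in the equivalent form that any subset of a redundant edge set is again a redundant edge set and then applies contrapositively.
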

\begin{proof}
The sufficiency part is due to the fact that a redundant edge leads to a single-member redundant edge set. For the necessity part, note that by definition in (\ref{eqn:redundant_edge_set}) any subset of a redundant edge set is also a redundant edge set. Hence, nonexistence of redundant edge implies nonexistence of single-member and in general any redundant edge sets.
\end{proof}

The answer to the second question is provided by the following statement (Lemma~\ref{thm:union}) and its corollary (Theorem~\ref{thm:max_redundant_edge_set}), whose preliminary version appeared in \cite{SSJ_DP2014_part1}:
\begin{lemma} \label{thm:union}
Let $G = (V,E,c(E))$ be an edge weighted directed graph. Assume that all cycles in $G$ have positive weights. Then, if $R$ and $R^\prime$ are redundant edge sets of $G$ satisfying (\ref{eqn:redundant_edge_set}), $R \cup R^\prime$ is also a redundant edge set of $G$.
\end{lemma}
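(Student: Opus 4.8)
The plan is to recast the claim in shortest-path terms and argue by contradiction, isolating the single place where the strict positivity of cycle weights (as opposed to the mere absence of negative cycles) is indispensable. Write $\hat E := E \setminus (R \cup R')$, and for a subgraph $H$ of $G$ let $d_H(s,t)$ denote the minimum weight of a walk from $s$ to $t$ in $H$; this is finite and equal to the shortest \emph{path} weight because Assumption~\ref{asm:no_negative_cycle} forbids negative cycles. I want to show that for every $(u,v) \in R \cup R'$ there is a walk in $\hat E$ from $u$ to $v$ of weight at most $c_{uv}$, since Lemma~\ref{thm:walk2path} then upgrades such a walk into a path, which is exactly condition~(\ref{eqn:redundant_edge_set}) for $R \cup R'$. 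Call an edge $(u,v) \in R \cup R'$ a \emph{witness} if no such walk exists, i.e. $d_{\hat E}(u,v) > c_{uv}$, and suppose toward a contradiction that at least one witness exists.

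First I would record a distance-preservation fact for a \emph{single} redundant edge set $S \in \{R, R'\}$: namely $d_{E \setminus S}(s,t) = d_G(s,t)$ for all $s,t$. The inequality ``$\ge$'' is immediate since $E \setminus S$ is a subgraph; for ``$\le$'', I take any $G$-path from $s$ to $t$ and substitute every edge of $S$ appearing on it by the replacement path guaranteed by Definition~\ref{def:redundant_edge_set}, which lies in $E \setminus S$ and has no larger weight. Because the edges of $S$ are replaced by $S$-free paths, \emph{no ping-pong occurs when only one set is removed}, and the outcome is a walk in $E \setminus S$ of weight at most the original; reducing it to a path via Lemma~\ref{thm:walk2path} yields the bound. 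A useful consequence is that for a witness $(u,v) \in S$, the shortest replacement path $P \subseteq E \setminus S$ from $u$ to $v$ is in fact a \emph{$G$-shortest} path, of weight $d_G(u,v) \le c_{uv}$.

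Next I would set up a descent on the set of witnesses. Given a witness $e = (u,v) \in S$ and its $G$-shortest replacement path $P \subseteq E \setminus S$, I claim $P$ must carry another witness: if every edge of $P$ lying in $R \cup R'$ were a non-witness, then each of them could be replaced by an $\hat E$-walk of no larger weight, assembling a walk from $u$ to $v$ entirely inside $\hat E$ of weight at most that of $P$ (the remaining edges of $P$ already lie in $\hat E$), contradicting that $e$ is a witness. The resulting successor witness $e'$ lies in $(R \cup R') \setminus S$, so $e' \neq e$; and since $P$ is $G$-shortest, each of its sub-edges is tight, giving $c_{e'} = d_G(a,b)$ for the endpoints $(a,b)$ of $e'$. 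As the witnesses form a finite set in which every element has a successor, iterating the successor map must close up into a directed cycle $e_1 \to e_2 \to \cdots \to e_k \to e_1$ of witnesses, each now tight.

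The main obstacle — and the only step that genuinely needs strict positivity — is ruling out this cycle. Writing $e_i = (u_i,v_i)$ and reading indices cyclically, I use that each $e_i$ is tight ($c_{e_i} = d_G(u_i,v_i)$) and that $e_{i+1}$ is a sub-edge of the $G$-shortest path realizing $e_i$, so the identity $c_{e_i} = d_G(u_i,u_{i+1}) + c_{e_{i+1}} + d_G(v_{i+1},v_i)$ holds. Telescoping around the cycle cancels the $c_{e_i}$ terms and leaves $\sum_i d_G(u_i,u_{i+1}) + \sum_i d_G(v_{i+1},v_i) = 0$. Each sum is the weight of a closed walk (concatenating shortest paths through the $u_i$ in forward cyclic order and through the $v_i$ in reverse cyclic order), hence nonnegative by Assumption~\ref{asm:no_negative_cycle}; being two nonnegative numbers with zero sum, both vanish. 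A closed walk of weight zero decomposes (Appendix~\ref{app:walk_decomp}) into cycles of \emph{strictly positive} weight, so it contains no cycle and therefore no edge, forcing all $u_i$ equal and all $v_i$ equal. But then every $e_i$ is the same edge, contradicting $e_{i+1} \neq e_i$. Hence no witness exists, and $R \cup R'$ satisfies~(\ref{eqn:redundant_edge_set}), as required.
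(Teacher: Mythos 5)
Your proof is correct, but it takes a genuinely different route from the paper's. The paper proves Lemma~\ref{thm:union} constructively: starting from the replacement path of $(u,v)\in R$ in $E\setminus R$, it alternately substitutes the offending $R'$-edges and then $R$-edges, producing walks $w_1,w_2,\ldots$ with strictly increasing length and nonincreasing weight, and terminates this ping-pong by a pigeonhole argument on walks (an unterminated process would force a traversed cycle of nonpositive weight). You instead argue by contradiction on a set of \emph{witness} edges, and your two key ingredients are absent from the paper's proof: (i) the distance-preservation fact $d_{E\setminus S}(s,t)=d_G(s,t)$ for a \emph{single} redundant edge set $S$, which exploits that one round of substitution already lands in $E\setminus S$ and so needs no ping-pong; and (ii) the observation that every witness is tight ($c_{uv}=d_G(u,v)$) and its $G$-shortest replacement path must carry a further witness from the other set, so the finite successor map closes into a cycle of tight witnesses that the telescoping identity kills using strict positivity of cycle weights (together with Assumption~\ref{asm:parallel_edge} to conclude that equal endpoints force equal edges). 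The paper's argument buys brevity and an explicit construction of the replacement walk; yours buys a cleaner isolation of exactly where positivity is indispensable (a zero-weight closed walk must be edgeless) and a reusable intermediate fact---that any edge capable of failing the union condition must satisfy $c_{uv}=d_G(u,v)$---which foreshadows the role of the sets $E_k^c$ in Definition~\ref{def:eq_class_notations}. One cosmetic point: the no-negative-closed-walk property you repeatedly attribute to Assumption~\ref{asm:no_negative_cycle} should instead be derived from the lemma's own hypothesis that all cycles are positive (via the decomposition in Appendix~\ref{app:walk_decomp}), since $G$ here is not assumed to be a precedence graph.
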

\begin{proof}
To verify that $R \cup R^\prime$ is indeed a redundant edge set, it is sufficient to verify that for each $(u,v) \in R \cup R^\prime$ with weight $c_{uv}$, it holds that
\begin{equation} \label{eqn:redundant_edge_cond}
\begin{array}{l}
\text{$\exists$ a path $p_{uv}$ in $(V, E \setminus (R \cup R^\prime), c(E \setminus (R \cup R^\prime)))$ from $u$ to $v$} \vspace{1mm} \\
\text{such that $c_{p_{uv}} \le c_{uv}$.}
\end{array}
\end{equation}
In the subsequent parts of the proof, the following shorthand is used: the sentence ``a walk $w$ is in $\hat{E} \subseteq E$'' means that $w$ is in the subgraph $(V, \hat{E}, c(\hat{E}))$. Without loss of generality, assume that $(u,v) \in R$ (otherwise we exchange the roles of $R$ and $R'$). Then by (\ref{eqn:redundant_edge_set}) there exists a walk (which is in fact a path) $w_1 :u \leadsto v$ in $E \setminus R$ such that $c_{w_1} \le c_{uv}$. If $w_1$ is also in $E \setminus R^\prime$ then $w_1$ is in $E \setminus (R \cup R^\prime)$, and hence $(u,v)$ satisfies (\ref{eqn:redundant_edge_cond}). If, on the other hand, $w_1$ involves edges belonging to $R^\prime$, then by (\ref{eqn:redundant_edge_set}) each of these edges can be replaced by a corresponding path in $E \setminus R^\prime$. Also by (\ref{eqn:redundant_edge_set}), the weights of the replacement paths are no more than the weights of the corresponding edges. This results in another walk $w_2 : u \leadsto v$ in $E \setminus R^\prime$, with possible edges in $R$. The walk $w_2$ has strictly more nodes (and edges) than $w_1$, and $c_{w_2} \le c_{w_1}$. If $w_2$ is in $E \setminus R$ then $(u,v)$ satisfies (\ref{eqn:redundant_edge_cond}), as argued above. Otherwise, the process of finding replacement walks $w_3, w_4, \ldots$ with increasing number of nodes and nonincreasing weights would continue. Next, we show by contradiction that the replacement-walk-finding process would terminate in a finite number of iterations. It is noted that any walk from $u$ to $v$ (with $u \neq v$) can be decomposed into one path from $u$ to $v$ and a finite number of cycles (see Appendix~\ref{app:walk_decomp} for a proof). In addition, in a finite graph the numbers of possible paths and cycles are finite. Therefore, by the pigeonhole principle, if the replacement-walk-finding process does not terminate at some iteration it will generate a walk $\tilde{w}$ traversing a cycle, and one of the previously constructed walks, denoted $\hat{w}$, is exactly the same as $\tilde{w}$ except that the cycle is not traversed. The construction of the walks specifies that $c_{\tilde{w}} \le c_{\hat{w}}$, and this implies that the cycle has nonpositive weight. This contradicts the assumption that all cycles in $G$ have positive weights. Therefore, the replacement-walk-finding process terminates in a finite number of iterations. Consequently, a walk $w^\star$ in $E \setminus (R \cup R')$ from $u$ to $v$ with $c_{w^\star} \le c_{uv}$ is resulted. Then, by Lemma~\ref{thm:walk2path} the desired path $p_{uv}$ can be constructed from $w^\star$ to satisfy (\ref{eqn:redundant_edge_cond}). Applying the same proof to all members of $R \cup R^\prime$ completes the proof.
\end{proof}
\begin{theorem} \label{thm:max_redundant_edge_set}
Let $G = (V,E,c(E))$ be an edge weighted directed graph. Assume that all cycles in $G$ have positive weights. Then the maximum redundant edge set is unique. In addition, if there is a redundant edge, then the maximum redundant edge set is the set of all redundant edges, which can be computed in $O({|V|}^3)$ time.
\end{theorem}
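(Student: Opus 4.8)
The plan is to split the statement into its three assertions and to base almost everything on Lemma~\ref{thm:union}, the closure of redundant edge sets under union. First I would set up the structural backbone. Let $R_{\text{all}}$ denote the set of all redundant edges of $G$ (Definition~\ref{def:redundant_edge}). Writing $R_{\text{all}} = \bigcup_{(i,j) \in R_{\text{all}}} \{(i,j)\}$ as a finite union of single-edge redundant edge sets and applying Lemma~\ref{thm:union} repeatedly, $R_{\text{all}}$ is itself a redundant edge set. Conversely, for \emph{any} redundant edge set $R$ and any $(u,v) \in R$, the singleton $\{(u,v)\} \subseteq R$ is again a redundant edge set (subsets of redundant edge sets are redundant edge sets, as already used in the proof of Lemma~\ref{thm:nonempty_redundant_edge_set}), so $(u,v)$ is a redundant edge and hence $R \subseteq R_{\text{all}}$. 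Thus every redundant edge set is contained in the redundant edge set $R_{\text{all}}$.

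This single observation settles the first two claims simultaneously. Since $R_{\text{all}}$ is a redundant edge set that contains every redundant edge set, it is the unique set of maximum cardinality, which gives \emph{uniqueness}; and whenever a redundant edge exists (so $R_{\text{all}} \neq \emptyset$), the maximum redundant edge set is exactly $R_{\text{all}}$, the set of all redundant edges. (If no redundant edge exists, $R_{\text{all}} = \emptyset$ is trivially the unique maximum, consistent with Lemma~\ref{thm:nonempty_redundant_edge_set}.)

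The remaining and only genuinely delicate part is computing $R_{\text{all}}$ in $O({|V|}^3)$ time, for which I would give a shortest-path test for edge redundancy. Let $d(\cdot,\cdot)$ be the all-pairs shortest-path distances in $G$, which are well defined since Assumption~\ref{asm:no_negative_cycle} excludes negative cycles, and are computable in $O({|V|}^3)$ by the Floyd--Warshall algorithm (with $d = +\infty$ for unreachable pairs). I would then prove the characterization
\[
(i,j) \text{ is redundant} \iff \exists\, k \in V \setminus \{i,j\} \text{ such that } d(i,k) + d(k,j) \le c_{ij}.
\]
The forward direction is routine: a redundancy-witnessing path for $(i,j)$ lies in $E \setminus \{(i,j)\}$, and since there are no parallel edges (Assumption~\ref{asm:parallel_edge}) it has at least two edges and therefore passes through some $k \neq i,j$, whence $d(i,k) + d(k,j) \le c_{ij}$.

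The reverse direction is the main obstacle, and it is precisely where the positive-cycle hypothesis is indispensable. Concatenating a shortest $i \leadsto k$ path with a shortest $k \leadsto j$ path yields a walk $W$ through $k$ with $c_W = d(i,k) + d(k,j) \le c_{ij}$, which Lemma~\ref{thm:walk2path} reduces to a simple path $p_{ij}$ satisfying $c_{p_{ij}} \le c_W \le c_{ij}$. I must rule out the degenerate possibility $p_{ij} = (i,j)$: the walk decomposition (Appendix~\ref{app:walk_decomp}) writes $c_W$ as the weight of $p_{ij}$ plus a sum of cycle weights, and since $W$ visits $k \neq i,j$ while the single edge $(i,j)$ does not, at least one cycle is present, forcing $c_W > c_{ij}$ under the positive-cycle assumption and contradicting $c_W \le c_{ij}$. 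Hence $p_{ij} \neq (i,j)$; but a simple $i \leadsto j$ path that uses edge $(i,j)$ must be $(i,j)$ itself, so $p_{ij}$ avoids $(i,j)$ and lies in $E \setminus \{(i,j)\}$, certifying that $(i,j)$ is redundant. Finally, I would note that testing the displayed condition costs $O(|V|)$ per edge over at most $O({|V|}^2)$ edges, i.e. $O({|V|}^3)$ overall, matching the Floyd--Warshall cost and yielding the claimed $O({|V|}^3)$ bound for constructing $R_{\text{all}}$.
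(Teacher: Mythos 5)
Your proof is correct and follows essentially the same route as the paper: uniqueness and the identification of the maximum with the set of all redundant edges both rest on the union-closure property (Lemma~\ref{thm:union}) together with the fact that subsets of redundant edge sets are redundant edge sets, and the $O({|V|}^3)$ bound comes from an all-pairs shortest-path test in which the positive-cycle hypothesis is used exactly as in Lemma~\ref{thm:computing_redundant_edges} to rule out the degenerate case where the witnessing walk reuses the edge $(i,j)$ itself. The only cosmetic difference is that your per-edge criterion minimizes $d(i,k)+d(k,j)$ over intermediate nodes $k$ while Algorithm~\ref{alg:redundant_edges} minimizes $c_{ik}+d_{kj}$ over edges $(i,k)\in E$; both are valid and cost $O({|V|}^3)$ overall.
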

Before the proof is given, the procedure to compute all redundant edges when $G$ does not have zero or negative weight cycle is described first.
\begin{algorithm}[Finding all redundant edges in graph $G = (V,E,c(E))$ without zero or negative weight cycles] \label{alg:redundant_edges} \hspace{0cm}
\begin{enumerate}
\item Solve the all-pair shortest path problem for all source/destination pairs in $G$. Let $d_{ij}$ denote the shortest path distance from $i$ to $j$.
\item An edge $(i,j) \in E$ is declared a redundant edge if and only if 
\begin{equation} \label{eqn:redundant_edge_criterion}
\min\limits_{(i,k) \in E, \; k \neq i, \; k \neq j} \big\{ c_{ik} + d_{kj} \big\} \le c_{ij}.
\end{equation}
\end{enumerate}
\end{algorithm}

\begin{lemma} \label{thm:computing_redundant_edges}
When $G = (V,E,c(E))$ does not have any zero or negative weight cycles, Algorithm~\ref{alg:redundant_edges} correctly computes all redundant edges in $O({|V|}^3)$ time.
\end{lemma}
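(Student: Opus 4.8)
The plan is to prove the two assertions separately: the correctness of the test (\ref{eqn:redundant_edge_criterion}) and the $O(|V|^3)$ running time. For correctness I would first unfold Definition~\ref{def:redundant_edge}: since $G$ has no zero or negative weight cycles, the remark preceding Lemma~\ref{thm:walk2path} together with that lemma lets me characterize $(i,j)$ as redundant precisely when there exists a \emph{walk} from $i$ to $j$ inside $(V, E \setminus \{(i,j)\}, c(\cdot))$ of weight at most $c_{ij}$. The goal is then to show that this walk-existence condition is equivalent to the inequality (\ref{eqn:redundant_edge_criterion}).

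The easy direction (redundant $\Rightarrow$ criterion) goes as follows. Given a replacement path $p_{ij}$ in $E \setminus \{(i,j)\}$ with $c_{p_{ij}} \le c_{ij}$, let $(i,k)$ be its first edge. Because $p_{ij}$ avoids the edge $(i,j)$ we have $k \neq j$, and Assumption~\ref{asm:self_loop} gives $k \neq i$, so $(i,k)$ is an admissible choice in the minimization. The remainder of $p_{ij}$ is a walk from $k$ to $j$, whose weight is at least $d_{kj}$; hence $c_{ik} + d_{kj} \le c_{p_{ij}} \le c_{ij}$, which is exactly (\ref{eqn:redundant_edge_criterion}).

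The converse (criterion $\Rightarrow$ redundant) is where the main obstacle lies. Let $k$ attain the minimum in (\ref{eqn:redundant_edge_criterion}) and form the walk $W$ by prepending the edge $(i,k)$ to a shortest $k \leadsto j$ path; then $c_W = c_{ik} + d_{kj} \le c_{ij}$. The difficulty is that this shortest $k \leadsto j$ path could conceivably traverse the forbidden edge $(i,j)$, so $W$ need not lie in $E \setminus \{(i,j)\}$. I would rule this out using the positive-cycle hypothesis: a shortest path may be taken simple, so if it used $(i,j)$ that edge would have to be its final edge, forcing the path to have the form $k \leadsto i \to j$ with $d_{kj} = d_{ki} + c_{ij}$. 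Then $W$ would contain the closed sub-walk $i \to k \leadsto i$ of weight $c_{ik} + d_{ki}$, which is strictly positive since $G$ has no zero or negative weight cycles, yielding $c_{ik} + d_{kj} = c_{ik} + d_{ki} + c_{ij} > c_{ij}$ and contradicting the choice of $k$. Hence the shortest $k \leadsto j$ path avoids $(i,j)$, $W$ is a genuine replacement walk in $E \setminus \{(i,j)\}$, and Lemma~\ref{thm:walk2path} converts $W$ into the path required by Definition~\ref{def:redundant_edge}; thus $(i,j)$ is redundant.

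Finally, for the running time, Step~1 solves the all-pairs shortest path problem, which is well posed because Assumption~\ref{asm:no_negative_cycle} guarantees finite shortest-path distances; the Floyd--Warshall algorithm computes all $d_{ij}$ in $O(|V|^3)$ time. In Step~2, each of the $O(|V|^2)$ edges requires a minimization over $k$ costing $O(|V|)$, for a total of $O(|V|^3)$. Combining the two steps gives the claimed $O(|V|^3)$ bound.
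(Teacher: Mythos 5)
Your proposal is correct and follows essentially the same route as the paper's proof: both directions reduce the test (\ref{eqn:redundant_edge_criterion}) to the existence of a replacement walk $i \rightarrow k \leadsto j$, and both use the strictly-positive-cycle hypothesis to rule out the replacement walk passing through the edge $(i,j)$ itself (the paper phrases this as the walk having the form $i \rightarrow k \leadsto i \rightarrow j \leadsto j$ with a positive-weight closed sub-walk, while you take the shortest $k \leadsto j$ path simple so that $(i,j)$ would have to be its last edge — a cosmetic difference). The complexity argument (Floyd--Warshall plus an $O(|E||V|)$ scan) is identical.
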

\begin{proof}
According to Definition~\ref{def:redundant_edge}, edge $(i,j)$ is a redundant edge if and only if there exists a path $p_{ij} : i \leadsto j$ in $G$ such that $(i,j)$ is not part of $p_{ij}$ and $c_{p_{ij}} \le c_{ij}$. If (\ref{eqn:redundant_edge_criterion}) does not hold, then except possibly $(i,j)$ there is no path in $G$ from $i$ to $j$ with weight less than or equal to $c_{ij}$. Hence, $(i,j)$ cannot be a redundant edge. On the other hand, if (\ref{eqn:redundant_edge_criterion}) holds, then there exists a walk $w_{ikj}: i \rightarrow k \leadsto j$ such that $c_{w_{ikj}} \le c_{ij}$. If $(i,j)$ is part of $w_{ikj}$ then the walk is of the form $i \rightarrow k \leadsto i \rightarrow j \leadsto j$, and its weight is $c_{w_{ikj}} > c_{ij}$ because the closed walks $i \rightarrow k \leadsto i$ and $j \leadsto j$ can be decomposed into sequences of cycles and the cycles in $G$ are positively weighted as assumed in the statement. This is a contradiction because $c_{w_{ikj}} \le c_{ij}$ and $c_{w_{ikj}} > c_{ij}$ cannot be both true. Hence, $(i,j)$ is not part of $w_{ikj}$. From Lemma~\ref{thm:walk2path} the desired path $p_{ij}: i \rightarrow k \leadsto j$ can be found to certify that $(i,j)$ is indeed a redundant edge.

Next we argue for the computation requirement. Since $G$ does not have any negative weight cycle. The all-pair shortest path problem can be solved using, for instance, the Floyd-Warshall algorithm in $O({|V|}^3)$ time (e.g., \cite{Cormen:2009:IAT:1614191}). The third step requires $O(|E| |V|) = O({|V|}^3)$ computation cost.
\end{proof}

\begin{remark}
In Lemma~\ref{thm:computing_redundant_edges} the assumption of no zero or negative weight cycles cannot be removed. See Figure~\ref{fig:zero_weight_cycle} for a consequence of when the assumption is not satisfied.
\begin{figure}[!tbh]
\begin{center}
 %\vspace{-0.3cm}
 \includegraphics[width=60mm]{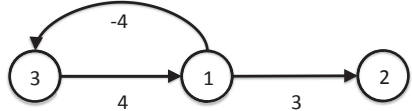}
 %\vspace{-0.3cm}
\end{center}
 \caption{A 3-node example of a graph with a zero weight cycle $(1,3,1)$. In this example, $c_{12} = 3$, $c_{13} = -4$ and $d_{32} = 7$. In (\ref{eqn:redundant_edge_criterion}) for edge $(1,2)$, $c_{13} + d_{32} = 3 \le c_{12}$. However, $(1,2)$ is not a redundant edge.}
 \label{fig:zero_weight_cycle}
\end{figure}
\end{remark}

\begin{proof}[Proof of Theorem~\ref{thm:max_redundant_edge_set}]
Suppose $R$ and $R^\prime$ are two different different maximum redundant edge sets. Then, $|R \cup R^\prime| > |R|$. Further, by Lemma~\ref{thm:union} $R \cup R^\prime$ is a redundant edge set which has more edges than $R$, contradicting the assumption that $R$ is a maximum redundant edge set. Thus, the maximum redundant edge set is unique. It is denoted as $R^\star$.

Next, let $U$ denote the set of all redundant edges. Under the additional assumption that $U \neq \emptyset$, Lemma~\ref{thm:nonempty_redundant_edge_set} specifies that $R^\star \neq \emptyset$. Then, each $(u,v) \in R^\star$ is a redundant edge because the singleton $\{(u,v)\} \subseteq R^\star$ is a redundant edge set. Therefore, $R^\star \subseteq U$. On the other hand, Lemma~\ref{thm:union} states that $U$, which is the union of single-member redundant edge sets, is also a redundant edge set. Thus, $U \subseteq R^\star$. Finally, the claim about $O({|V|}^3)$ computation time follows from Lemma~\ref{thm:computing_redundant_edges}.
\end{proof}

The positive weight cycle assumption in Lemma~\ref{thm:union} is necessary. The presence of zero or negative weight cycles can indeed results in situations where the union of two redundant edge sets is not a redundant edge set, and as a result the maximum redundant edge set is not unique. For a counterexample, consider the graph in Figure~\ref{fig:non_uniqueness}.
\begin{figure}[!tbh]
\begin{center}
 %\vspace{-0.3cm}
 \includegraphics[width=40mm]{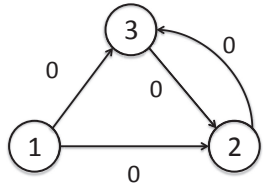}
 %\vspace{-0.3cm}
\end{center}
 \caption{A 3-node example of a graph with zero weight cycles. There are two different maximum redundant edge sets, namely $\{(1,2)\}$ and $\{(1,3)\}$. However, the union $\{(1,2), (1,3)\}$ is not a redundant edge set according to (\ref{eqn:redundant_edge_set}). The cycle $(2,3,2)$ has zero weight.}
 \label{fig:non_uniqueness}
\end{figure}

In fact, the general maximum redundant edge set problem (without the assumption in Lemma~\ref{thm:union}) is NP-hard. This can be shown by a reduction from the NP-hard minimum equivalent graph problem studied in \cite{Moyles:1969:AFM:321526.321534}. The proof also establishes the connection that the maximum redundant edge set problem is a generalization of the minimum equivalent graph problem.

\begin{theorem} \label{thm:NP_hard}
Let $G$ be an edge weighted directed graph. The problem of finding the maximum redundant edge set of $G$ is NP-hard.
\end{theorem}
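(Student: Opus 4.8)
The plan is to give a polynomial-time reduction from the minimum equivalent graph (MEG) problem of \cite{Moyles:1969:AFM:321526.321534}, which is NP-hard on general (cyclic) unweighted directed graphs, to the maximum redundant edge set problem. The guiding intuition is exactly the specialization already noted in the introduction: when every edge weight is zero, the redundant-edge-set condition collapses to a pure reachability condition, so that a minimum equivalent subgraph is precisely the complement of a maximum redundant edge set.

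Concretely, given an instance $(V,E)$ of MEG --- an unweighted directed graph for which we seek a minimum-cardinality $E' \subseteq E$ having the same reachability as $(V,E)$ --- I would construct the weighted graph $G = (V,E,c(E))$ with $c_{ij} = 0$ for all $(i,j) \in E$. This construction is clearly polynomial, and $G$ meets the standing assumptions: it is feasible (take $x = 0$) and contains no negative-weight cycle (every closed walk has weight zero). After the standard and harmless preprocessing that removes any self-loops and parallel edges from the MEG instance, Assumption~\ref{asm:graph_assumptions} holds, so the earlier lemmas apply.

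The crux of the proof is the following equivalence: for any $R \subseteq E$, the set $R$ is a redundant edge set of $G$ (Definition~\ref{def:redundant_edge_set}) if and only if $(V, E \setminus R)$ has the same reachability as $(V,E)$. For the main direction I would argue that, because all edge weights are zero, the requirement in (\ref{eqn:redundant_edge_set}) that there be a path $p_{uv}$ in $(V, E \setminus R)$ with $c_{p_{uv}} \le c_{uv} = 0$ reduces to the mere existence of a path from $u$ to $v$ in $(V, E\setminus R)$, since any such path automatically has weight $0 \le 0$. Thus $R$ is redundant exactly when every removed edge $(u,v) \in R$ can be routed around inside $E \setminus R$. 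Reachability preservation then follows by replacing, in any $u \leadsto v$ walk of $(V,E)$, each removed edge by its replacement path --- with Lemma~\ref{thm:walk2path} guaranteeing one may take a simple path --- while the converse is immediate, because $(u,v) \in E$ forces $u$ to reach $v$ in $(V,E)$ and hence in the reachability-equivalent $(V, E\setminus R)$.

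Granting this equivalence, writing $E' = E \setminus R$ shows that $R$ is a maximum-cardinality redundant edge set precisely when $E'$ is a minimum-cardinality edge subset preserving reachability, i.e.\ a minimum equivalent subgraph, since $|R| = |E| - |E'|$. The reduction therefore maps the decision version of MEG (is there an equivalent subgraph with at most $k$ edges?) to the decision version of the maximum redundant edge set problem (is there a redundant edge set of size at least $|E| - k$?). As the reduction is polynomial and MEG is NP-hard, NP-hardness of the maximum redundant edge set problem follows. I expect the only delicate point to be the careful verification of the reachability equivalence --- in particular, confirming that the weight-$\le 0$ path condition remains genuinely equivalent to plain reachability in the presence of the zero-weight cycles that this zero-weight construction necessarily introduces; everything else is routine.
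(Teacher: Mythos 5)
Your proposal is correct and follows essentially the same route as the paper: a reduction from the NP-hard minimum equivalent graph problem obtained by assigning weight zero to every edge, together with the observation that the weight condition $c_{p_{uv}} \le c_{uv} = 0$ is automatic so that the redundant-edge-set condition collapses to reachability preservation. The paper's proof carries out exactly this argument, including the walk-replacement step for the reachability equivalence.
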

\begin{proof}
First, the minimum equivalent graph problem in \cite{Moyles:1969:AFM:321526.321534} is summarized, in the context of our discussion. Consider a directed graph $(V,E)$, and let $\hat{R} \subseteq E$. We mention that $(V,E)$ and $(V,E \setminus \hat{R})$ have the same reachability if the following condition is satisfied: there is a walk (hence a path) from $i \in V$ to $j \in V$ in $(V,E)$ if and only if there is a walk from $i$ to $j$ in $(V,E \setminus \hat{R})$. Then, an instance of the minimum equivalent graph problem (associated with $(V,E)$) seeks the maximum cardinality $R^\star \subseteq E$ such that $(V,E)$ and $(V,E \setminus R^\star)$ have the same reachability.

Next, we show that every instance of the minimum equivalent graph problem can be reduced into an instance of the maximum redundant edge set problem. To begin, we claim that
\begin{subequations} \label{eqn:reachability_rp}
\begin{align}
& \;\; \text{$(V,E)$ and $(V,E \setminus \hat{R})$ have the same reachability}. \label{eqn:reachability_rp_a} \\
\iff & \;\; \text{$\forall (i,j) \in \hat{R}$, there exists a walk from $i$ to $j$ in $(V, E \setminus \hat{R})$}. \label{eqn:reachability_rp_b}
\end{align}
\end{subequations}
Since a walk in $(V,E \setminus \hat{R})$ is a walk in $(V,E)$, condition (\ref{eqn:reachability_rp_a}) is the same as the condition that 
\begin{equation} \label{eqn:reachability_rp_a_eq}
\text{$\exists$ walk $w_{ij} : i \leadsto j$ in $(V,E)$} \implies \text{$\exists$ walk $w_{ij}^r : i \leadsto j$ in $(V, E \setminus \hat{R})$}.
\end{equation}
If (\ref{eqn:reachability_rp_b}) holds, then in (\ref{eqn:reachability_rp_a_eq}) for every edge $(u,v) \in \hat{R}$ that is part of $w_{ij}$ there is a walk $w_{uv}^r : u \leadsto v$ in $(V, E \setminus \hat{R})$. Hence, (\ref{eqn:reachability_rp_a_eq}) holds. On the other hand, suppose (\ref{eqn:reachability_rp_b}) does not hold, and let $(s,t) \in \hat{R}$ be an edge such that there is no walk from $s$ to $t$ in $(V,E \setminus \hat{R})$. Then, with $i = s$, $j = t$ and $w_{ij} = (s,t)$ as a counterexample it can be seen that (\ref{eqn:reachability_rp_a_eq}) does not hold. Therefore, (\ref{eqn:reachability_rp_a}), (\ref{eqn:reachability_rp_b}) and (\ref{eqn:reachability_rp_a_eq}) are all equivalent. Further, we define $(V,E,c^0(E))$ where $c_{ij}^0 = 0$ for all $(i,j) \in E$ (in fact, the following argument would hold as long as $c_{ij}^0 = \alpha$ for $\alpha \le 0$). Since any walk in $(V,E)$ is a zero weight walk in $(V,E,c^0(E))$ and vice versa, it can be seen that (\ref{eqn:reachability_rp_b}) is equivalent to the condition that $\hat{R}$ is a redundant edge set in $(V,E,c^0(E))$. This suggests that an instance of minimum equivalent graph problem with $(V,E)$ is equivalent to the instance of maximum redundant edge set problem with $(V,E,c^0(E))$, with the two problem instances having the same optimal solutions. 

To complete the complexity argument, note that by definition of $c^0(E)$, $(V,E)$ contains a cycle if and only if $(V,E,c^0(E))$ contains a zero weight cycle. Thus, if there would be a polynomial time algorithm which can solve all instances of the maximum redundant edge set problem including those with zero or negative weight cycles, then the minimum equivalent graph problem could be solved in polynomial time as well. However, since in general the minimum equivalent graph problem is NP-hard (e.g., \cite{Garey:1990:CIG:574848}), we establish that in general the maximum redundant edge set problem is NP-hard as well.
\end{proof}

In summary, if $G$ is a precedence graph then by standing Assumption~\ref{asm:no_negative_cycle} $G$ does not have any negative weight cycles. If in addition $G$ does not have any zero-weight cycles, then Theorem~\ref{thm:max_redundant_edge_set} states that the maximum redundant edge set of $G$ is unique, and it is the set of all redundant edges (if the set is nonempty). In addition, finding all redundant edges using Algorithm~\ref{alg:redundant_edges} requires $O({|V|}^3)$ time. On the other hand, in the more general case where $G$ is allowed to have zero-weight cycles, the example in Figure~\ref{fig:non_uniqueness} indicates that the maximum redundant edge set need not be unique. In addition, Theorem~\ref{thm:NP_hard} states that the maximum redundant edge set problem is NP-hard in general.

It turns out that the maximum redundant edge set problem can always be decomposed into a finite number of decoupled subproblems, one of which is solvable in polynomial time and all other are NP-hard. This decomposition, which will be detailed in Section~\ref{subsec:decomposition}, is analogous to that in \cite{Moyles:1969:AFM:321526.321534} for the minimum equivalent graph problem for unweighted directed graphs.

\subsection{Decomposition of maximum redundant edge set problem} \label{subsec:decomposition}
In this subsection, we first introduce an equivalence class partitioning of the node set of a precedence graph, and define an auxiliary graph induced by the equivalence class partitioning called condensation. Next, we present some properties of the equivalence classes and the condensation. After that, we establish the fact that the maximum redundant edge set problem can be decomposed into $K+1$ independent subproblems, where $K$ is the number of equivalence classes. The main result will be summarized in Theorem~\ref{thm:decomposition}. 

In any edge weighted directed graph, we define an equivalence relation on the node set as follows:
\begin{definition}[Equivalence relation $\sim$] \label{def:eq_re}
Let $G = (V,E,c(E))$ denote an edge weighted directed graph. For any pair $i \in V$, $j \in V$, we denote $i \sim j$ if either (a) $i = j$, or (b) there exists a zero-weight closed walk in $G$ traversing both $i$ and $j$.
\end{definition}
\begin{remark}
To verify that $\sim$ is indeed an equivalence relation, it suffices to note that if $i \sim j$ and $j \sim k$ then there exist two zero-weight closed walks $i \leadsto j \leadsto i$ and $j \leadsto k \leadsto j$. Consequently, a zero-weight closed walk $i \leadsto j \leadsto k \leadsto j \leadsto i$ exists, and this implies that $i \sim k$.
\end{remark}
%\begin{remark}
%In the sequel, the graph underlying the $\sim$ relation can be different. By $i \sim j$ without any reference to the graph, we mean that the underlying graph is the precedence graph $G$ of the original system (\ref{eqn:prec_sys}). If some other directed graph $\tilde{G}$ is considered (e.g., a subgraph of $G$), then we will explicitly mention $i \sim j$ in $\tilde{G}$.
%\end{remark}

The relation $\sim$ defines equivalence classes in the node set. For convenience, we will define some notations associated with the equivalence classes. However, before these notations are defined, the notion of the minimum walk weight should be defined first.

\begin{definition}[Minimum walk weight] \label{def:min_walk_weight}
Let $G = (V,E,c(E))$ be an edge weighted directed graph without negative weight closed walks. For $i \in V$, $j \in V$, define $d_{ij}$ to be the minimum weight of the walk among all walks in $G$ which goes from $i$ to $j$. Note that $d_{ii} = 0$ for all $i \in V$, and this is attained by the single-node path $(i)$ since $G$ does not have any negative weight closed walks.
\end{definition}

\begin{definition}[Equivalence classes induced by equivalence relation $\sim$] \label{def:eq_class_notations}
Let $G = (V,E,c(E))$ denote an edge weighted directed graph without negative weight closed walks. Let relation $\sim$ be defined in Definition~\ref{def:eq_re}. In addition, let $d_{ij}$, the minimum walk weight in $G$, be defined in Definition~\ref{def:min_walk_weight}. We define the following:
\begin{enumerate}[label=\ref{def:eq_class_notations}.\alph*]
\item\label{def:eq_K} $K$ denotes the number of equivalence classes in $V$ defined by relation $\sim$.
\item\label{def:eq_vk} For $k \in \{1,2,\ldots,K\}$, $[v_k] \subseteq V$ denotes the equivalence class containing $v_k$, where $v_k$ is the (arbitrarily) designated representing node for equivalence class containing $v_k$. %In the sequel, the paper assumes the choices of $v_k$'s are fixed throughout.
\item\label{def:eq_Ek} For $k \in \{1,2,\ldots,K\}$, we denote
\begin{itemize}
\item $E_k := \{(i,j) \in E \mid i \in [v_k], j \in [v_k]\}$. That is, $E_k$ denotes the set of edges connecting two nodes inside an equivalence class $[v_k]$.
\item Let $E_k^r \subseteq E_k$ be defined as
\begin{equation} \label{def:EkR}
E_k^r := \{(i,j) \in E_k \mid c_{ij} > d_{ij} \}.
\end{equation}
That is, $E_k^r \subseteq E_k$ is a subset of $E_k$ where each member edge has an edge weight strictly greater than the corresponding minimum walk weight. As it will become apparent in the sequel, edges in $E_k^r$ are redundant (i.e., can always be included in any maximum redundant edge set). This motivates the use of superscript ``r'' in (\ref{def:EkR}).
\item Let $E_k^c \subseteq E_k$ be defined as
\begin{equation} \label{def:Ekc}
E_k^c := \{(i,j) \in E_k \mid c_{ij} = d_{ij} \}.
\end{equation}
That is, $E_k^c = E_k \setminus E_k^r$ since $c_{ij} \ge d_{ij}$ for all $(i,j) \in E$. This motivates the superscript ``c'', since $E_k^c$ is the complement of $E_k^r$.
\end{itemize}
\item\label{def:eq_Eij} For $i \in \{1,2,\ldots,K\}, j \in \{1,2,\ldots,K\}$, $i \neq j$ we denote
\begin{itemize}
\item $E_{ij} := \{(u,v) \in E \mid u \in [v_i], \; v \in [v_j] \}$. That is, $E_{ij}$ denotes the set of edges from a node in $[v_i]$ to another node in $[v_j]$ with $[v_i] \neq [v_j]$ (because of the assumption that $i \neq j$).
\item Let $E_{ij}^c \subseteq E_{ij}$ be defined as
\begin{equation} \label{def:Eijc}
E_{ij}^c := \{(u,v) \in E_{ij} \mid (u,v) \in \underset{(s,t) \in E_{ij}}{\text{argmin}} \; d_{v_i s} + c_{st} + d_{t v_j} \}.
\end{equation}
While in (\ref{def:Eijc}) the definition of $E_{ij}^c$ assumes a designation of the representing nodes $v_i$ and $v_j$, it turns out that $E_{ij}^c$ is in fact independent of the choice of the designation. This will be argued in Remark~\ref{rmk:decomposition}.
\item For $E_{ij} \neq \emptyset$ (hence $E_{ij}^c \neq \emptyset$), we (arbitrarily) designate a particular edge $(v_{ij}^i, v_{ij}^j) \in E_{ij}^c$ (with $v_{ij}^i \in [v_i]$, $v_{ij}^j \in [v_j]$) as the ``representing'' edge for $E_{ij}$. %, but it is assumed fixed throughout the paper.
\end{itemize}
\item\label{def:eq_E0} Collecting all inter-equivalence class edges, we denote
\begin{equation} \label{eqn:E0}
\begin{aligned}
E_0 & :=  \mathop{\cup}\limits_{1 \le i \neq j \le K} E_{ij} \\
E_0^c & :=  \mathop{\cup}\limits_{1 \le i \neq j \le K} E_{ij}^c \\
E_0^d & :=  \mathop{\cup}\limits_{1 \le i \neq j \le K} \{(v_{ij}^i, v_{ij}^j)\}
\end{aligned},
\end{equation}
where $1 \le i \neq j \le K$ is shorthand for $\{(i,j) \mid 1 \le i,j \le K, i \neq j \}$. It holds that $E_0^d \subseteq E_0^c \subseteq E_0$.
\end{enumerate}
\end{definition}

Figure~\ref{fig:eq_classes} shows an illustration of the equivalence classes defined by relation $\sim$.
\begin{figure}[!tbh]
\begin{center}
 \includegraphics[width=60mm]{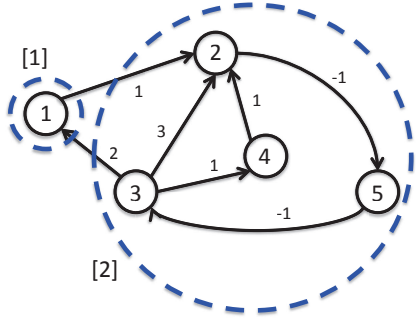}
 \caption{An example precedence graph with five nodes. The number on each edge denotes the weight of the corresponding edge. Node 1 defines an equivalence class $[1]$ by itself because it is not on any zero weight closed walk. $\{2,3,4,5\}$ is the other equivalence class $[2]$ because of the zero-weight cycle (i.e., closed walk) $(3,4,2,5,3)$. In constrast, for the simplification of unweighted directed graphs, references \cite{Moyles:1969:AFM:321526.321534,AGU_transitive_reduction} consider the partitioning of $V$ into a different type equivalence classes which are the strongly connected components of the graph. For this example graph, there is only one strongly connected component which is the set of all nodes. This is different from the two-part equivalence class partitioning induced by the $\sim$ relation considered in this paper. The intra-equivalence class edge sets are $E_1 = \emptyset$, $E_2 = \{(2,5), (3,2), (3,4), (4,2), (5,3)\}$ and $E_2^r = \{(3,2)\}$. The inter-equivalence class edge sets are $E_{12} = \{(1,2)\}$ and $E_{21} = \{(3,1)\}$. The representing edges for $E_{12}$ and $E_{21}$ are, respectively, $(v_{12}^1, v_{12}^2) = (1,2)$ and $(v_{21}^2, v_{21}^1) = (3,1)$. In this example, $E_0 = E_0^c = E_0^d = \{(1,2), (3,1)\}$.
 }
 \label{fig:eq_classes}
\end{center}
\end{figure}
Analogous to the condensation of a unweighted graph, we define the condensation of an edge weighted directed graph as follows:

\begin{definition}[Condensation] \label{def:condensation}
Let $G = (V,E,c(E))$ denote an edge weighted directed graph without negative weight closed walks. Let other involved symbols be defined in Definitions~\ref{def:eq_re}, \ref{def:min_walk_weight}, \ref{def:eq_class_notations} in the context of $G$. We define the condensation of $G$, denoted $\tilde{G} := (\tilde{V}, \tilde{E}_0, \tilde{c}(\tilde{E}_0))$ as follows:
\begin{itemize}
\item The set of all nodes of $\tilde{G}$ is $\tilde{V} := \{v_1, v_2, \ldots, v_K\}$.
\item In $\tilde{G}$, there is an edge $(v_i, v_j) \in \tilde{E}_0$ with $i \neq j$ if and only if in $G$ the set $E_{ij} \neq \emptyset$ (see Definition~\ref{def:eq_Eij} for $E_{ij}$).
\item For any $(v_i, v_j) \in \tilde{E}_0$, the edge weight $\tilde{c}_{v_i v_j}$ is defined as
\begin{equation} \label{eqn:tilde_c_ij}
\tilde{c}_{v_i v_j} := \min\limits_{(u,v) \in E_{ij}} \;\; d_{v_i u} + c_{u v} + d_{v v_j},
\end{equation}
where $d_{v_i u}$ and $d_{v v_j}$ are defined in Definition~\ref{def:min_walk_weight}. 
\item According to the definition of $v_{ij}^i$ and $v_{ij}^j$ in Definition~\ref{def:eq_Eij}, it holds that
\begin{equation} \label{eqn:viji_vijj}
\begin{array}{l}
\tilde{c}_{v_i v_j} = d_{v_i v_{ij}^i} + c_{v_{ij}^i v_{ij}^j} + d_{v_{ij}^j v_j}.
\end{array}
\end{equation}
\item The vector of edge weights is denoted $\tilde{c}(\tilde{E}_0)$.
\end{itemize}
\end{definition}

\begin{remark}
For a graph $G$, different designations of the representing nodes in the equivalence classes (e.g., $v_1, v_2, \ldots, v_K$) can result in different condensations $\tilde{G}$. However, certain properties of $\tilde{G}$ vital to the main results in this paper are independent of the designation. See Remarks~\ref{rmk:decomposition} and \ref{rmk:equiv_reduction} for details.
\end{remark}
The condensation of the graph in Figure~\ref{fig:eq_classes} is illustrated in Figure~\ref{fig:condensation_example}.
\begin{figure}[!tbh]
\begin{center}
 \includegraphics[width=36mm]{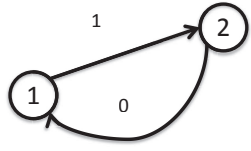}
 \caption{The condensation of the graph in Figure~\ref{fig:eq_classes}. There are two edges in the condensation: $(v_1,v_2) = (1,2)$ and $(v_2,v_1) = (2,1)$ (i.e., $\tilde{E}_0 = \{(1,2), (2,1)\}$). Their weights are $\tilde{c}_{v_1 v_2} = d_{v_1 v_{12}^1} + c_{v_{12}^1 v_{12}^2} + d_{v_{12}^2 v_2} = 0 + 1 + 0 = 1$, and $\tilde{c}_{v_2 v_1} = d_{v_2 v_{21}^2} + c_{v_{21}^2 v_{21}^1} + d_{v_{21}^1 v_1} = -2 + 2 + 0 = 0$.}
 \label{fig:condensation_example}
\end{center}
\end{figure}

The equivalence relation $\sim$ (Definition~\ref{def:eq_re}), the equivalence classes (Definition~\ref{def:eq_class_notations}) and the condensation (Definition~\ref{def:condensation}) satisfy certain properties that will be useful in the proof of the main results in this subsection. Lemma~\ref{thm:sim_preservation} and Lemma~\ref{thm:eq_class_preservation} specify that relation $\sim$ and the corresponding equivalence class partitioning are preserved even if a subset of edges is removed from the graph, as long as the removed edges form a redundant edge set.

\begin{lemma} \label{thm:sim_preservation}
Let $G = (V,E,c(E))$ denote an edge weighted directed graph without negative weight closed walks, and let $R$ be any redundant edge set of $G$ (Definition~\ref{def:redundant_edge_set}). Denote $G^c = (V,E \setminus R, c(E \setminus R))$. Let $i \in V$ and $j \in V$. Then $i \sim j$ in $G$ if and only if $i \sim j$ in $G^c$.
\end{lemma}
\begin{proof}
The proof considers only the case when $i \neq j$, since the statement is true by Definition~\ref{def:eq_re} when $i = j$. If $i \sim j$ in $G^c$ then $i$ and $j$ are on a zero-weight closed walk in $G^c$. The same closed walk is also in $G$ because $(E \setminus R) \subseteq E$. Hence, $i \sim j$ in $G$. On the other hand, if $i \sim j$ in $G$, then there exists a zero-weight closed walk $w$ in $G$ traversing $i$ and $j$. Part of $w$ can be edges in $R$. However, for each $(u,v) \in R$ such that $(u,v)$ is part of $w$, a replacement path $p_{uv}$ exists in $G^c$ such that $c_{p_{uv}} \le c_{uv}$. By substituting edges in $R$ which are part of $w$ with the corresponding replacement paths in $G^c$, it is possible to construct another closed walk $\hat{w}$ (in $G^c$) traversing $i$ and $j$ satisfying $c_{\hat{w}} \le c_w = 0$. The no-negative-weight-cycle assumption in the statement excludes the case where $c_{\hat{w}} < 0$. Hence, $\hat{w}$ is a zero-weight closed walk in $G^c$ traversing $i$ and $j$. In other words, $i \sim j$ in $G^c$.
\end{proof}

\begin{lemma} \label{thm:eq_class_preservation}
Let $G = (V,E,c(E))$ an edge weighted directed graph without negative weight closed walks. The equivalence class partitioning defined by relation $\sim$ is the same for all subgraphs $(V, E \setminus R, c(E \setminus R))$, where $R \subseteq E$ is any redundant edge set in $G$.
\end{lemma}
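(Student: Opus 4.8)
The plan is to deduce this lemma as an immediate corollary of Lemma~\ref{thm:sim_preservation}, exploiting the elementary fact that the partition of a set into equivalence classes is completely determined by the underlying equivalence relation. First I would recast the statement: a partition of $V$ into $\sim$-classes is merely a repackaging of the binary relation $\sim$ on $V$, in the sense that two nodes $i$ and $j$ belong to the same class precisely when $i \sim j$. Hence, to prove that the partitioning induced in $G$ coincides with the partitioning induced in any subgraph $G^c = (V, E \setminus R, c(E \setminus R))$ obtained by deleting a redundant edge set $R$, it suffices to show that the two relations themselves coincide, i.e., that for every pair $i, j \in V$ one has $i \sim j$ in $G$ if and only if $i \sim j$ in $G^c$.

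This pairwise agreement is exactly the content of Lemma~\ref{thm:sim_preservation}, which is stated for an arbitrary redundant edge set $R$ of $G$. I would therefore simply invoke Lemma~\ref{thm:sim_preservation} to conclude that the relation $\sim$ on $G$ and the relation $\sim$ on $G^c$ are one and the same relation on $V$, from which equality of the induced partitions follows at once.

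The single point that warrants a word of care is that the definition of $\sim$ in Definition~\ref{def:eq_re} refers explicitly to the graph in which the zero-weight closed walks reside, so a priori the relation computed in $G$ and the relation computed in $G^c$ could differ; indeed, the entire substance of the claim is that they do not. That substance, however, has already been discharged in Lemma~\ref{thm:sim_preservation} (whose proof reconstructs a zero-weight closed walk in $G^c$ from one in $G$ by splicing in replacement paths for the deleted redundant edges, using the no-negative-weight-cycle hypothesis). I do not anticipate any genuine obstacle here: the proof is essentially a one-line corollary, combining the principle that equal equivalence relations yield equal partitions with the previously established Lemma~\ref{thm:sim_preservation}.
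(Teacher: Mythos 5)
Your proposal is correct and matches the paper's own proof, which likewise disposes of this lemma as a direct consequence of Lemma~\ref{thm:sim_preservation}. The only difference is that you spell out the (standard) observation that equal equivalence relations induce equal partitions, which the paper leaves implicit.
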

\begin{proof}
This is a direct consequence of Lemma~\ref{thm:sim_preservation}.
\end{proof}

The following preliminary statements are also useful in the proof of the main results. Lemma~\ref{thm:eq_class} establishes some properties regarding the minimum walk weights (and the corresponding walks) between nodes in an equivalence class. Lemma~\ref{thm:tildeG_positive_cycle} specifies that the cycles in the condensation are always positively weighted, even if the graph from which the condensation is derived can have zero-weight cycles.

\begin{lemma} \label{thm:eq_class}
Let $G = (V,E,c(E))$ be an edge weighted directed graph without negative weight closed walks. For $i,j \in V$, let $d_{ij}$ be the minimum walk weight from $i$ to $j$ among all walks in $G$ (i.e., Definition~\ref{def:min_walk_weight}). Let $V^\prime \subseteq V$, $E^\prime \subseteq E$. Let $U \subseteq V^\prime$ be an equivalence class in $(V^\prime, E^\prime, c(E^\prime))$ defined by relation $\sim$. Define $E_U := \{(i,j) \in E^\prime \mid i \in U, j \in U \}$, and $G_U = (U,E_U,c(E_U))$. For all $i \in U$, $j \in U$, the following statements hold:
\begin{enumerate}[label=\ref{thm:eq_class}.\alph*]
\item\label{lem:wij} There exists a walk $w_{ij}$ in $G_U$ (which is a subgraph of $G^\prime$ which in turn is a subgraph of $G$) attaining the minimum weight $d_{ij}$ among all walks in $G$ which goes from $i$ to $j$. Similiarly, there exists a walk $w_{ji}$ in $G_U$ attaining the minimum weight $d_{ji}$ among all walks in $G$ which goes from $j$ to $i$. In case $i = j$, $d_{ii} = 0$ is attained by the degenerate walk containing the single node $i$.
\item\label{lem:dij_dji} $d_{ij} = -d_{ji}$.
\item\label{lem:dis_dsj} If $s \in U$, then $d_{ij} = d_{is} + d_{sj}$.
%\item\label{lem:part_of_wij} If $(i,j) \in E_U$ and $c_{ij} > d_{ij}$, then $(i,j)$ cannot be part of any minimum weight walk (among all possible walks in $G$) going from $i$ to $j$.
\end{enumerate}
\end{lemma}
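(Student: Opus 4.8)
The plan is to prove part \ref{lem:wij} first, by producing an explicit minimizing walk that is confined to $G_U$, and then to harvest parts \ref{lem:dij_dji} and \ref{lem:dis_dsj} as short corollaries. The case $i=j$ is immediate from Definition~\ref{def:min_walk_weight} (the degenerate walk $(i)$ attains $d_{ii}=0$), so I assume $i\neq j$ throughout.

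For part \ref{lem:wij}, since $i$ and $j$ lie in the same equivalence class $U$ of $(V',E',c(E'))$, relation $\sim$ (Definition~\ref{def:eq_re}) furnishes a zero-weight closed walk $w$ in $G'$ traversing both $i$ and $j$. The key observation is that $w$ never leaves $U$: any node $k$ visited by $w$ lies on a zero-weight closed walk through $i$, so $k\sim i$ and hence $k\in U$; consequently every edge of $w$ has both endpoints in $U$ and lies in $E'$, i.e.\ in $E_U$, so $w$ is in fact a closed walk in $G_U$. Rotating $w$ to begin at $i$ and splitting it at an occurrence of $j$ produces two subwalks $w_{ij}:i\leadsto j$ and $w_{ji}:j\leadsto i$, both in $G_U$, with $c_{w_{ij}}+c_{w_{ji}}=c_w=0$. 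To see that $w_{ij}$ is globally optimal, I take an arbitrary walk $\omega_{ij}:i\leadsto j$ in the full graph $G$; concatenating it with $w_{ji}$ yields a closed walk in $G$, so Assumption~\ref{asm:no_negative_cycle} gives $c_{\omega_{ij}}+c_{w_{ji}}\ge 0$, i.e.\ $c_{\omega_{ij}}\ge -c_{w_{ji}}=c_{w_{ij}}$. Thus $c_{w_{ij}}\le c_{\omega_{ij}}$ for every $i\leadsto j$ walk in $G$, whence $c_{w_{ij}}=d_{ij}$; the symmetric argument (swapping the roles of $i$ and $j$) gives $c_{w_{ji}}=d_{ji}$. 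Part \ref{lem:dij_dji} then drops out at once, since the same two walks satisfy $d_{ij}+d_{ji}=c_{w_{ij}}+c_{w_{ji}}=0$.

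For part \ref{lem:dis_dsj}, with $s\in U$ I would invoke two triangle inequalities for the minimum walk weight $d$, each valid because concatenating two minimizing walks in $G$ is again a walk in $G$: routing through $s$ gives $d_{ij}\le d_{is}+d_{sj}$, while routing through $i$ gives $d_{sj}\le d_{si}+d_{ij}$. Applying part \ref{lem:dij_dji} to the pair $s,i$ (both in $U$) to replace $d_{si}$ by $-d_{is}$ converts the second inequality into $d_{is}+d_{sj}\le d_{ij}$, and combining the two yields the desired equality $d_{ij}=d_{is}+d_{sj}$.

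The only genuinely delicate step is the optimality claim inside part \ref{lem:wij}: one must argue that the minimizing walk can be taken inside $G_U$ rather than merely somewhere in $G$. This rests on two points that deserve care, namely that a zero-weight closed walk through $i$ is entirely contained in the equivalence class $U$, and that the no-negative-closed-walk hypothesis forces the $U$-confined subwalk $w_{ij}$ to already realize the global shortest weight $d_{ij}$ even against competitors $\omega_{ij}$ that are free to roam outside $U$. Once these are in place, everything else is routine bookkeeping with concatenations and the standing assumption.
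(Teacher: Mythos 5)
Your proof is correct and follows essentially the same route as the paper: split a zero-weight closed walk through $i$ and $j$ into two subwalks in $G_U$, use the no-negative-closed-walk hypothesis to show each attains the global minimum, read off \ref{lem:dij_dji}, and obtain \ref{lem:dis_dsj} by concatenation. Your version of \ref{lem:dis_dsj} via two triangle inequalities plus \ref{lem:dij_dji} is a clean repackaging of the paper's contradiction argument, and your explicit check that the zero-weight closed walk never leaves $U$ is a detail the paper's proof leaves implicit.
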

\begin{proof}
First, \ref{lem:wij} and \ref{lem:dij_dji} are shown together. If $i = j$, then $d_{ij} = -d_{ji} = 0$. Thus, we only consider the case when $i \neq j$. By Definition~\ref{def:eq_re}, $i \in U$ and $j \in U$ means that there exists a zero-weight closed walk $w_{ii}: i \leadsto j \leadsto i$ in $G_U$. This closed walk can be decomposed into two walks in $G_U$: $w_{ij}: i \leadsto j$ and $w_{ji}: j \leadsto i$ (there can be multiple ways to decompose). Let $c_{w_{ii}} = 0$, $c_{w_{ij}}$ and $c_{w_{ji}}$ denote the weights of the walks respectively. Then the decomposition of $w_{ii}$ implies that $c_{w_{ij}} + c_{w_{ji}} = c_{w_{ii}} = 0$. Hence, $c_{w_{ij}} = - c_{w_{ji}}$. Next, we show that indeed $c_{w_{ij}} = d_{ij}$ and $c_{w_{ji}} = d_{ji}$. First, note that $c_{w_{ij}} \ge d_{ij}$ by Definition~\ref{def:min_walk_weight}. Suppose $c_{w_{ij}} > d_{ij}$, and hence there exists another walk $\hat{w}_{ij}$ in $G$ going from $i$ and $j$ such that $c_{\hat{w}_{ij}} < c_{w_{ij}}$. Then, concatenating $\hat{w}_{ij}$ and $w_{ji}$ leads to a closed walk with weight $c_{\hat{w}_{ij}} + c_{w_{ji}} = c_{\hat{w}_{ij}} - c_{w_{ij}} < 0$. This violates the no negative weight closed walk assumption in the statement. Thus, $c_{w_{ij}} = d_{ij}$. With a symmetric argument, it can be shown that $c_{w_{ji}} = d_{ji}$.

For \ref{lem:dis_dsj}, if $s = i$, $s = j$ or $i = j$ then the equality trivially holds because $d_{ii} = d_{jj} = 0$, and $d_{is} = -d_{si}$ by \ref{lem:dij_dji}. Thus, we consider only the case where $s$, $i$, $j$ are all distinct. Since $i,j,s \in U$, by \ref{lem:wij} there exist walks $w_{is}: i \leadsto s$ and $w_{sj}: s \leadsto j$ with weights $d_{is}$ and $d_{sj}$, respectively. Thus, concatenating $w_{is}$ and $w_{sj}$ yields a walk $i \leadsto s \leadsto j$ with weight $d_{is} + d_{sj}$. Consequently, by Definition~\ref{def:min_walk_weight}, $d_{ij} \le d_{is} + d_{sj}$. Next, we show that $d_{ij} < d_{is} + d_{sj}$ is impossible. Assume, on the contrary, that 
\begin{equation} \label{eqn:d_isj}
d_{ij} < d_{is} + d_{sj}
\end{equation}
holds. Let $w_{ij}$ be a minimum weight walk in $G$ attaining weight $d_{ij}$ by Definition~\ref{def:min_walk_weight}. In addition, since $i,j,s \in U$, by \ref{lem:wij} and \ref{lem:dij_dji} there exist walks in $G_U$ (and hence in $G$) $w_{js}: j \leadsto s$ and $w_{si}: s \leadsto i$ with weights $d_{js} = -d_{sj}$ and $d_{si} = -d_{is}$ respectively. Consequently, by concatenating $w_{ij}$, $w_{js}$ and $w_{si}$ we obtain a closed walk (in $G$) $i \leadsto j \leadsto s \leadsto i$ with weight $c_{w_{ij}} + c_{w_{js}} + c_{w_{si}} = d_{ij} - d_{sj} - d_{is} < 0$ according to (\ref{eqn:d_isj}). This violates the no negative weight closed walk assumption in the statement. Therefore, (\ref{eqn:d_isj}) does not hold, and $d_{ij} = d_{is} + d_{sj}$ as desired.

%For \ref{lem:part_of_wij}, assume on the contrary that $(i,j)$ is part of a minimum weight walk in $G$ which goes from $i$ to $j$. This minimum weight walk can be decomposed into
%\begin{displaymath}
%w_{ij}: \underbrace{i \leadsto i}_{:= w_{ii}} \rightarrow \underbrace{j \leadsto j}_{:= w_{jj}},
%\end{displaymath}
%where $w_{ii}$ and $w_{jj}$ are two closed walks. The weight of the walk $w_{ij}$ satisfies $d_{ij} = c_{w_{ij}} = c_{w_{ii}} + c_{ij} + c_{w_{jj}}$. Since $c_{ij} > d_{ij}$, at least one of $c_{w_{ii}}$ and $c_{w_{jj}}$ is negative. This violates the no negative weight closed walk assumption in the statement. Thus, $(i,j)$ cannot be part of any minimum weight walk.
\end{proof}

\begin{lemma} \label{thm:tildeG_positive_cycle}
Let $G = (V,E,c(E))$ be an edge weighted directed graph without negative weight closed walks. Let $\tilde{G} := (\tilde{V}, \tilde{E}_0, \tilde{c}(\tilde{E}_0))$ be the condensation of $G$ defined in Definition~\ref{def:condensation}. Then, the weights of all cycles in $\tilde{G}$ are positive.
\end{lemma}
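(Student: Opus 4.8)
The plan is to lift any cycle of $\tilde{G}$ to a closed walk of $G$ of exactly the same weight, then invoke Assumption~\ref{asm:no_negative_cycle} to get nonnegativity and the definition of $\sim$ to upgrade this to strict positivity. Concretely, I would take an arbitrary cycle $(v_{k_0}, v_{k_1}, \ldots, v_{k_{m-1}}, v_{k_0})$ in $\tilde{G}$. Since $\tilde{G}$ has no self-loops or parallel edges (Assumptions~\ref{asm:parallel_edge} and \ref{asm:self_loop}), we have $m \ge 2$, and the traversed nodes $v_{k_0}, \ldots, v_{k_{m-1}}$ are distinct representing nodes of distinct equivalence classes of $G$.

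Next I would realize each condensation edge as a genuine walk in $G$. For each edge $(v_{k_\ell}, v_{k_{\ell+1}}) \in \tilde{E}_0$ (indices taken modulo $m$), identity (\ref{eqn:viji_vijj}) writes its weight as $\tilde{c}_{v_{k_\ell} v_{k_{\ell+1}}} = d_{v_{k_\ell}\, v_{k_\ell k_{\ell+1}}^{k_\ell}} + c_{v_{k_\ell k_{\ell+1}}^{k_\ell}\, v_{k_\ell k_{\ell+1}}^{k_{\ell+1}}} + d_{v_{k_\ell k_{\ell+1}}^{k_{\ell+1}}\, v_{k_{\ell+1}}}$. The endpoints $v_{k_\ell}$ and $v_{k_\ell k_{\ell+1}}^{k_\ell}$ lie in the same equivalence class $[v_{k_\ell}]$, and likewise $v_{k_\ell k_{\ell+1}}^{k_{\ell+1}}$ and $v_{k_{\ell+1}}$ lie in $[v_{k_{\ell+1}}]$, so Lemma~\ref{thm:eq_class} (part \ref{lem:wij}) supplies actual walks in $G$ attaining these two minimum walk weights. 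Concatenating such a walk, the representing edge, and the second such walk yields a walk in $G$ from $v_{k_\ell}$ to $v_{k_{\ell+1}}$ whose weight equals $\tilde{c}_{v_{k_\ell} v_{k_{\ell+1}}}$. Stitching these segments together around the whole cycle produces a closed walk $W$ in $G$ passing through every $v_{k_\ell}$, with total weight equal to the weight of the original cycle in $\tilde{G}$. Assumption~\ref{asm:no_negative_cycle} then gives that this cycle weight is $\ge 0$.

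The hard part will be ruling out weight exactly zero, and this is where the structure of $\sim$ enters. If the cycle weight were $0$, then $W$ would be a zero-weight closed walk in $G$ traversing the two distinct nodes $v_{k_0}$ and $v_{k_1}$; by Definition~\ref{def:eq_re}(b) this forces $v_{k_0} \sim v_{k_1}$, i.e.\ they belong to the same equivalence class of $G$, contradicting that $v_{k_0}$ and $v_{k_1}$ represent distinct equivalence classes. Hence the cycle weight is strictly positive. I expect the only delicate point to be bookkeeping: making sure the within-class minimum-weight pieces genuinely exist and lie in $G$ (handled by Lemma~\ref{thm:eq_class}) and that $W$ really does pass through at least two distinct representing nodes (guaranteed by $m \ge 2$), so that the $\sim$ contradiction applies cleanly.
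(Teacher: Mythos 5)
Your proposal is correct and follows essentially the same route as the paper's proof: lift the condensation cycle to a closed walk in $G$ of equal weight via (\ref{eqn:viji_vijj}) and Lemma~\ref{lem:wij}, rule out negative weight by Assumption~\ref{asm:no_negative_cycle}, and rule out zero weight because a zero-weight closed walk through two distinct representing nodes would force them into the same equivalence class under $\sim$. No gaps.
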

\begin{proof}
Let $\tilde{w}$ denote a cycle in $\tilde{G}$ as $(v_{i_0}, v_{i_1}, \ldots, v_{i_m} = v_{i_0})$. By definition of cycle, $m \ge 2$ and at least one $v_{i_q}$ for $q > 0$ is different from $v_{i_0}$. The weight of the cycle is
\begin{equation} \label{eqn:cg_cycle0}
\tilde{c}_{\tilde{w}} = \tilde{c}_{v_{i_0} v_{i_1}} + \tilde{c}_{v_{i_1} v_{i_2}} + \ldots + \tilde{c}_{v_{i_{m-1}} v_{i_m}}.
\end{equation}
By (\ref{eqn:viji_vijj}), the edge weights are
\begin{equation} \label{eqn:cg_cycle1}
\tilde{c}_{v_{i_k} v_{i_{k+1}}} = d_{v_{i_k} v_{i_k i_{k+1}}^{i_k}} + c_{v_{i_k i_{k+1}}^{i_k} v_{i_k i_{k+1}}^{i_{k+1}}} + d_{v_{i_k i_{k+1}}^{i_{k+1}} v_{i_{k+1}}}, \quad \forall k \in \{0,\ldots,m-1\}.
\end{equation}
With (\ref{eqn:cg_cycle1}), the expression in (\ref{eqn:cg_cycle0}) can be rewritten as
\begin{equation} \label{eqn:cg_cycle2}
\begin{array}{rcl}
\tilde{c}_{\tilde{w}} & = & d_{v_{i_0} v_{i_0 i_1}^{i_0}} + c_{v_{i_0 i_1}^{i_0} v_{i_0 i_1}^{i_1}} + d_{v_{i_0 i_1}^{i_1} v_{i_1}}
 + d_{v_{i_1} v_{i_1 i_2}^{i_1}} + c_{v_{i_1 i_2}^{i_1} v_{i_1 i_2}^{i_2}} + d_{v_{i_1 i_2}^{i_2} v_{i_2}} \vspace{2mm} \\
& & + \ldots + d_{v_{i_{m-1}} v_{i_{m-1} i_m}^{i_{m-1}}} + c_{v_{i_{m-1} i_m}^{i_{m-1}} v_{i_{m-1} i_m}^{i_m}} + d_{v_{i_{m-1} i_m}^{i_m} v_{i_m}}.
\end{array}
\end{equation}
The right-hand side of (\ref{eqn:cg_cycle2}) is the weight of a closed walk in $G$ of the form
\begin{equation} \label{eqn:cg_cycle3}
v_{i_0} \leadsto v_{i_0 i_1}^{i_0} \rightarrow v_{i_0 i_1}^{i_1} \leadsto v_{i_1} \leadsto v_{i_1 i_2}^{i_1} \rightarrow v_{i_1 i_2}^{i_2} \leadsto \dots \rightarrow v_{i_{m-1} i_0}^{i_0} \leadsto v_{i_0},
\end{equation}
where the existence of the intra equivalence class walks and the inter equivalence class edges is guaranteed by  Lemma~\ref{lem:wij} and Definition~\ref{def:condensation}, respectively. $\tilde{c}_{\tilde{w}} < 0$ is impossible because of the statement assumption. In addition, if $\tilde{c}_{\tilde{w}} = 0$ then the closed walk (in $G$) in (\ref{eqn:cg_cycle3}) would have zero weight. Consequently, $v_{i_0}, v_{i_1}, \ldots$, which are the representing nodes of different equivalence classes, would be all traversed by one zero-weight closed walk. This is a contradiction. Therefore, $\tilde{c}_{\tilde{w}} > 0$ as desired.
\end{proof}

Now we begin to analyze and characterize the maximum redundant edge set problem for an edge weighted directed graph $G = (V,E,c(E))$. The main result in this subsection is concerned with a decomposition of the set of decision variables (i.e., $E$), induced by the equivalence class partitioning in Definition~\ref{def:eq_class_notations}:
\begin{equation} \label{eqn:E_decomp}
E = E_0 \cup (E_1^r \cup E_2^r \cup \ldots \cup E_K^r) \cup (E_1^c \cup E_2^c \cup \ldots \cup E_K^c).
\end{equation}
In the following, Lemma~\ref{thm:res_union} and Lemma~\ref{thm:cij_gr_dij} are first introduced as components of the proof of subsequent lemmas. After that, Lemma~\ref{thm:EkR} states that $(E_1^r \cup E_2^r \cup \ldots \cup E_K^r)$ should always be included in any maximum redundant edge set of $G$. Lemma~\ref{thm:Eijc_R} delivers a similar but less straightforward result. Apart from other properties to be discussed, Lemma~\ref{thm:Eijc_R} states that at most one member for each $E_{ij}$ (recall that $E_0 = \mathop{\cup}\limits_{i,j} E_{ij}$) can be excluded from any maximum redundant edge set.

\begin{lemma} \label{thm:res_union}
Let $G = (V,E,c(E))$ be an edge weighted directed graph without negative weight closed walks, and let $R$ be a redundant edge set of $G$. Suppose $A \subseteq E$ satisfies the property that for each $(i,j) \in A$ with edge weight $c_{ij}$, the following condition holds:
\begin{equation} \label{eqn:resu_rp_ij}
\text{$\exists \; w_{ij} : i \leadsto j$ in $\Big(V,E \setminus (R \cup A), c\big(E \setminus (R \cup A)\big)\Big)$ such that $c_{w_{ij}} \le c_{ij}$}.
\end{equation}
Then, $R \cup A$ is also a redundant edge set of $G$.
\end{lemma}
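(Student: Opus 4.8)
The plan is to verify directly that $R \cup A$ meets the definition of a redundant edge set (Definition~\ref{def:redundant_edge_set}): for every $(u,v) \in R \cup A$ one must exhibit a path in $(V, E \setminus (R \cup A), c(E \setminus (R \cup A)))$ from $u$ to $v$ of weight at most $c_{uv}$. Since $G$ has no negative weight cycles, the Remark following Definition~\ref{def:redundant_edge_set} together with Lemma~\ref{thm:walk2path} lets me relax ``path'' to ``walk'': it suffices to produce, for each $(u,v) \in R \cup A$, a \emph{walk} $w_{uv}$ lying entirely in $E \setminus (R \cup A)$ with $c_{w_{uv}} \le c_{uv}$, from which Lemma~\ref{thm:walk2path} then extracts the required path of no greater weight.

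I would split the edges of $R \cup A$ into two cases. For $(u,v) \in A$, hypothesis~(\ref{eqn:resu_rp_ij}) already supplies such a walk by assumption, so nothing remains to be done. The substance of the proof is the case $(u,v) \in R$. Here I first invoke the assumption that $R$ itself is a redundant edge set of $G$ to obtain a path $p_{uv}$ in $E \setminus R$ with $c_{p_{uv}} \le c_{uv}$. This path avoids $R$, but it may still traverse edges of $A$, which must be eliminated.

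To remove those, I would replace every edge $(i,j) \in A$ appearing in $p_{uv}$ by the walk $w_{ij}$ guaranteed by~(\ref{eqn:resu_rp_ij}). The decisive feature of the hypothesis is that each such $w_{ij}$ lies in $E \setminus (R \cup A)$, i.e., it simultaneously avoids both $R$ and $A$. Consequently, the concatenated walk obtained after substituting all $A$-edges consists only of those original edges of $p_{uv}$ that already lay in $(E \setminus R) \setminus A = E \setminus (R \cup A)$ together with the replacement walks, all of which lie in $E \setminus (R \cup A)$; hence the resulting walk $\hat{w}_{uv}$ lies entirely in $E \setminus (R \cup A)$. Because each replacement satisfies $c_{w_{ij}} \le c_{ij}$, the total weight does not increase, so $c_{\hat{w}_{uv}} \le c_{p_{uv}} \le c_{uv}$. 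Applying Lemma~\ref{thm:walk2path} to $\hat{w}_{uv}$ then yields the desired path, completing the verification.

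I would expect the only real subtlety, and the point worth emphasizing, to be why a \emph{single} round of substitution suffices here, in contrast with the iterative replacement argument of Lemma~\ref{thm:union}. The reason is precisely that~(\ref{eqn:resu_rp_ij}) provides, for the $A$-edges, replacement walks that already avoid $R \cup A$, so no newly introduced edge can belong to either $R$ or $A$ and thereby trigger a further round of substitution. This also explains why no appeal to the positive-cycle and pigeonhole termination argument of Lemma~\ref{thm:union} is needed; the no-negative-weight-cycle assumption enters only through the walk-to-path reduction of Lemma~\ref{thm:walk2path}.
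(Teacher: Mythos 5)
Your proposal is correct and follows essentially the same argument as the paper's proof: obtain the replacement path in $E \setminus R$ for each edge of $R$, substitute each $A$-edge on it by the walk from~(\ref{eqn:resu_rp_ij}) (which already avoids $R \cup A$), and invoke Lemma~\ref{thm:walk2path} to convert the resulting walks into paths. Your closing observation about why one round of substitution suffices, in contrast with the iterative argument in Lemma~\ref{thm:union}, is accurate and consistent with the paper's reasoning.
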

\begin{proof}
If $R = \emptyset$ then the statement is true because (\ref{eqn:resu_rp_ij}) is a restatement of Definition~\ref{def:redundant_edge_set} for $A$. Similarly, the statement holds trivially when $A = \emptyset$. Hence, for the rest of the proof we assume $R \neq \emptyset$, $A \neq \emptyset$. Since $R$ is a redundant edge set of $G$, by Definition~\ref{def:redundant_edge_set} for each $(u,v) \in R$ there exists a path $\hat{p}_{uv} : u \leadsto v$ in $(V, E \setminus R, c(E \setminus R))$ satisfying $c_{\hat{p}_{uv}} \le c_{uv}$. The path $\hat{p}_{uv}$ might include as parts the edges in $A$. However, we can substitute each $(i,j) \in A$ that is part of $\hat{p}_{uv}$ with the corresponding replacement walk $w_{ij}$ in (\ref{eqn:resu_rp_ij}). The outcome is a walk $w_{uv} : u \leadsto v$ in $(V, E \setminus (R \cup A), c(E \setminus (R \cup A)))$ such that $c_{w_{uv}} \le c_{\hat{p}_{uv}} \le c_{uv}$. Furthermore, by applying Lemma~\ref{thm:walk2path} with $w_{uv}$, we establish that for all $(u,v) \in R$,
\begin{equation} \label{eqn:resu_rp_uv}
\text{$\exists \; p_{uv} : u \leadsto v$ in $\Big(V,E \setminus (R \cup A), c\big(E \setminus (R \cup A)\big)\Big)$ s.t.~$c_{p_{uv}} \le c_{uv}$}.
\end{equation}
Combing (\ref{eqn:resu_rp_ij}) (again, with an application of Lemma~\ref{thm:walk2path}) and (\ref{eqn:resu_rp_uv}) yields the desired statement that $R \cup A$ is a redundant edge set of $G$.
\end{proof}

\begin{lemma} \label{thm:cij_gr_dij}
Let $G = (V,E,c(E))$ be an edge weighted directed graph without negative weight closed walks. For $u,v \in V$, let $d_{uv}$ be the minimum walk weight from $u$ to $v$ defined in Definition~\ref{def:min_walk_weight} for $G$. Let $R$ be any redundant edge set of $G$. If $(i,j) \in E$ with edge weight satisfying $c_{ij} > d_{ij}$, then $R \cup \{(i,j)\}$ is a redundant edge set of $G$.
\end{lemma}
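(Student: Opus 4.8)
The plan is to reduce the claim to Lemma~\ref{thm:res_union} applied with $A = \{(i,j)\}$. By that lemma it suffices to exhibit a walk $w_{ij} : i \leadsto j$ lying entirely in $\big(V, E \setminus (R \cup \{(i,j)\}), c(E \setminus (R \cup \{(i,j)\}))\big)$ with $c_{w_{ij}} \le c_{ij}$, since then condition~(\ref{eqn:resu_rp_ij}) holds and $R \cup \{(i,j)\}$ is a redundant edge set. If $(i,j) \in R$ the statement is immediate because $R \cup \{(i,j)\} = R$, so I would assume $(i,j) \notin R$, i.e.\ $(i,j) \in E \setminus R$.

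First I would record a weight lower bound: any walk from $i$ to $j$ that traverses the edge $(i,j)$ has weight at least $c_{ij}$. Splitting such a walk at its first traversal of $(i,j)$ writes it as $i \leadsto i \;\rightarrow\; j \leadsto j$, where the prefix $i \leadsto i$ and the suffix $j \leadsto j$ are closed walks and hence have nonnegative weight by Assumption~\ref{asm:no_negative_cycle}; the total weight is therefore at least $c_{ij}$. Since $c_{ij} > d_{ij}$, a minimum weight walk $\hat{w}_{ij}$ from $i$ to $j$ attaining $d_{ij}$ (which exists by Definition~\ref{def:min_walk_weight}) cannot use $(i,j)$, but more importantly this same bound will be reused below to discard $(i,j)$ from the walk I ultimately construct.

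Next I would remove the edges of $R$ from $\hat{w}_{ij}$. Because $R$ is a redundant edge set, every $(u,v) \in R$ appearing in $\hat{w}_{ij}$ may be replaced by a path in $E \setminus R$ of weight at most $c_{uv}$ (Definition~\ref{def:redundant_edge_set}); since these replacement paths contain no edge of $R$, a single pass over the finitely many $R$-edges of $\hat{w}_{ij}$ yields a walk $w$ from $i$ to $j$ lying in $E \setminus R$ with $c_w \le c_{\hat{w}_{ij}} = d_{ij}$. The only remaining worry is that a replacement path might reintroduce the edge $(i,j)$, which lives in $E \setminus R$. Here the weight bound from the previous step closes the argument: if $w$ used $(i,j)$ then $c_w \ge c_{ij} > d_{ij} \ge c_w$, a contradiction, so $w$ must avoid $(i,j)$ and therefore lies in $E \setminus (R \cup \{(i,j)\})$ with $c_w \le d_{ij} \le c_{ij}$. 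This is exactly the walk required by~(\ref{eqn:resu_rp_ij}), and invoking Lemma~\ref{thm:res_union} completes the proof.

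The main obstacle is precisely the one flagged above: guaranteeing that after deleting the $R$-edges the resulting walk does not sneak the edge $(i,j)$ back in through a replacement path. The idea that resolves it is that the no-negative-closed-walk hypothesis forces any $(i,j)$-using walk to be at least as heavy as $c_{ij}$, whereas the constructed walk is strictly lighter, so the two facts are incompatible. No separate case analysis on cycles is needed, since Lemma~\ref{thm:res_union} already performs the final conversion of the walk into the path demanded by Definition~\ref{def:redundant_edge_set}.
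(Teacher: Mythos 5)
Your proof is correct and follows essentially the same route as the paper's: take a walk from $i$ to $j$ of weight at most $d_{ij} < c_{ij}$, splice out the $R$-edges via their replacement paths, rule out any occurrence of $(i,j)$ by the nonnegativity of the resulting closed-walk prefix and suffix, and finish with Lemma~\ref{thm:res_union} applied to $A = \{(i,j)\}$. The only cosmetic difference is that you isolate the weight lower bound for $(i,j)$-using walks as a separate observation and dispose of the trivial case $(i,j) \in R$ up front; the substance is identical.
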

\begin{proof}
$c_{ij} > d_{ij}$ implies that there exists a walk $w_{ij} : i \leadsto j$ in $G$ with weight $c_{w_{ij}} < c_{ij}$. The walk $w_{ij}$ might contain edges in $R$. However, by Definition~\ref{def:redundant_edge_set} each edge of $w_{ij}$ that is in $R$ can be replaced by another path in $(V,E \setminus R, c(E \setminus R)$ with no greater weight. Hence, there exists a walk $\hat{w}_{ij}$ in $(V,E \setminus R, c(E \setminus R)$ such that $c_{\hat{w}_{ij}} < c_{ij}$. If $(i,j)$ is part of $\hat{w}_{ij}$, then $\hat{w}_{ij}$ is of the form $i \leadsto i \rightarrow j \leadsto j$. The fact that $c_{\hat{w}_{ij}} < c_{ij}$ implies that at least one of the closed walks $i \leadsto i$ and $j \leadsto j$ must have negative weight. This contradicts the statement assumption. Hence, $(i,j)$ cannot be part of $\hat{w}_{ij}$. Therefore, it holds that
\begin{equation} \label{eqn:cij_gr_dij}
\text{$\exists \; \hat{w}_{ij} : i \leadsto j$ in $\Big(V,E \setminus (R \cup \{(i,j)\}), c\big(E \setminus (R \cup \{(i,j)\})\big)\Big)$ s.t.~$c_{\hat{w}_{ij}} \le c_{ij}$}.
\end{equation}
(\ref{eqn:cij_gr_dij}) implies that applying Lemma~\ref{thm:res_union} with $A = \{(i,j)\}$ yields the desired statement.
\end{proof}

\begin{lemma} \label{thm:EkR}
Let $G = (V,E,c(E))$ be an edge weighted directed graph without negative weight closed walks. Let $(i,j) \in E_k^r$ for some $k \in \{1,2,\ldots,K\}$, where $E_k^r$ is defined in (\ref{def:EkR}) in Definition~\ref{def:eq_Ek}, in the context of $G$. Then for any redundant edge set (of $G$) denoted $R$, the union $R \cup \{(i,j)\}$ is also a redundant edge set of $G$. Consequently, $R \cup E_1^r \cup E_2^r \cup \ldots \cup E_K^r$ is also a redundant edge set of $G$.
\end{lemma}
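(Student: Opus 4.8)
The plan is to reduce the first assertion to an essentially immediate invocation of Lemma~\ref{thm:cij_gr_dij}, and then bootstrap the consequence by a short induction that reuses the same lemma. First I would unpack the hypothesis: by the defining equation~(\ref{def:EkR}), membership $(i,j) \in E_k^r$ means precisely that the edge weight strictly exceeds the minimum walk weight, i.e.~$c_{ij} > d_{ij}$, where $d_{ij}$ is computed in $G$ as in Definition~\ref{def:min_walk_weight}. This is exactly the hypothesis of Lemma~\ref{thm:cij_gr_dij}. Since $R$ is an arbitrary redundant edge set of $G$ and $G$ has no negative weight closed walks (the standing assumption of the present lemma), Lemma~\ref{thm:cij_gr_dij} applies verbatim and yields that $R \cup \{(i,j)\}$ is a redundant edge set of $G$. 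This disposes of the first claim with no further work.

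For the consequence, I would argue by induction on the edges of $E_1^r \cup E_2^r \cup \cdots \cup E_K^r$. Enumerate these edges (finitely many, say $e_1, e_2, \ldots, e_m$) in any fixed order, and set $R_0 := R$ and $R_t := R_{t-1} \cup \{e_t\}$ for $t = 1, \ldots, m$. The base case $R_0 = R$ is a redundant edge set by assumption. For the inductive step, note that each $e_t$ lies in some $E_{k}^r$ and therefore satisfies $c_{e_t} > d_{e_t}$, where the minimum walk weight $d_{e_t}$ is still taken in the \emph{original} graph $G$. Applying Lemma~\ref{thm:cij_gr_dij} to the redundant edge set $R_{t-1}$ and the edge $e_t$ shows that $R_t = R_{t-1} \cup \{e_t\}$ is again a redundant edge set of $G$. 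After $m$ steps we obtain that $R_m = R \cup E_1^r \cup \cdots \cup E_K^r$ is a redundant edge set, as claimed.

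The step that needs the most care, and which I would flag explicitly, is the invariance of the quantity $d_{ij}$ throughout the induction. Lemma~\ref{thm:cij_gr_dij} is stated with $d_{uv}$ defined relative to $G$ and with $R$ allowed to be \emph{any} redundant edge set; crucially, it does \emph{not} require recomputing minimum walk weights in the shrinking subgraph $(V, E \setminus R_{t-1}, c(E \setminus R_{t-1}))$. Hence the inequality $c_{e_t} > d_{e_t}$, established once and for all from the defining relation~(\ref{def:EkR}) in $G$, remains the correct hypothesis at every stage even though the ambient redundant edge set is growing. This is exactly why the iteration goes through cleanly: the certificate for each added edge is anchored to the fixed graph $G$ rather than to any intermediate reduction. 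No genuine obstacle arises beyond verifying that this reuse of Lemma~\ref{thm:cij_gr_dij} is legitimate, which it is by the generality of its statement.
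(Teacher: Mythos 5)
Your proof is correct and follows essentially the same route as the paper, which also derives the lemma as a direct consequence of Lemma~\ref{thm:cij_gr_dij} using $c_{ij} > d_{ij}$ from (\ref{def:EkR}); you merely make explicit the finite iteration for the ``consequently'' clause and the (correct) observation that $d_{ij}$ is anchored to $G$ throughout, both of which the paper leaves implicit.
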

\begin{proof}
This is a direct consequence of Lemma~\ref{thm:cij_gr_dij} because by (\ref{def:EkR}) in Definition~\ref{def:eq_Ek} $c_{ij} > d_{ij}$ for $(i,j) \in E_k^r$, with $d_{ij}$ being the minimum walk weight from $i$ to $j$ (among all walks in $G$) defined in Definition~\ref{def:min_walk_weight}.
\end{proof}

\begin{lemma} \label{thm:Eijc_R}
Let $G = (V,E,c(E))$ be an edge weighted directed graph without negative weight closed walks. For any $k \in \{1,2,\ldots,K\}$, $q \in \{1,2,\ldots,K\}$, $k \neq q$, let $E_{kq}$ and $E_{kq}^c$ be defined in Definition~\ref{def:eq_Eij} such that $E_{kq} \neq \emptyset$. If $R$ is a redundant edge set of $G$, then for any $(i,j) \in E_{kq}^c$, the set $(R \cup E_{kq}) \setminus \{(i,j)\}$ is a redundant edge set of $G$.
\end{lemma}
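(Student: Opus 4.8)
The plan is to apply Lemma~\ref{thm:res_union} with the deletion set $A := E_{kq} \setminus \{(i,j)\}$, arranged so that $R \cup A = (R \cup E_{kq}) \setminus \{(i,j)\}$. For this identity to hold I first reduce to the case $(i,j) \notin R$: if $(i,j) \in R$, I replace $R$ by $R \setminus \{(i,j)\}$, which is still a redundant edge set (any subset of a redundant edge set is redundant) and which produces the same set $(R \cup E_{kq}) \setminus \{(i,j)\}$ because $(i,j) \in E_{kq}$. With $(i,j) \notin R$, Lemma~\ref{thm:res_union} reduces the claim to exhibiting, for every $(s,t) \in A$ (so $s \in [v_k]$, $t \in [v_q]$, $(s,t) \neq (i,j)$), a walk $w_{st}$ from $s$ to $t$ in $(V, E \setminus (R \cup A), c(E \setminus (R \cup A)))$ with $c_{w_{st}} \le c_{st}$.

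The candidate is the concatenation $s \leadsto i \rightarrow j \leadsto t$, formed from a minimum-weight intra-class walk from $s$ to $i$ inside $[v_k]$, the edge $(i,j)$, and a minimum-weight intra-class walk from $j$ to $t$ inside $[v_q]$. I first verify its weight. By the antisymmetry and additivity of minimum walk weights within an equivalence class (Lemma~\ref{lem:dij_dji} and Lemma~\ref{lem:dis_dsj}), $d_{si} = d_{v_k i} - d_{v_k s}$ and $d_{jt} = d_{j v_q} - d_{t v_q}$, so the walk weight is $d_{si} + c_{ij} + d_{jt} = (d_{v_k i} + c_{ij} + d_{j v_q}) - d_{v_k s} - d_{t v_q}$. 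Since $(i,j) \in E_{kq}^c$ attains the minimum in (\ref{def:Eijc}), we have $d_{v_k i} + c_{ij} + d_{j v_q} \le d_{v_k s} + c_{st} + d_{t v_q}$, and substituting yields exactly $d_{si} + c_{ij} + d_{jt} \le c_{st}$. This is the computational heart, and it falls out cleanly from the argmin definition of $E_{kq}^c$.

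The delicate step is to realize the two intra-class legs inside $E \setminus (R \cup A)$ while keeping their weights equal to $d_{si}$ and $d_{jt}$. Because the legs stay within $[v_k]$ and $[v_q]$, they use only intra-class edges, which are disjoint from $A \subseteq E_{kq}$ (as $k \neq q$); and the edge $(i,j)$ lies in $E \setminus (R \cup A)$ by the reduction above. So it remains to route the legs through $E \setminus R$ at the correct weight. For this I would use that removing a redundant edge set yields an equivalent precedence system, so by Lemma~\ref{thm:implication} the minimum walk weight between every pair of nodes is unchanged in $G^c := (V, E \setminus R, c(E \setminus R))$; in particular $d_{si}$ and $d_{jt}$ agree in $G^c$ and in $G$. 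Moreover, by Lemma~\ref{thm:sim_preservation} and Lemma~\ref{thm:eq_class_preservation} the equivalence classes are identical in $G^c$, so Lemma~\ref{lem:wij} applied to $G^c$ produces walks $s \leadsto i$ and $j \leadsto t$ lying entirely inside the respective classes of $G^c$ (hence inside $E \setminus R$) and attaining precisely $d_{si}$ and $d_{jt}$. Concatenating them with $(i,j)$ gives the required $w_{st}$ in $E \setminus (R \cup A)$ with $c_{w_{st}} \le c_{st}$; if a genuine path is wanted, Lemma~\ref{thm:walk2path} converts $w_{st}$ to a path of no larger weight.

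The hard part will be this penultimate step: guaranteeing that the intra-class connecting walks can be rerouted through $E \setminus R$ without inflating their weights above $d_{si}$ and $d_{jt}$. This is exactly where the class-preservation results and the invariance of minimum walk weights under redundant-edge removal (via Lemma~\ref{thm:implication}) are indispensable, since a naive within-class walk taken in $G$ could traverse edges of $R$ and, after replacement, become heavier than $c_{st}$ allows. By contrast, the argmin inequality and the $(i,j) \in R$ bookkeeping are routine.
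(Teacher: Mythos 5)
Your proposal is correct and follows essentially the same route as the paper's proof: reduce to $R' = R\setminus\{(i,j)\}$ and $A = E_{kq}\setminus\{(i,j)\}$, build the replacement walk $s \leadsto i \rightarrow j \leadsto t$ from minimum-weight intra-class legs (whose existence inside $E\setminus R'$ at weight $d_{si}$, $d_{jt}$ comes from class preservation and Lemma~\ref{lem:wij}), bound its weight by the argmin property of $E_{kq}^c$ together with Lemmas~\ref{lem:dij_dji} and~\ref{lem:dis_dsj}, and conclude via Lemma~\ref{thm:res_union}. The only cosmetic difference is that your appeal to Lemma~\ref{thm:implication} for invariance of minimum walk weights is unnecessary, since Lemma~\ref{lem:wij} already asserts that the intra-class walk in the subgraph attains the $G$-minimum weight.
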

\begin{proof}
Since $(R \cup E_{kq}) \setminus \{(i,j)\} = (R \setminus \{(i,j)\}) \cup (E_{kq} \setminus \{(i,j)\}) := R^\prime \cup E_{kq}^\prime$, we will show that $R^\prime \cup E_{kq}^\prime$ is a redundant edge set according to Definition~\ref{def:redundant_edge_set}. First, it is claimed that for each $(u,v) \in E'_{kq}$, there exists a replacement walk $w_{uv} : u \leadsto i \rightarrow j \leadsto v$ satisfying
\begin{subequations}
\begin{align}
&\text{$w_{uv}$ in $(V,((E_k \cup E_q) \setminus R') \cup \{(i,j)\},c(((E_k \cup E_q) \setminus R') \cup \{(i,j)\}))$}, \label{eqn:Eijc_R_wuv0} \\
&\text{the weight of $w_{uv}$, denoted $c_{w_{uv}}$, satisfies $c_{w_{uv}} \le c_{uv}$}. \label{eqn:Eijc_weight}
\end{align}
\end{subequations}
The argument for the existence of $w_{uv}$ and (\ref{eqn:Eijc_R_wuv0}) is as follows:  associated with $E_{kq}$ let $[v_k]$ and $[v_q]$ be the equivalence classes in $G$, as defined in Definition~\ref{def:eq_vk}. Since $R^\prime \subseteq R$ and $R$ is a redundant edge set of $G$, $R'$ is also a redundant edge set of $G$. Hence, $[v_k]$ and $[v_q]$ remain equivalence classes in $(V, E \setminus R', c(E \setminus R'))$. Consequently, Lemma~\ref{lem:wij} implies that there is a walk $u \leadsto i$ in $([v_k], E_k \setminus R', c(E_k \setminus R'))$. In addition, the weight of the walk is $d_{ui}$, the minimum walk weight $u \leadsto i$ in $G$ in Definition~\ref{def:min_walk_weight}. Similarly, Lemma~\ref{lem:wij} implies that there is a walk $j \leadsto v$ in $([v_q], E_q \setminus R', c(E_q \setminus R'))$ with walk weight $d_{j v}$. Combining the walks $u \leadsto i$, $j \leadsto v$ and the edge $(i,j)$, we conclude that $w_{uv}$ exists and (\ref{eqn:Eijc_R_wuv0}) is satisfied. To show (\ref{eqn:Eijc_weight}), first note that 
\begin{equation} \label{eqn:E_ij_redundant}
c_{w_{uv}} = d_{u i} + c_{ij} + d_{jv},
\end{equation}
where $c_{ij}$ is the weight of edge $(i,j)$. By Lemma~\ref{lem:dis_dsj}, it holds that $d_{ui} = d_{u v_k} + d_{v_k i}$ since $u,i,v_k \in [v_k]$. Similarly, it holds that $d_{jv} = d_{j v_q} + d_{v_q v}$. Hence, (\ref{eqn:E_ij_redundant}) can be rewritten as
\begin{displaymath}
\begin{array}{rcl}
c_{w_{uv}} & = & d_{u v_k} + \overbrace{d_{v_k i} + c_{ij} + d_{j v_q}}^{\text{$= \tilde{c}_{v_k v_q} \le d_{v_k u} + c_{uv} + d_{v v_q}$ by (\ref{def:Eijc}) in Definition~\ref{def:eq_Eij}}} + d_{v_q v} \vspace{2mm} \\
& \le & \underbrace{d_{u v_k} + d_{v_k u}}_{\text{$ = 0$ by Lemma~\ref{lem:dij_dji}}} + c_{uv} + \underbrace{d_{v v_q} + d_{v_q v}}_{\text{$ = 0$ by Lemma~\ref{lem:dij_dji}}} \vspace{2mm} \\
& = & c_{uv}.
\end{array}
\end{displaymath}
Therefore, a replacement walk $w_{uv}: u \leadsto i \rightarrow j \leadsto v$ satisfying (\ref{eqn:Eijc_R_wuv0}) and (\ref{eqn:Eijc_weight}) exists. Since it holds that
\begin{equation} \label{eqn:Eijc_R1}
(i,j) \notin R^\prime \implies R^\prime \setminus (i,j) = R^\prime, \;\; \text{and} \;\; E_{kq}^\prime \cap (E_k \cup E_q \cup \{(i,j)\}) = \emptyset,
\end{equation}
The set $((E_k \cup E_q) \setminus R^\prime) \cup \{(i,j)\}$ can be rewritten as
\begin{displaymath}
\begin{array}{cl}
& ((E_k \cup E_q) \setminus R^\prime) \cup \{(i,j)\} \vspace{1mm} \\
= & (E_k \cup E_q \cup \{(i,j)\}) \setminus (R^\prime \setminus \{(i,j)\}) \vspace{1mm} \\
\overset{(\ref{eqn:Eijc_R1})}{=} & (E_k \cup E_q \cup \{(i,j)\}) \setminus (R^\prime \cup E_{kq}^\prime) \vspace{1mm} \\
\subseteq & E \setminus (R^\prime \cup E_{kq}^\prime).
\end{array}
\end{displaymath}
This implies that (\ref{eqn:Eijc_R_wuv0}) can be modified to state that 
\begin{equation} \label{eqn:Eijc_R_wuv}
\text{$w_{uv}$ is in $(V, E \setminus (R^\prime \cup E_{kq}^\prime), c(E \setminus (R^\prime \cup E_{kq}^\prime)))$}.
\end{equation}
Finally, (\ref{eqn:Eijc_R_wuv}) and (\ref{eqn:Eijc_weight}) imply that (\ref{eqn:resu_rp_ij}) in Lemma~\ref{thm:res_union} holds with $R = R'$ and $A = E'_{kq}$. Hence, Lemma~\ref{thm:res_union} guarantees that $R' \cup E'_{kq}$ is a redundant edge set of $G$.

%it can be shown (and the details are omitted) that for each $(u,v) \in E_{kq}$ there exists a walk $\hat{w}_{uv} : u \leadsto i \rightarrow j \leadsto v$ such that $\hat{w}_{uv}$ is in $(V,((E_k \cup E_q) \setminus R^\prime) \cup \{(i,j)\},c(((E_k \cup E_q) \setminus R^\prime) \cup \{(i,j)\}))$ and $c_{\hat{w}_{uv}} \le c_{uv}$, where $E_k$ and $E_q$ are defined in Definition~\ref{def:eq_Ek}. 
\end{proof}

The implication of Lemma \ref{thm:EkR} and Lemma \ref{thm:Eijc_R} is as follows: corresponding to the edge set decomposition in (\ref{eqn:E_decomp}), any maximum redundant edge set must be a member of
\begin{equation} \label{eqn:mres_decomp_general}
\mathcal{R} = \left\{R_0 \cup \left(\mathop{\cup}\limits_{k = 1}^K E_k^r \cup R_k\right) \right\}
= \left\{\left(\mathop{\cup}\limits_{1 \le i \neq j \le K} R_{ij}\right) \cup \left(\mathop{\cup}\limits_{k = 1}^K E_k^r \cup R_k\right) \right\}
\end{equation}
where
\begin{equation} \label{eqn:mres_dg_properties}
\begin{aligned}
& \text{For $k \ge 1$, \; $E_k^r$ is defined by (\ref{def:EkR}) in Definition~\ref{def:eq_Ek}},\\
& R_0 \subseteq E_0, \quad R_0 = \mathop{\cup}\limits_{1 \le i \neq j \le K} \; R_{ij}, \\
& R_{ij} \subseteq E_{ij}, \quad |E_{ij} \setminus R_{ij}| \le 1, \quad (E_{ij} \setminus R_{ij}) \subseteq E_{ij}^c, \\
& R_k \subseteq E_k^c, \quad k \in \{1,2,\ldots,K\}.
\end{aligned}
\end{equation}
In (\ref{eqn:mres_decomp_general}) the inclusion of $E_k^r$ for $1 \le k \le K$ is due to Lemma~\ref{thm:EkR}. For now, $R_{ij}$ and $R_k$ are not fully specified, and the partial characterization of $R_{ij}$ in (\ref{eqn:mres_dg_properties}) is due to Lemma~\ref{thm:Eijc_R}. Furthermore, Lemma~\ref{thm:Eijc_R} suggests that, instead of searching over $\mathcal{R}$ for a maximum redundant edge set, it is without loss of generality to search over the following restricted set
\begin{equation} \label{eqn:mres_decomp}
\begin{array}{ccl}
\mathcal{R}^d & := & \left\{\Big(\mathop{\cup}\limits_{1 \le i \neq j \le K} \big((E_{ij} \setminus \{(v_{ij}^i, v_{ij}^j)\}) \cup R_{ij}^d\big)\Big) \cup (\mathop{\cup}\limits_{k = 1}^K E_k^r \cup R_k)\right\} \vspace{2mm} \\
& = & \left\{\big((E_0 \setminus E_0^d) \cup R_0^d \big)\cup (\mathop{\cup}\limits_{k = 1}^K E_k^r \cup R_k) \right\},
\end{array}
\end{equation}
where
\begin{displaymath}
R_{ij}^d \subseteq \{(v_{ij}^i, v_{ij}^j)\}, \quad R_0^d = \mathop{\cup}\limits_{1 \le i \neq j \le K} R_{ij}^d, \quad R_0^d \subseteq E_0^d, \quad R_k \subseteq E_k^c,
\end{displaymath}
and we note that $E_0^d$ (i.e., the collection of all edges $(v_{ij}^i, v_{ij}^j)$) is defined in (\ref{eqn:E0}). The restriction from (\ref{eqn:mres_decomp_general}) to (\ref{eqn:mres_decomp}) amounts to the following specializations
\begin{displaymath}
\begin{array}{ccl}
\text{$R_{ij}$ in (\ref{eqn:mres_decomp_general})} & \rightarrow & \text{$(E_{ij} \setminus \{(v_{ij}^i, v_{ij}^j)\}) \cup R_{ij}^d$ in (\ref{eqn:mres_decomp})}, \vspace{1mm} \\
\text{$R_0$ in (\ref{eqn:mres_decomp_general})} & \rightarrow & \text{$(E_0 \setminus E_0^d) \cup R_0^d$ in (\ref{eqn:mres_decomp})}.
\end{array}
\end{displaymath}
The restriction is justified as follows: suppose $R^\star \in \mathcal{R}$ in (\ref{eqn:mres_decomp_general}),
\begin{displaymath}
R^\star = \left\{\left(\mathop{\cup}\limits_{1 \le i \neq j \le K} R_{ij}^\star \right) \cup \left(\mathop{\cup}\limits_{k = 1}^K E_k^r \cup R_k^\star \right) \right\}
\end{displaymath}
is a maximum redundant edge set, with appropriate choices (to be discussed in the sequel) of $R_{ij}^\star$ and $R_k^\star$ satisfying (\ref{eqn:mres_dg_properties}). Define $R_{ij}^{d \star}$ by
\begin{displaymath}
R_{ij}^{d \star} := \begin{cases} \{(v_{ij}^i, v_{ij}^j)\}, & \text{if $R_{ij}^\star = E_{ij}^c$} \\ \emptyset, & \text{if $R_{ij}^\star \neq E_{ij}^c$} \end{cases},
\end{displaymath}
and define $R^{d \star}$ by
\begin{displaymath}
R^{d \star} = \left\{\Big(\mathop{\cup}\limits_{1 \le i \neq j \le K} \big((E_{ij} \setminus \{(v_{ij}^i, v_{ij}^j)\}) \cup R_{ij}^{d \star}\big)\Big) \cup (\mathop{\cup}\limits_{k = 1}^K E_k^r \cup R_k^\star)\right\}.
\end{displaymath}
By construction, $R^{d \star} \in \mathcal{R}^d$ in (\ref{eqn:mres_decomp}) and $|R^{d \star}| = |R^\star|$. In addition, since $R^\star$ is a maximum redundant edge set, Lemma~\ref{thm:Eijc_R} states that $R^{d \star}$ is also a redundant edge set (and $|R^{d \star}| = |R^\star|$). Hence, $R^{d \star}$ is a maximum redundant edge set. Conversely, define the set-valued function $F : \mathcal{R}^d \mapsto 2^\mathcal{R}$,
\begin{equation} \label{eqn:Er_star_exp}
F(X) = \left\{ R \subseteq \mathcal{R} \; \vline \; 
\begin{array}{l}
R = (\mathop{\cup}\limits_{1 \le i \neq j \le K} R_{ij}) \cup (\mathop{\cup}\limits_{k = 1}^K (E_k^r \cup R_k)), \vspace{1mm} \\
X =  (\mathop{\cup}\limits_{1 \le i \neq j \le K} ((E_{ij} \setminus \{(v_{ij}^i, v_{ij}^j)\}) \cup R_{ij}^{d})) \vspace{1mm} \\
\quad \quad \cup (\mathop{\cup}\limits_{k = 1}^K (E_k^r \cup R_k)), \vspace{1.5mm} \\
R_{ij} := \begin{cases} E_{ij}, & \text{if $R_{ij}^d \neq \emptyset$} \\ E_{ij} \setminus \{(g,h)\}, \;\; (g,h) \in E_{ij}^c, & \text{if $R_{ij}^d = \emptyset$} \end{cases}
\end{array} \right\}.
\end{equation}
Then, if $R^{d \star} \in \mathcal{R}^d$ in (\ref{eqn:mres_decomp}) is a maximum redundant edge set, by Lemma~\ref{thm:Eijc_R} the set $F(R^{d \star}) \subseteq \mathcal{R}$ is the set of all maximum redundant edge sets of the form in (\ref{eqn:mres_decomp_general}) that share the same $R_k^\star$'s as in $R^{d \star}$.

Next, we focus on the specialized maximum redundant edge sets in $\mathcal{R}^d$ in (\ref{eqn:mres_decomp}). Let $R^d$ denote any member of $\mathcal{R}^d$. The second expression in (\ref{eqn:mres_decomp}) reveals that the components of $R^d$ which are not fully specified are $R_0^d \subseteq E_0^d$ and $R_k \subseteq E_k^c$ for $k = 1,2,\ldots,K$. First, we examine the conditions on these components under which $R^d$ is a redundant edge set (of graph $G = (V,E,c(E))$) according to Definition~\ref{def:redundant_edge_set}. Since the fixed components in (\ref{eqn:mres_decomp}) (i.e., the edges guaranteed to be included in $R^d$) can be written as
\begin{equation} \label{eqn:Rd_eps}
\begin{aligned}
(\mathop{\cup}\limits_{k = 1}^K E_k^r) \cup (E_0 \setminus E_0^d) = \; & (\mathop{\cup}\limits_{k = 1}^K (E_k \setminus E_k^c)) \cup (E_0 \setminus E_0^d) \\
= \; & E \setminus (E_0^d \cup (\mathop{\cup}\limits_{k = 1}^K E_k^c)) \\
:= \; & E \setminus \mathcal{E},
\end{aligned}
\end{equation}
two necessary conditions for $R^d$ to be a redundant edge set are
\begin{equation} \label{eqn:R0d_red_cond}
\begin{array}{l}
\forall (u,v) \in R_0^d, \; \exists \; p_{uv} : u \leadsto v \; \text{in} \vspace{1mm} \\
 \left(V, \mathcal{E} \setminus \big(R_0^d \cup (\mathop{\cup}\limits_{k = 1}^K R_k)\big), c\Big(\mathcal{E} \setminus \big(R_0^d \cup (\mathop{\cup}\limits_{k = 1}^K R_k)\big)\Big)\right), \; \text{with} \; c_{p_{uv}} \le c_{uv},
\end{array}
\end{equation}
and
\begin{equation} \label{eqn:Rk_red_cond}
\begin{array}{l}
\forall \; k \in \{1,2,\ldots,K\}, \; \; \forall (u,v) \in R_k, \; \exists \; p_{uv} : u \leadsto v \; \text{in} \vspace{1mm} \\
  \left(V, \mathcal{E} \setminus \big(R_0^d \cup (\mathop{\cup}\limits_{k = 1}^K R_k)\big), c\Big(\mathcal{E} \setminus \big(R_0^d \cup (\mathop{\cup}\limits_{k = 1}^K R_k)\big)\Big)\right), \; \text{with} \; c_{p_{uv}} \le c_{uv}.
\end{array}
\end{equation}
Conversely, if (\ref{eqn:R0d_red_cond}) and (\ref{eqn:Rk_red_cond}) are satisfied then Lemma~\ref{thm:res_union} can be applied to show that $R^d$ is indeed a redundant edge set of $G$. Lemma~\ref{thm:res_union} is applied in the following settings:
\begin{displaymath}
R \leftarrow (\mathop{\cup}_{k = 1}^K E_k^r) \cup (E_0 \setminus E_0^d) = E \setminus \mathcal{E}, \;\;
A \leftarrow R_0^d \cup (\mathop{\cup}_{k = 1}^K R_k), \;\;
(\ref{eqn:resu_rp_ij}) \leftarrow (\ref{eqn:R0d_red_cond}), (\ref{eqn:Rk_red_cond}),
\end{displaymath}
and the fact that $(\mathop{\cup}_{k = 1}^K E_k^r) \cup (E_0 \setminus E_0^d)$ is a redundant edge set of $G$ is due to Lemma~\ref{thm:EkR} and Lemma~\ref{thm:Eijc_R}. Therefore, $R^d$ is a redundant edge set of $G$ if and only if (\ref{eqn:R0d_red_cond}) and (\ref{eqn:Rk_red_cond}) are satisfied. The following two statements, Lemma~\ref{thm:Ek} and Lemma~\ref{thm:E0}, specify that the conditions in (\ref{eqn:R0d_red_cond}) and (\ref{eqn:Rk_red_cond}) are in fact equivalent to $K+1$ decoupled conditions, one for each set of $R_0^d, R_1, R_2, \ldots, R_K$. These two statements are first described. Then, their consequences are discussed.

\begin{lemma} \label{thm:Ek}
Let $G = (V,E,c(E))$ be an edge weighted directed graph without negative weight closed walks, and let $(i,j) \in E_k^c$ for some $k \in \{1,2,\ldots,K\}$ (see (\ref{def:Ekc}) in Definition~\ref{def:eq_Ek} for $E_k^c$). Let the minimum walk weight $d_{ij}$ (of $G$) be defined in Definition~\ref{def:min_walk_weight}. If $w_{ij} : i \leadsto j$ is a walk in $G$ such that $c_{w_{ij}} = d_{ij}$ (note that $c_{w_{ij}} \ge d_{ij}$ must hold), then all nodes traversed by $w_{ij}$ are in the equivalence class $[v_k]$ (corresponding to $E_k^c$).
\end{lemma}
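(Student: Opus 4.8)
The plan is to fix an arbitrary node $s$ traversed by the minimum-weight walk $w_{ij}$ and show $s \sim i$, which by definition places $s$ in the equivalence class $[v_k]$. Since $(i,j) \in E_k^c \subseteq E_k$, both endpoints $i$ and $j$ lie in $[v_k]$, so Lemma~\ref{lem:dij_dji} supplies the ``antisymmetry'' $d_{ij} = -d_{ji}$ of the minimum walk weight; this identity will be the linchpin of the argument. Note that the hypothesis $(i,j)\in E_k^c$ is needed here only to guarantee $i,j\in[v_k]$.

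First I would split $w_{ij}$ at $s$, writing it as the concatenation of its initial segment $w_{is}: i \leadsto s$ and its terminal segment $w_{sj}: s \leadsto j$, so that $c_{w_{is}} + c_{w_{sj}} = c_{w_{ij}} = d_{ij}$. By Definition~\ref{def:min_walk_weight} we have $c_{w_{is}} \ge d_{is}$ and $c_{w_{sj}} \ge d_{sj}$, and concatenating minimum-weight walks $i \leadsto s$ and $s \leadsto j$ exhibits a single $i \leadsto j$ walk of weight $d_{is}+d_{sj}$, giving the triangle-type bound $d_{ij} \le d_{is} + d_{sj}$. Chaining these yields $d_{ij} = c_{w_{is}} + c_{w_{sj}} \ge d_{is} + d_{sj} \ge d_{ij}$, which forces equality throughout; in particular $d_{ij} = d_{is} + d_{sj}$. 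It is important to derive this additivity directly from minimality rather than from Lemma~\ref{lem:dis_dsj}, since the latter presupposes $s \in [v_k]$, which is exactly what remains to be shown.

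Next I would assemble an explicit zero-weight closed walk through $i$ and $s$: follow a minimum-weight walk $i \leadsto s$ (weight $d_{is}$), then a minimum-weight walk $s \leadsto j$ (weight $d_{sj}$), then close up with a minimum-weight walk $j \leadsto i$ (weight $d_{ji}$, which is finite because $i,j \in [v_k]$ guarantees a $j \leadsto i$ walk). Its total weight is $d_{is} + d_{sj} + d_{ji} = d_{ij} + d_{ji} = d_{ij} - d_{ij} = 0$, using the equality just obtained together with Lemma~\ref{lem:dij_dji}. This closed walk traverses both $i$ and $s$, so Definition~\ref{def:eq_re} gives $i \sim s$, i.e.\ $s \in [v_k]$; ranging over all traversed nodes $s$ completes the proof.

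I do not expect a serious obstacle. The only points requiring care are the degenerate cases $s = i$ or $s = j$, where $s \in [v_k]$ is immediate, and the verification that each invoked minimum walk weight $d_{is}, d_{sj}, d_{ji}$ is finite (the first two because $w_{is}, w_{sj}$ are genuine subwalks of $w_{ij}$, the last because $i$ and $j$ share the class $[v_k]$). The conceptual subtlety is simply to keep the \emph{general} triangle inequality $d_{ij} \le d_{is} + d_{sj}$ distinct from the class-restricted additivity of Lemma~\ref{lem:dis_dsj}, so as to avoid a circular use of membership in $[v_k]$.
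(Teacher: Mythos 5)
Your proof is correct and rests on the same key idea as the paper's: since $i,j\in[v_k]$, Lemma~\ref{lem:dij_dji} gives a return walk $j\leadsto i$ of weight $d_{ji}=-d_{ij}$, which closes the minimum-weight walk into a zero-weight closed walk through the intermediate node, forcing $s\sim i$ by Definition~\ref{def:eq_re}. The only difference is that your detour through the additivity $d_{ij}=d_{is}+d_{sj}$ is unnecessary -- the paper simply concatenates the given walk $w_{ij}$ (which already has weight $d_{ij}$ and already traverses $s$) with $w_{ji}$, obtaining the same zero-weight closed walk in one step.
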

\begin{proof}
By definition $(i,j) \in E_k^c$ means that $i,j \in [v_k]$. Hence, Lemma~\ref{lem:wij} and \ref{lem:dij_dji} states that there exists a walk $w_{ji}: j \leadsto i$ (in $G$) such that $c_{w_{ji}} = -d_{ij} = -c_{ij}$ (the last equality is due to the fact that $(i,j) \in E_k^c$). Let $w_{ij}$ be the walk described in the statement (with weight $c_{w_{ij}} = d_{ij} = c_{ij}$). If $w_{ij}$ traverses a node $t \notin [v_k]$, then by concatenating $w_{ij}$ and $w_{ji}$ we obtain a closed walk $w_{tt}: i \leadsto t \leadsto j \leadsto i$. The weight of $w_{tt}$ is $c_{w_{tt}} = c_{w_{ij}} + c_{w_{ji}} = c_{ij} - c_{ij} = 0$. Therefore, the assumption that $t \notin [v_k]$ leads to the contradictory conclusion that $t \sim i$ and hence $t \in [v_k]$. Consequently, the walk $w_{ij}$ cannot traverse any node $t \notin [v_k]$.
\end{proof}

\begin{lemma} \label{thm:E0}
Let $G = (V,E,c(E))$ be an edge weighted directed graph without negative weight closed walks. In addition, let the following be assumed in the context of $G$:
\begin{enumerate}
\item $[v_1], [v_2], \ldots, [v_K]$ denote the equivalence classes in $V$ induced by relation $\sim$ (see Definition~\ref{def:eq_vk}).
\item $E_0^d$, $E_k^c$ for $k = 1,2,\ldots,K$ are defined in Definition~\ref{def:eq_E0} and \ref{def:eq_Ek}, respectively. Let $\mathcal{E} := E_0^d \cup (\cup_{q = 1}^K E_q^c)$.
\item For $k \in \{1,\ldots,K\}$, let $R_k \subseteq E_k^c$ be given and assume that $[v_1], \ldots, [v_K]$ remain equivalence classes in $(V, \mathcal{E} \setminus (\mathop{\cup}_{q = 1}^K R_q),c(\mathcal{E} \setminus (\mathop{\cup}_{q = 1}^K R_q)))$.
\item $\tilde{G} = (\tilde{V}, \tilde{E}_0, \tilde{c}(\tilde{E}_0))$ is the condensation of $G$, in accordance with the designation of representing nodes in $[v_1], \ldots, [v_K]$. $\tilde{E}_0$ is the set of all edges in $\tilde{G}$ (see Definition~\ref{def:condensation}).
\item Let $R_0^d \subseteq E_0^d$ be given, and let $\tilde{R}_0 \subseteq \tilde{E}_0$ be defined to correspond to $R_0^d$ in the sense that $(v_i, v_j) \in \tilde{R}_0$ if and only if $(v_{ij}^i, v_{ij}^j) \in R_0^d$.
\end{enumerate}

Then, the following two statements are equivalent:
\begin{description}
\item[(a)] For each $(v_{ij}^i,v_{ij}^j) \in R_0^d$, there is a (replacement) walk $w_{v_{ij}^i v_{ij}^j}$ in graph $\big(V, \mathcal{E} \setminus (R_0^d \cup (\cup_{k = 1}^K R_k)), c(\mathcal{E} \setminus (R_0^d \cup (\cup_{k = 1}^K R_k)))\big)$ satisfying $c_{v_{ij}^i v_{ij}^j} \ge c_{w_{v_{ij}^i v_{ij}^j}}$.
\item[(b)] For each $(v_i,v_j) \in \tilde{R}_0$, there is a (replacement) walk $\tilde{w}_{v_i v_j}$ in graph $\big(\tilde{V}, \tilde{E}_0 \setminus \tilde{R}_0, \tilde{c}(\tilde{E}_0 \setminus \tilde{R}_0)\big)$ satisfying $\tilde{c}_{v_i v_j} \ge \tilde{c}_{\tilde{w}_{v_i v_j}}$.
\end{description}
\end{lemma}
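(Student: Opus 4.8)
The statement asserts that the inter-class representing edges collected in $R_0^d$ are \emph{collectively} replaceable inside the reduced graph on $\mathcal{E}$ if and only if the corresponding edges $\tilde{R}_0$ are replaceable inside the condensation $\tilde{G}$. The plan is to prove the two implications by exhibiting an explicit correspondence between walks living in $\big(V,\mathcal{E}\setminus(R_0^d\cup(\cup_k R_k)),\ldots\big)$ and walks living in $\big(\tilde{V},\tilde{E}_0\setminus\tilde{R}_0,\ldots\big)$. The engine of the whole argument is that inside any equivalence class $[v_k]$ one may travel between any two nodes at exactly the minimum walk weight $d$ (Lemma~\ref{lem:wij}), with additivity $d_{ij}=d_{is}+d_{sj}$ for $s\in[v_k]$ (Lemma~\ref{lem:dis_dsj}) and the cancellation $d_{ab}+d_{ba}=0$ (Lemma~\ref{lem:dij_dji}); together these make the intra-class detours telescope and vanish, leaving precisely the condensation edge weights of (\ref{eqn:viji_vijj}). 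I will first dispose of the trivial case $R_0^d=\emptyset$ (equivalently $\tilde{R}_0=\emptyset$), where both statements hold vacuously by the correspondence in the fifth hypothesis, and then treat $R_0^d\neq\emptyset$.

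For the direction (b)~$\Longrightarrow$~(a), I would fix $(v_{ij}^i,v_{ij}^j)\in R_0^d$ with condensation counterpart $(v_i,v_j)\in\tilde{R}_0$, take a replacement walk $\tilde{w}:v_i=v_{\ell_0}\to v_{\ell_1}\to\cdots\to v_{\ell_p}=v_j$ in $\tilde{E}_0\setminus\tilde{R}_0$, and \emph{lift} it to $G$. Each condensation edge $(v_{\ell_k},v_{\ell_{k+1}})$ is lifted to its representing edge in $E_0^d$ (which lies in $E_0^d\setminus R_0^d$ precisely because $(v_{\ell_k},v_{\ell_{k+1}})\notin\tilde{R}_0$), and consecutive representing edges are joined by intra-class connector walks. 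The connector walks exist inside $E_k^c\setminus R_k$ and attain the minimum weight $d$: this is where the hypothesis that $[v_1],\ldots,[v_K]$ remain equivalence classes of the reduced graph is used, via Lemma~\ref{lem:wij} applied to that subgraph. Splitting each connector cost through the representing node $v_{\ell_k}$ with Lemma~\ref{lem:dis_dsj}, the total weight telescopes to $d_{v_{ij}^i v_i}+\tilde{c}_{\tilde{w}}+d_{v_j v_{ij}^j}$; invoking (b), then (\ref{eqn:viji_vijj}) to expand $\tilde{c}_{v_i v_j}=d_{v_i v_{ij}^i}+c_{v_{ij}^i v_{ij}^j}+d_{v_{ij}^j v_j}$, and finally the pairwise cancellation of Lemma~\ref{lem:dij_dji}, bounds this by $c_{v_{ij}^i v_{ij}^j}$, which is exactly (a).

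For the reverse direction (a)~$\Longrightarrow$~(b), I would take the replacement walk $w$ guaranteed by (a) and \emph{project} it onto $\tilde{G}$. Since every edge of $w$ lies in $\mathcal{E}=E_0^d\cup(\cup_k E_k^c)$ minus the removed sets, $w$ alternates between intra-class segments (edges of some $E_k^c\setminus R_k$) and single inter-class edges of $E_0^d\setminus R_0^d$; reading off only the inter-class edges yields a walk $\tilde{w}$ from $v_i$ to $v_j$ in $\tilde{E}_0\setminus\tilde{R}_0$ (nonempty because $[v_i]\neq[v_j]$ forces at least one inter-class step). Lower-bounding each intra-class segment by the corresponding $d$ (Definition~\ref{def:min_walk_weight}) and telescoping through the representing nodes with Lemma~\ref{lem:dis_dsj} gives $c_w\ge d_{v_{ij}^i v_i}+\tilde{c}_{\tilde{w}}+d_{v_j v_{ij}^j}$; combining with $c_w\le c_{v_{ij}^i v_{ij}^j}$ from (a), with (\ref{eqn:viji_vijj}), and with Lemma~\ref{lem:dij_dji} yields $\tilde{c}_{\tilde{w}}\le\tilde{c}_{v_i v_j}$, which is (b).

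The main obstacle is the careful bookkeeping that makes the two telescoping sums match the condensation weights of (\ref{eqn:viji_vijj}) exactly, together with the twin side-conditions that (i) in the lifting direction the inserted connector walks genuinely stay inside the reduced intra-class edge sets $E_k^c\setminus R_k$ and attain weight $d$, and (ii) the representing-edge $\leftrightarrow$ condensation-edge dictionary respects the correspondence $R_0^d\leftrightarrow\tilde{R}_0$ so that lifted and projected walks never use removed edges. Step (i) is the crux and is exactly why the hypothesis that the equivalence classes survive in $\big(V,\mathcal{E}\setminus(\cup_k R_k),\ldots\big)$ is indispensable; everything else reduces to applications of Lemmas~\ref{lem:wij}, \ref{lem:dij_dji}, and \ref{lem:dis_dsj}.
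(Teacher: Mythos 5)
Your proposal is correct and follows essentially the same route as the paper's proof: both arguments rest on the lift/project correspondence between walks in the reduced graph on $\mathcal{E}$ and walks in the reduced condensation, with the intra-class connector walks supplied by Lemma~\ref{lem:wij} (this is where the hypothesis that the equivalence classes survive is used, exactly as you identify), and the weights telescoping through the representing nodes via Lemmas~\ref{lem:dij_dji} and~\ref{lem:dis_dsj} to match the condensation weights of (\ref{eqn:viji_vijj}). The only cosmetic difference is that the paper packages both directions into a single two-way walk/weight correspondence while you argue the two implications separately with a lower bound in the projection direction and an exact minimum in the lifting direction.
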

\begin{proof}
For convenience, we denote 
\begin{displaymath}
\begin{aligned}
\mathcal{G}^c &:= (V, \mathcal{E} \setminus (R_0^d \cup (\cup_{k = 1}^K R_k)), c(\mathcal{E} \setminus (R_0^d \cup (\cup_{k = 1}^K R_k)))), \\
\tilde{G}^c &:= (\tilde{V}, \tilde{E}_0 \setminus \tilde{E}_r^0, \tilde{c}(\tilde{E}_0 \setminus \tilde{E}_r^0)).
\end{aligned}
\end{displaymath}
Because of the definition of $E_0^d$, in $\mathcal{G}^c$ every walk $w_{v_{ij}^i v_{ij}^j}$ from $v_{ij}^i$ to $v_{ij}^j$ is of the form
\begin{equation} \label{eqn:E_0_wij}
\underbrace{v_{ij}^i \leadsto v_{k(1) k(2)}^{k(1)}}_{\text{in $\mathcal{G}_{k(1)}^c$}} \rightarrow \underbrace{v_{k(1) k(2)}^{k(2)} \leadsto v_{k(2) k(3)}^{k(2)}}_{\text{in $\mathcal{G}_{k(2)}^c$}} \rightarrow \dots \rightarrow \underbrace{v_{k(m-1) k(m)}^{k(m)} \leadsto v_{ij}^j}_{\text{in $\mathcal{G}_{k(m)}^c$}},
\end{equation}
where $k(1) = i$, $k(m) = j$, $k(2),\ldots,k(m-1) \in \{1,2,\ldots,K\}$ are indices of the intermediate equivalence classes in the order traversed by $w_{v_{ij}^i v_{ij}^j}$. In addition, $\mathcal{G}_{k(q)}^c = ([v_{k(q)}], E_{k(q)}^c \setminus R_{k(q)}, c(E_{k(q)}^c \setminus R_{k(q)}))$ for $q = 1,2,\ldots,m$. Due to statement assumption 3, $[v_{k(q)}]$'s remain equivalence classes in $\mathcal{G}^c$. Thus,  by Lemma~\ref{lem:wij} for any two nodes $s$ and $t$ in $[v_{k(q)}]$ there exists at least one walk $w_{st}$ from $s$ to $t$ in $\mathcal{G}_{k(q)}^c$. Since $R_k \subseteq E_k^c$ and $E_{k(q)}^c \cap E_k^c = \emptyset$ as long as $k(q) \neq k$, it holds that $R_k \cap E_{k(q)}^c = \emptyset$ as long as $k(q) \neq k$. In addition, since $R_0^d \subseteq E_0^d$ and $E_0^d \cap E_k^c = \emptyset$ for $k \ge 1$, $R_0^d \cap E_k^c = \emptyset$ for $k \ge 1$. Therefore, it holds that
\begin{displaymath}
\underbrace{(E_{k(q)}^c \setminus R_{k(q)})}_{\text{edge set of $\mathcal{G}_{k(q)}^c$}} = (E_{k(q)}^c \setminus (R_0^d \cup (\cup_{k = 1}^K R_k))) \subseteq \underbrace{(\mathcal{E} \setminus (R_0^d \cup (\cup_{k = 1}^K R^k)))}_{\text{edge set of $\mathcal{G}^c$}}.
\end{displaymath} 
Thus, since $w_{st}$ is in $\mathcal{G}_{k(q)}^c$ it is also in $\mathcal{G}^c$. Therefore, a walk of the form in (\ref{eqn:E_0_wij}) exists in $\mathcal{G}^c$ if and only if all edges $(v_{k(q)k(q+1)}^{k(q)}, v_{k(q)k(q+1)}^{k(q+1)})$ exist in $E_0^d \setminus R_0^d$ for $q = 1,2,\ldots,m-1$, since these edges can only be contained in $E_0^d$ or $R_0^d$. By Definition~\ref{def:condensation} and statement assumption 5, these edges exist if and only if the edges $(v_{k(q)}, v_{k(q+1)})$ exist in $\tilde{E}_0 \setminus \tilde{R}_0$ for all $q$. Further, if a walk $w_{v_{ij}^i v_{ij}^j}$ of the form (\ref{eqn:E_0_wij}) exists in $\mathcal{G}^c$ then the corresponding walk $\tilde{w}_{v_i v_j}$ in $\tilde{G}^c$ is of the form
\begin{equation} \label{eqn:E_0_wij_tilde}
(v_i = ) v_{k(1)} \rightarrow v_{k(2)} \rightarrow \ldots \rightarrow v_{k(m)} ( = v_j).
\end{equation}
Conversely, if the walk $\tilde{w}_{v_i v_j}$ in (\ref{eqn:E_0_wij_tilde}) exists in $\tilde{G}^c$, then in $\mathcal{G}^c$ at least one walk $w_{v_{ij}^i v_{ij}^j}$ of the form (\ref{eqn:E_0_wij}) exists (the possible multiplicity of the walks is due to the possibilities of multiple walks within $\mathcal{G}_{k(q)}^c$).

Next, we establish the equivalence between the walk weight inequalities (i.e., $c_{v_{ij}^i v_{ij}^j} \ge c_{w_{v_{ij}^i v_{ij}^j}}$ and $\tilde{c}_{v_i v_j} \ge \tilde{c}_{\tilde{w}_{v_i v_j}}$). We consider only the cases when $w_{v_{ij}^i v_{ij}^j}$ is restricted to the choices where the walks in $\mathcal{G}_{k(q)}^c$ have minimum weights. By Lemma~\ref{lem:wij}, these minimum weights are $d_{v_{ij}^i \; v_{k(1)k(2)}^{k(1)}}$, $d_{v_{k(m-1)k(m)}^{k(m)} \; v_{ij}^j}$, and $d_{v_{k(q-1)k(q)}^{k(q)} \; v_{k(q)k(q+1)}^{k(q)}}$ for $q = 2,3,\ldots,m-1$ respectively. Therefore, $c_{v_{ij}^i v_{ij}^j} \ge c_{w_{v_{ij}^i v_{ij}^j}}$ if and only if
\begin{equation} \label{eqn:E_0_ineq}
\begin{array}{rcl}
c_{v_{ij}^i v_{ij}^j} & \ge & d_{v_{ij}^i \; v_{k(1)k(2)}^{k(1)}} + d_{v_{k(m-1)k(m)}^{k(m)} \; v_{ij}^j} \vspace{2mm} \\
& & + \sum\limits_{q = 2}^{m-1} d_{v_{k(q-1)k(q)}^{k(q)} \; v_{k(q)k(q+1)}^{k(q)}} + \sum\limits_{q = 1}^{m-1} c_{v_{k(q)k(q+1)}^{k(q)} \; v_{k(q)k(q+1)}^{k(q+1)}}.
\end{array}
\end{equation}
By Lemma~\ref{lem:dij_dji} and \ref{lem:dis_dsj}, in (\ref{eqn:E_0_ineq}) it holds that
\begin{equation} \label{eqn:E_0_ineq1}
\begin{array}{l}
d_{v_{ij}^i v_{k(1)k(2)}^{k(1)}} = d_{v_{ij}^i v_{k(1)}} + d_{v_{k(1)} v_{k(1)k(2)}^{k(1)}} = - d_{v_{k(1)} v_{ij}^i} + d_{v_{k(1)} v_{k(1)k(2)}^{k(1)}}, \vspace{2mm} \\
d_{v_{k(m-1)k(m)}^{k(m)} v_{ij}^j} = d_{v_{k(m-1)k(m)}^{k(m)} v_{k(m)}} + d_{v_{k(m)} v_{ij}^j} = d_{v_{k(m-1)k(m)}^{k(m)} v_{k(m)}} - d_{v_{v_{ij}^j k(m)}}, \vspace{2mm} \\
d_{v_{k(q-1)k(q)}^{k(q)} \; v_{k(q)k(q+1)}^{k(q)}} = d_{v_{k(q-1)k(q)}^{k(q)} v_{k(q)}} + d_{v_{k(q)} v_{k(q)k(q+1)}^{k(q)}}, \; \forall q = 2,\ldots,m-1.
\end{array}
\end{equation}
Therefore, with (\ref{eqn:E_0_ineq1}) the inequality in (\ref{eqn:E_0_ineq}) can be rearranged into
\begin{equation} \label{eqn:E_0_ineq2}
\begin{array}{rcl}
d_{v_{k(1)} v_{ij}^i} + c_{v_{ij}^i v_{ij}^j} + d_{v_{ij}^j v_{k(m)}} & \ge & \sum\limits_{q = 1}^{m-1} \Big(d_{v_{k(q)} v_{k(q)k(q+1)}^{k(q)}} + \vspace{2mm} \\
& & c_{v_{k(q)k(q+1)}^{k(q)} v_{k(q)k(q+1)}^{k(q+1)}} + d_{v_{k(q)k(q+1)}^{k(q+1)} v_{k(q+1)}} \Big).
\end{array}
\end{equation}
Since $i = k(1)$ and $j = k(m)$, by (\ref{eqn:viji_vijj}) in Definition~\ref{def:condensation}, (\ref{eqn:E_0_ineq2}) can be rewritten as
\begin{equation} \label{eqn:E_0_ineq3}
\tilde{c}_{v_i v_j} = d_{v_{k(1)} v_{ij}^i} + c_{v_{ij}^i v_{ij}^j} + d_{v_{ij}^j v_{k(m)}} \ge \sum\limits_{q = 1}^{m-1} \tilde{c}_{v_{k(q)} v_{k(q+1)}} = \tilde{c}_{\tilde{w}_{v_i v_j}},
\end{equation}
where the last equality in (\ref{eqn:E_0_ineq3}) is due to (\ref{eqn:E_0_wij_tilde}). Therefore,
\begin{displaymath}
c_{v_{ij}^i v_{ij}^j} \ge c_{w_{v_{ij}^i v_{ij}^j}} \iff \tilde{c}_{v_i v_j} \ge \tilde{c}_{\tilde{w}_{v_i v_j}}.
\end{displaymath}

Now we establish the equivalence between (a) and (b) in the statement. If $R_0^d$ satisfies (a), then for each $(v_{ij}^i, v_{ij}^j) \in R_0^d$ there exists a walk $w_{v_{ij}^i v_{ij}^j}$ of the form in (\ref{eqn:E_0_wij}) satisfying $c_{v_{ij}^i v_{ij}^j} \ge c_{w_{v_{ij}^i v_{ij}^j}}$. Additionally, we can assume in $w_{v_{ij}^i v_{ij}^j}$ all walks in $\mathcal{G}_{k(q)}^c$ for all $q$ have the least possible weights. Therefore, by the previous parts of the proof, the corresponding edge $(v_i, v_j)$ and walk $\tilde{w}_{v_i v_j}$ exists in $\tilde{G}^c$, and they satisfy the inequality $\tilde{c}_{v_i v_j} \ge \tilde{c}_{\tilde{w}_{v_i v_j}}$. Thus, $\tilde{R}_0 \subseteq \tilde{E}_0$, as defined in the statement, satisfies (b). This shows that (a) implies (b). The argument for (b) implying (a) can be shown in a similar fashion.
\end{proof}

Now we analyze the consequences of Lemma~\ref{thm:Ek} and Lemma~\ref{thm:E0}. Due to Lemma~\ref{thm:Ek}, in condition (\ref{eqn:Rk_red_cond}) $\mathcal{E}$ can be replaced with $E_k^c$. That is, (\ref{eqn:Rk_red_cond}) becomes
\begin{equation} \label{eqn:Erk_rpp1}
\begin{aligned}
&\text{$\forall k \in \{1,2,\ldots,K\}$, \; $\forall (u,v) \in R_k$, \; $\exists \; p_{uv}: u \leadsto v$} \\
&\text{in $([v_k], E_k^c \setminus (R_0^d \cup (\mathop{\cup}\limits_{k = 1}^K R_k)), c( E_k^c \setminus (R_0^d \cup (\mathop{\cup}\limits_{k = 1}^K R_k))))$, $c_{p_{uv}} \le c_{uv}$}.
\end{aligned}
\end{equation}
In addition, since $E_k^c \cap R_q = \emptyset$ as long as $q \neq k$ and $E_k^c \cap R_0^d = \emptyset$, it holds that $E_k^c \setminus (R_0^d \cup (\cup_{k = 1}^K R_q)) = E_k^c \setminus R_k$. Therefore, (\ref{eqn:Erk_rpp1}) can be further simplified to establish the following observations:
\begin{subequations} \label{eqn:Ekr_redundant}
\begin{align}
&\text{$R_k$ is part of redundant edge set $R^d$, for $1 \le k \le K$} \\
\iff & \text{$\forall (u,v) \in R_k$, $\exists \; p_{uv}$ in $([v_k], E_k^c \setminus R_k, c(E_k^c \setminus R_k))$, $c_{p_{uv}} \le c_{uv}$} \label{eqn:Ekr_redundant_b} \\
\iff &\text{$R_k$ is a redundant edge set in $([v_k], E_k^c, c(E_k^c))$}.
\end{align}
\end{subequations}
One of the consequences of (\ref{eqn:Ekr_redundant}) is that, for $k = 1,2,\ldots,K$, whether or not $R_k$ is part of a redundant edge set $R^d$ does not depend on the choices of $R_q$ for $q \in \{1,\ldots,K\} \setminus \{k\}$ or the choice of $R_0^d$. In addition, (\ref{eqn:Ekr_redundant}) can be used to establish a similar independence result for $R_0^d$. To begin, notice that by Lemma~\ref{thm:EkR} and Lemma~\ref{thm:Eijc_R}, $(\mathop{\cup}\limits_{k = 1}^K E_k^r) \cup (E_0 \setminus E_0^d)$ is a redundant edge set in $G$. In addition, by (\ref{eqn:Rd_eps}) $\mathcal{E} := E_0^d \cup (\mathop{\cup}\limits_{k = 1}^K E_k^c) = E \setminus \big((\mathop{\cup}\limits_{k = 1}^K E_k^r) \cup (E_0 \setminus E_0^d)\big)$. Hence, Lemma~\ref{thm:eq_class_preservation} states that
\begin{equation} \label{eqn:Ekr_eq_class}
\text{$[v_1],[v_2],\ldots,[v_K]$ are equivalence classes in the graph $(V, \mathcal{E}, c(\mathcal{E}))$}.
\end{equation}
Secondly, in order for $R^d$ to be a redundant edge set (in $G$), (\ref{eqn:Ekr_redundant}) must hold. Then, by (\ref{eqn:Ekr_redundant_b}), $\cup_{k = 1}^K R_k$ is a redundant edge set in $(V, \mathcal{E}, c(\mathcal{E}))$. Consequently, when applied to $(V, \mathcal{E}, c(\mathcal{E}))$, which is a subgraph of $G$ without negative weight closed walks, Lemma~\ref{thm:eq_class_preservation} implies that
\begin{equation} \label{eqn:Ekr_eq_class1}
\text{$[v_1],\ldots,[v_K]$ are equivalence classes in $\big(V, \mathcal{E} \setminus (\mathop{\cup}\limits_{k = 1}^K R_k), c(\mathcal{E} \setminus (\mathop{\cup}\limits_{k = 1}^K R_k)))\big)$}.
\end{equation}
(\ref{eqn:Ekr_eq_class1}) implies that statement assumption 3 for Lemma~\ref{thm:E0} is satisfied with the $\mathcal{E}$ and $R_k$ for $k = 1,2,\ldots,K$. Therefore, with $\tilde{R}_0$ defined in statement assumption 5 in Lemma~\ref{thm:E0}, the lemma specifies that
\begin{align} \label{eqn:E0r_redundant}
& \text{$R_0^d$ is part of a redundant edge set $R^d$} \nonumber \\
\iff & \text{condition (\ref{eqn:R0d_red_cond})} \nonumber \\
\iff & \text{$\tilde{R}_0$ is a redundant edge set in $\tilde{G}$, the condensation of $G$}.
\end{align}
According to Definition~\ref{def:condensation}, $\tilde{G}$ is independent of $R_q$ for $q \in \{1,2,\ldots,K\}$. Hence, (\ref{eqn:E0r_redundant}) establishes the desired property that whether or not $R_0^d$ is part of a redundant edge set is independent of the choices of $R_k$ for $k \in \{1,2,\ldots,K\}$. In conclusion, for $R^d = \big((E_0 \setminus E_0^d) \cup R_0^d \big)\cup (\mathop{\cup}\limits_{k = 1}^K E_k^r \cup R_k)$ (i.e., (\ref{eqn:mres_decomp})) to be a redundant edge set of $G$, it is necessary and sufficient for $R_0^d$ to satisfy (\ref{eqn:E0r_redundant}) and each of $R_k$ (for $k = 1,2,\ldots,K$) to satisfy its individual version of (\ref{eqn:Ekr_redundant}).

The decoupling of the redundant edge set membership requirements in (\ref{eqn:Ekr_redundant}) and (\ref{eqn:E0r_redundant}) suggests that the maximum redundant edge set problem can be decoupled into $K+1$ independent maximum redundant edge set subproblems on the graphs $([v_k], E_k^c, c(E_k^c))$ in (\ref{eqn:Ekr_redundant}) and on the condensation $\tilde{G}$ in (\ref{eqn:E0r_redundant}), respectively. The following statement, whose proof has already been discussed, summarizes the main decomposition results which have been discussed so far:

\begin{theorem} \label{thm:decomposition}
Let $G = (V,E,c(E))$ be an edge weighted directed graph without negative weight closed walks. Let the following be defined in the context of $G$:
\begin{itemize}
\item $K$ is the number of equivalence classes induced by relation $\sim$ in Definition~\ref{def:eq_K}.
\item For $k \in \{1,2,\ldots,K\}$, $[v_k]$ denotes the equivalence class defined in Definition~\ref{def:eq_vk}.
\item For $k \in \{1,2,\ldots,K\}$, $E_k^r$ and $E_k^c$ are defined in Definition~\ref{def:eq_Ek}.
\item For $1 \le i \neq j \le K$, $E_{ij}$ and $E_{ij}^c$ are defined in Definition~\ref{def:eq_Eij}.
\item $\tilde{G}$ is the condensation of $G$ defined in Definition~\ref{def:condensation}.
\end{itemize}
Then, every maximum redundant edge set of $G$ can be parameterized by
\begin{equation} \label{eqn:mres_parameterization}
R_0^\star \cup \left(\mathop{\cup}\limits_{k = 1}^K (E_k^r \cup R_k^\star) \right),
\end{equation}
where for $k \in \{1,2,\ldots,K\}$, $R_k^\star$ is a maximum redundant edge set of the subgraph $([v_k], E_k^c, c(E_k^c))$. In addition, $R_0^\star$ is parameterized by
\begin{equation} \label{eqn:R0_star}
R_0^\star = \mathop{\cup}\limits_{1 \le i \neq j \le K} R_{ij}^\star,
\end{equation}
where
\begin{equation} \label{eqn:Rij_star}
R_{ij}^\star = \begin{cases} E_{ij}, & \text{if} \; (v_i, v_j) \in \tilde{R}_0^\star \\ E_{ij} \setminus \{(g,h)\} \quad \text{for some $(g,h) \in E_{ij}^c$}, & \text{if} \; (v_i, v_j) \notin \tilde{R}_0^\star \end{cases},
\end{equation}
and $\tilde{R}_0^\star$ is the maximum redundant edge set of $\tilde{G}$, the condensation of $G$.
\end{theorem}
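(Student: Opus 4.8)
The plan is to assemble the theorem from the machinery already in place: the edge partition (\ref{eqn:E_decomp}), the inclusion/exclusion lemmas (Lemma~\ref{thm:EkR} and Lemma~\ref{thm:Eijc_R}), the restriction from $\mathcal{R}$ in (\ref{eqn:mres_decomp_general}) to $\mathcal{R}^d$ in (\ref{eqn:mres_decomp}) via the construction $R^{d\star}$ and the set-valued inverse $F$ in (\ref{eqn:Er_star_exp}), and the two decoupling equivalences (\ref{eqn:Ekr_redundant}) and (\ref{eqn:E0r_redundant}). Since the analytical content lives in those statements, the work left for the theorem is to show that \emph{maximizing cardinality} respects the decomposition, and to translate the optimal object back into the $\mathcal{R}$-form displayed in (\ref{eqn:mres_parameterization})--(\ref{eqn:Rij_star}).

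First I would argue that every maximum redundant edge set $R^\star$ lies in $\mathcal{R}$: Lemma~\ref{thm:EkR} forces $\cup_{k} E_k^r \subseteq R^\star$ (else $R^\star \cup (\cup_k E_k^r)$ would be strictly larger and still redundant), while Lemma~\ref{thm:Eijc_R} forces $|E_{ij} \setminus R^\star| \le 1$ with the excluded edge, if any, lying in $E_{ij}^c$. Passing through $R^{d\star}$ then gives a cardinality-matching representative in $\mathcal{R}^d$, so it suffices to maximize over $R^d \in \mathcal{R}^d$. Here the free coordinates are $R_0^d \subseteq E_0^d$ and $R_k \subseteq E_k^c$, which are pairwise disjoint and disjoint from the fixed part $E \setminus \mathcal{E}$ by (\ref{eqn:Rd_eps}), yielding the additive count
\begin{displaymath}
|R^d| = |E \setminus \mathcal{E}| + |R_0^d| + \sum_{k=1}^K |R_k| .
\end{displaymath}

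Because $|E \setminus \mathcal{E}|$ is constant and the redundancy of $R^d$ splits, by (\ref{eqn:Ekr_redundant}) and (\ref{eqn:E0r_redundant}), into independent conditions on each $R_k$ (redundancy in $([v_k], E_k^c, c(E_k^c))$) and on $R_0^d$ (equivalently, redundancy of $\tilde{R}_0$ in $\tilde{G}$), the maximization decouples coordinate-by-coordinate: an additive objective over a product of independent feasible sets is maximized by maximizing each factor. Thus the optimal $R_k$ is a maximum redundant edge set $R_k^\star$ of $([v_k], E_k^c, c(E_k^c))$, and the optimal $R_0^d$ corresponds, via the cardinality-preserving bijection $R_0^d \leftrightarrow \tilde{R}_0$ of Lemma~\ref{thm:E0}, to the maximum redundant edge set $\tilde{R}_0^\star$ of $\tilde{G}$ (well defined and unique by Theorem~\ref{thm:max_redundant_edge_set}, since Lemma~\ref{thm:tildeG_positive_cycle} makes every cycle of $\tilde{G}$ positively weighted).

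Finally I would translate the optimal $R^d$ back to the claimed form by applying $F$: the block $R_{ij}^\star$ is the full $E_{ij}$ exactly when the designated edge is retained, i.e.\ when $(v_i,v_j)\in\tilde{R}_0^\star$, and is $E_{ij}\setminus\{(g,h)\}$ for an arbitrary $(g,h)\in E_{ij}^c$ otherwise, reproducing (\ref{eqn:R0_star})--(\ref{eqn:Rij_star}); the freedom in $(g,h)$ accounts for the multiplicity of maxima. The converse inclusion---that every set of this form is a maximum redundant edge set---follows by running the decoupling backwards, recombining the redundant pieces through Lemma~\ref{thm:res_union} and matching cardinalities. I expect the main obstacle to be organizational rather than analytical: maintaining the correspondences $\mathcal{R}\leftrightarrow\mathcal{R}^d$ and $R_0^d\leftrightarrow\tilde{R}_0$ consistently so that the independently chosen maxima genuinely assemble into a \emph{global} maximum with no overlooked coupling among the $K+1$ subproblems, and verifying that both inclusions of the parameterization hold (in particular that maximality really does force every excluded inter-class edge into $E_{ij}^c$).
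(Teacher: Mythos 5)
Your proposal follows essentially the same route as the paper, whose proof of Theorem~\ref{thm:decomposition} is precisely the preceding discussion: forcing membership in $\mathcal{R}$ via Lemmas~\ref{thm:EkR} and \ref{thm:Eijc_R}, restricting to $\mathcal{R}^d$, decoupling feasibility via (\ref{eqn:Ekr_redundant}) and (\ref{eqn:E0r_redundant}), maximizing the additive cardinality factor-by-factor, and translating back through $F$. The one point you flag as needing verification --- that the sole excluded inter-class edge must lie in $E_{ij}^c$ --- does hold, since an excluded $(s,t)\notin E_{ij}^c$ satisfies $c_{st}>d_{st}$ and Lemma~\ref{thm:cij_gr_dij} would then enlarge $R^\star$; otherwise the argument is complete and matches the paper's.
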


\begin{remark}
The expression in (\ref{eqn:Rij_star}) is jointly due to Lemma~\ref{thm:E0}, (\ref{eqn:mres_decomp}) and (\ref{eqn:Er_star_exp}).
\end{remark}

\begin{remark} \label{rmk:decomposition}
While it appears that the statement of Theorem~\ref{thm:decomposition} (e.g., $E_{ij}^c$) depends on the choices of the representing nodes $v_1, \ldots, v_K$ in the equivalence classes, Theorem~\ref{thm:decomposition} in fact holds irrespective of these choices. In particular, it can be shown (in Appendix~\ref{app:decompose_well_defined}) that
\begin{itemize}
\item The definition of $E_{ij}^c$ is independent of the choices of $v_1, \ldots, v_K$.
\item With different choices of $v_1, \ldots, v_K$, it is possible to define different condensations of $G$ with different representing nodes and different edge weights $\tilde{c}$. However, for all $1 \le i \neq j \le K$, $\big| \tilde{R}_0^\star \cap ([v_i] \times [v_j]) \big|$ (which can only be 0 or 1) is independent of the choices of $v_1, \ldots, v_K$.
\end{itemize}
\end{remark}
As a result of Theorem~\ref{thm:decomposition}, the graph in Figure~\ref{fig:eq_classes} with its maximum redundant edge set removed is illustrated in Figure~\ref{fig:MRES_example}.
\begin{figure}[!tbh]
\begin{center}
 \includegraphics[width=60mm]{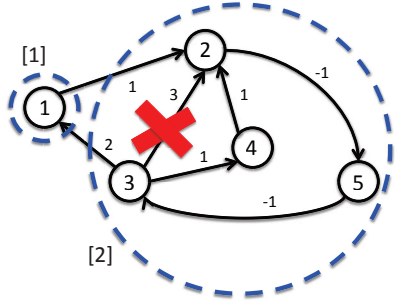}
 \caption{As a result of Theorem~\ref{thm:decomposition}, the maximum redundant edge set of the graph in Figure~\ref{fig:eq_classes} is $R^\star = \{(3,2)\}$ (in general the maximum redundant edge set need not be unique). In the parameterization in (\ref{eqn:mres_parameterization}), $K = 2$, $E_1^r = \emptyset$ and $E_r^2 = \{(3,2)\}$. $R_1^\star = \emptyset$, and by inspection $R_2^\star = \emptyset$ since the subgraph $([2],E_2^c = \{(2,5), (5,3), (3,4), (4,2)\}, c(E_2^c))$ is a zero-weight cycle. From Figure~\ref{fig:condensation_example}, the maximum redundant edge set of the condensation $\tilde{G}$ is $\tilde{R}_0^\star = \emptyset$. Hence, from (\ref{eqn:Rij_star}) $R_{12}^\star = R_{21}^\star = \emptyset$ (since, for instance, $E_{12} = E_{12}^c = \{(1,2)\}$).
 }
 \label{fig:MRES_example}
\end{center}
\end{figure}

\subsection{Computation for maximum redundant edge set} \label{subsec:computation}
To compute the quantities in the statement of Theorem~\ref{thm:decomposition}, the first step  is the identification of the equivalence classes $[v_1], [v_2], \ldots, [v_k]$ defined by relation $\sim$. The following statement is useful in the identification:
\begin{lemma} \label{thm:eq_class_id}
Let $G = (V,E,c(E))$ be an edge weighted directed graph without negative weight closed walks. For $i,j \in V$, let $d_{ij}$ be the minimum walk weight defined in Definition~\ref{def:min_walk_weight}. Then, $i \sim j$ if and only if $d_{ij} + d_{ji} = 0$.
\end{lemma}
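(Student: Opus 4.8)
The plan is to prove the two directions separately, after disposing of the trivial case $i=j$, where $d_{ii}=0$ gives $d_{ii}+d_{ii}=0$ and $i\sim i$ holds by part (a) of Definition~\ref{def:eq_re}. Throughout I would adopt the convention that $d_{ij}=+\infty$ when no walk from $i$ to $j$ exists, so that the claimed identity can hold only when $d_{ij}$ and $d_{ji}$ are finite and attained. Attainment is guaranteed because $G$ has no negative weight closed walks: by the walk-to-path reduction (Lemma~\ref{thm:walk2path}) every walk weight is bounded below by the weight of some path, and there are finitely many paths, so the infimum defining $d_{ij}$ is in fact a minimum.

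The first ingredient I would establish is the one-sided inequality $d_{ij}+d_{ji}\ge 0$ whenever both terms are finite. This follows by concatenating a minimum-weight walk $w_{ij}:i\leadsto j$ of weight $d_{ij}$ with a minimum-weight walk $w_{ji}:j\leadsto i$ of weight $d_{ji}$ to form a closed walk at $i$ of weight $d_{ij}+d_{ji}$; since $G$ contains no negative weight closed walk (Assumption~\ref{asm:no_negative_cycle}), this weight is nonnegative.

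For the direction $i\sim j \Rightarrow d_{ij}+d_{ji}=0$ (with $i\neq j$), Definition~\ref{def:eq_re} supplies a zero-weight closed walk $w$ traversing both $i$ and $j$. After cyclically rotating $w$ to begin at $i$, I would split it into a sub-walk $i\leadsto j$ of weight $c_1$ and a sub-walk $j\leadsto i$ of weight $c_2$, with $c_1+c_2=0$. By the definition of minimum walk weight, $d_{ij}\le c_1$ and $d_{ji}\le c_2$, hence $d_{ij}+d_{ji}\le 0$; combined with the inequality from the previous paragraph this forces $d_{ij}+d_{ji}=0$. Alternatively, this direction is immediate from Lemma~\ref{lem:dij_dji} applied to the equivalence class containing $i$ and $j$.

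For the converse, suppose $d_{ij}+d_{ji}=0$. Then both quantities are finite and attained, so minimum-weight walks $w_{ij}$ and $w_{ji}$ exist; their concatenation is a closed walk traversing both $i$ and $j$ whose weight is $d_{ij}+d_{ji}=0$, i.e.\ a zero-weight closed walk through $i$ and $j$, whence $i\sim j$ by part (b) of Definition~\ref{def:eq_re}. I do not expect a serious obstacle in this lemma; the only points requiring care are the attainment of the minima and the bookkeeping in splitting the closed walk, both of which are routine consequences of the no-negative-closed-walk assumption.
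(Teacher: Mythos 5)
Your proposal is correct and follows essentially the same route as the paper: the converse direction is the identical concatenation argument, and the forward direction either cites Lemma~\ref{lem:dij_dji} (as the paper does) or inlines its proof by splitting the zero-weight closed walk and invoking the no-negative-closed-walk assumption. The extra care you take about finiteness and attainment of the minima is harmless but not needed, since Definition~\ref{def:min_walk_weight} already presupposes the minimum is attained.
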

\begin{proof}
If $i = j$, then by Definition~\ref{def:eq_re} $i \sim j$ and by Definition~\ref{def:min_walk_weight} $d_{ij} + d_{ji} = d_{ii} + d_{ii} = 0$. Hence the statement holds trivially when $i = j$. Next, we consider the case when $i \neq j$. If $i \sim j$, then Lemma~\ref{lem:dij_dji} specifies that $d_{ij} + d_{ji} = d_{ij} - d_{ij} = 0$. Conversely, suppose $d_{ij} + d_{ji} = 0$. By Definition~\ref{def:min_walk_weight}, associated with $d_{ij}$ and $d_{ji}$ there exist walks $w_{ij} : i \leadsto j$ and $w_{ji} : j \leadsto i$ with weights $d_{ij}$ and $d_{ji}$ respectively. Concatenating $w_{ij}$ and $w_{ji}$ results in a zero weight closed walk traversing $i$ and $j$, and hence $i \sim j$.
\end{proof}
Based on Lemma~\ref{thm:eq_class_id}, we identify the equivalence classes as follows:
\begin{algorithm}[Identification of equivalence classes of graph $G = (V,E,c(E))$ without negative weight closed walks] \label{alg:eq_class_id} \hspace{0cm}
\begin{enumerate}
\item Solve the all-pair shortest path problem for all source/destination pairs in $G$. Let $d_{ij}$ denote the shortest path distance from $i$ to $j$.
\item For each pair of $1 \le i \neq j \le n$, declare $i \sim j$ if and only if $d_{ij} + d_{ji} = 0$. Build an undirected graph $(V,E_\sim)$ such that edge $\{i,j\} \in E_\sim$ if and only if $i \sim j$ and $i \neq j$.
\item The equivalence classes defined by relation $\sim$ are the connected components of $(V,E_\sim)$.
\end{enumerate}
\end{algorithm}
The first step of Algorithm~\ref{alg:eq_class_id} can be computed using Floyd-Warshall algorithm in $O({|V|}^3)$ time, because $G$ does not have any negative weight closed walks. The second step requires $O({|V|}^2)$ time. The third step requires $O(|V| + |E_{\sim}|) = O({|V|}^2)$ time (e.g., \cite{Cormen:2009:IAT:1614191}). Hence, Algorithm~\ref{alg:eq_class_id} requires $O({|V|}^3)$ time. Once the equivalence classes have been identified, the computation involved in Definition~\ref{def:eq_class_notations} and Definition~\ref{def:condensation} requires $O(|E|)$ time and $O(K^2)$ time respectively.

Theorem~\ref{thm:decomposition} decomposes the maximum redundant edge set problem into $K+1$ independent subproblems. Because of Lemma~\ref{thm:tildeG_positive_cycle} and Theorem~\ref{thm:max_redundant_edge_set}, solving the maximum redundant edge set subproblem on the condensation $\tilde{G}$ (for $R_0^\star$) requires only polynomial-time (i.e., $O(K^3)$ with $K \le |V|$). On the other hand, to solve for $R_k^\star$ for $k = 1,2,\ldots,K$ in subgraphs $([v_k], E_k^c, c(E_k^c))$ is NP-hard. The argument is similar to the proof of Theorem~\ref{thm:NP_hard}: the minimum equivalent graph problem \cite{Moyles:1969:AFM:321526.321534}, even for strongly connected graphs, is NP-hard. Additionally, the maximum redundant edge set problem for graphs consisting only of one equivalence class generalizes the former problem. Hence, it is NP-hard to compute $R_k^\star$. On the other hand, it turns out that the subproblem for finding a maximum redundant edge set in $([v_k], E_k^c, c(E_k^c))$ can be solved as the (NP-hard) minimum equivalent graph problem for undirected graph $([v_k], E_k^c)$ using available (exact or inexact) algorithms (e.g., \cite{Moyles:1969:AFM:321526.321534, Hsu:1975:AFM:321864.321866, doi:10.1137/S0097539793256685}). The following statement provides the rationale:
\begin{lemma} \label{thm:weight2unweight}
Let $G = (V,E,c(E))$ be an edge weighted directed graph without negative weight closed walks. For $k \in \{1,2,\ldots,K\}$, let $[v_k]$ be the equivalence class defined by relation $\sim$ (Definition~\ref{def:eq_vk}), and $E_k^c$ be defined in Definition~\ref{def:eq_Ek}. Let $R_k \subseteq E_k^c$ be given. Then, the following two statements are equivalent:
\begin{enumerate}[label=\ref{thm:weight2unweight}.\alph*]
\item \label{lem:R_w2uw} $R_k$ is a redundant edge set of $([v_k], E_k^c, c(E_k^c))$.
\item \label{lem:reach_w2uw} $([v_k], E_k^c)$ and $([v_k], E_k^c \setminus R_k)$ have the same reachability (i.e., there is a walk $i \leadsto j$ in $([v_k], E_k^c)$ if and only if there is a walk $i \leadsto j$ in $([v_k], E_k^c \setminus R_k)$).
\end{enumerate}
\end{lemma}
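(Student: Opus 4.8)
The plan is to collapse the weighted redundancy condition into a purely combinatorial reachability condition by exploiting the defining feature of $E_k^c$: every edge $(i,j) \in E_k^c$ is \emph{tight}, i.e.\ $c_{ij} = d_{ij}$ (equation (\ref{def:Ekc})). The key step I would establish first is a \emph{flatness} property: every walk from $u$ to $v$ that uses only edges of $E_k^c$ has weight exactly $d_{uv}$, independent of the walk chosen. Indeed, for a walk $u = x_0, x_1, \ldots, x_m = v$ with each $(x_q,x_{q+1}) \in E_k^c$, every traversed node lies in the single class $[v_k]$ (since $E_k^c \subseteq E_k$), and each edge contributes $c_{x_q x_{q+1}} = d_{x_q x_{q+1}}$; applying Lemma~\ref{lem:dis_dsj} repeatedly telescopes $\sum_q d_{x_q x_{q+1}}$ to $d_{x_0 x_m} = d_{uv}$. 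Thus on $E_k^c$ the walk weight depends only on the endpoints.

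Granting flatness, I would prove both implications at once by showing each of \ref{lem:R_w2uw} and \ref{lem:reach_w2uw} is equivalent to the single intermediate condition
\[
(\ast)\qquad \text{for every } (u,v)\in R_k \text{ there exists a walk } u \leadsto v \text{ in } ([v_k],\, E_k^c \setminus R_k).
\]
For \ref{lem:R_w2uw} $\Leftrightarrow (\ast)$: by Definition~\ref{def:redundant_edge_set}, $R_k$ is redundant in $([v_k], E_k^c, c(E_k^c))$ precisely when each $(u,v)\in R_k$ admits a path $p_{uv}$ in $E_k^c \setminus R_k$ with $c_{p_{uv}} \le c_{uv}$. By flatness any such path (indeed any walk) automatically has weight $d_{uv} = c_{uv}$, so the weight inequality is free and the requirement reduces to the mere \emph{existence} of a $u \leadsto v$ walk avoiding $R_k$; the interchange of ``walk'' and ``path'' is supplied by Lemma~\ref{thm:walk2path}, which applies since $([v_k], E_k^c \setminus R_k)$ inherits the no-negative-cycle property of $G$. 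This is exactly $(\ast)$. For \ref{lem:reach_w2uw} $\Leftrightarrow (\ast)$: this is the unweighted reachability characterization already derived in the proof of Theorem~\ref{thm:NP_hard}. Instantiating the equivalence (\ref{eqn:reachability_rp}) with edge set $E_k^c$ and removed set $R_k$ states precisely that $([v_k], E_k^c)$ and $([v_k], E_k^c \setminus R_k)$ have the same reachability if and only if $(\ast)$ holds. Chaining the two equivalences yields \ref{lem:R_w2uw} $\iff$ \ref{lem:reach_w2uw}.

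The main obstacle is the flatness claim of the first paragraph; it is what forces the weighted and unweighted notions to coincide, and it crucially uses that $[v_k]$ is an equivalence class under $\sim$ (so that Lemma~\ref{lem:dij_dji}--\ref{lem:dis_dsj} apply) together with the tightness built into $E_k^c$. Everything after flatness is bookkeeping on top of results already in hand. As a sanity check on the ``easy'' half, note that \ref{lem:R_w2uw} $\Rightarrow$ \ref{lem:reach_w2uw} can also be seen directly: replacing each $R_k$-edge on a given walk by its replacement path in $E_k^c \setminus R_k$ produces, in a \emph{single} pass, a walk avoiding $R_k$, so no iteration or termination argument of the kind needed in Lemma~\ref{thm:union} is required here.
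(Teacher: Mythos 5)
Your proposal is correct and follows essentially the same route as the paper's proof: both reduce condition \ref{lem:reach_w2uw} to the intermediate condition ``every $R_k$-edge has a replacement walk in $E_k^c \setminus R_k$'' via the reachability equivalence from the proof of Theorem~\ref{thm:NP_hard}, and both close the loop by the telescoping computation $c_{w_{ij}} = \sum_q d_{i_q i_{q+1}} = d_{ij} = c_{ij}$ using (\ref{def:Ekc}) and Lemma~\ref{lem:dis_dsj}, so that the weight requirement in Definition~\ref{def:redundant_edge_set} is automatic. Your isolation of this computation as a standalone ``flatness'' property is only a presentational difference.
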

\begin{proof}
We consider the case for $R_k \neq \emptyset$, since otherwise the statement is trivial. As argued in the proof of Theorem~\ref{thm:NP_hard}, condition~\ref{lem:reach_w2uw} is equivalent to
\begin{equation} \label{eqn:reachability_rp1_b}
\text{$\forall (i,j) \in R_k$, there exists a walk from $i$ to $j$ in $([v_k], E_k^c \setminus R_k)$}.
\end{equation}
Hence, it suffices to argue for the equivalence between \ref{lem:R_w2uw} and (\ref{eqn:reachability_rp1_b}). By Definition~\ref{def:redundant_edge_set}, condition~\ref{lem:R_w2uw} implies (\ref{eqn:reachability_rp1_b}). Conversely, if (\ref{eqn:reachability_rp1_b}) holds then for each $(i,j) \in R_k$ there exists a walk $w_{ij}$ in  $([v_k], E_k^c \setminus R_k, c(E_k^c \setminus R_k))$. Let the walk $w_{ij}$ be of the form $(i = i_0, i_1, \ldots, i_m = j)$. By (\ref{def:Ekc}) and Lemma~\ref{lem:dis_dsj}, the weight of walk $w_{ij}$ is $c_{w_{ij}} = d_{i_0 i_1} + \ldots + d_{i_{m-1} i_m} = d_{i_0 i_m} = d_{ij}$. This is the same as the weight of edge $(i,j) \in E_k^c$ (again, by (\ref{def:Ekc})). Hence, (\ref{eqn:reachability_rp1_b}) implies \ref{lem:R_w2uw}, and the desired equivalence is established.
\end{proof}
\begin{remark}
Lemma~\ref{thm:weight2unweight} states that for subgraph $([v_k], E_k^c, c(E_k^c))$, the maximum redundant edge set problem can be reduced to the minimum equivalent graph problem by ignoring the edge weights. However, this reduction is not guaranteed to be valid in more general cases. See Figure~\ref{fig:weighted_counterex} for an example.
\begin{figure}[!tbh]
\begin{center}
 %\vspace{-0.3cm}
 \includegraphics[width=40mm]{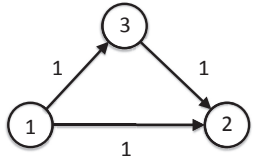}
 \vspace{-0.3cm}
\end{center}
 \caption{The graph illustrated is not one of $([v_k], E_k^c, c(E_k^c))$ for any $k$, since there is no zero weight cycle. When interpreted as a weighted graph, no edge is redundant according to Definition~\ref{def:redundant_edge_set}. However, when the edge weights are ignored, the minimum equivalent graph contains two edges $(1,3)$ and $(3,2)$. In other words, edge $(1,2)$ is redundant in unweighted sense but not redundant in weighted sense.}
 \label{fig:weighted_counterex}
\end{figure}
\end{remark}

The decomposition result (mainly Lemma~\ref{thm:E0}) also leads to some guideline in obtaining approximate solutions to the maximum redundant edge set problem. If $\hat{R}_k$ for $k = 1,2,\ldots,K$ are (not necessarily maximum) redundant edge sets for the subgraphs $([v_k],E_k^c, c(E_k^c))$, then it can be established that
\begin{displaymath}
\text{$[v_1],\ldots,[v_K]$ are equivalence classes in $\big(V, \mathcal{E} \setminus (\mathop{\cup}\limits_{k = 1}^K \hat{R}_k), c(\mathcal{E} \setminus (\mathop{\cup}\limits_{k = 1}^K \hat{R}_k)))\big)$},
\end{displaymath}
where by (\ref{eqn:Rd_eps}) $\mathcal{E} = E_0^d \cup (\mathop{\cup}\limits_{k = 1}^K E_k^c)$. Hence, Lemma~\ref{thm:E0} can be applied to establish that the choices of redundant edge sets in $E_0^d$ are independent of $\hat{R}_k$. Consequently, the largest cardinality redundant edge sets, given the components $\hat{R}_k \in E_k^c$ for $k = 1,2,\ldots,K$, can be parameterized as
\begin{displaymath}
R_0^\star \cup \left(\mathop{\cup}\limits_{k = 1}^K (E_k^r \cup \hat{R}_k) \right),
\end{displaymath}
where $R_0^\star$ is parameterized in (\ref{eqn:R0_star}).

\section{Equivalent reduction of precedence relation systems} \label{sec:equiv_reduction}
This section presents a full parameterization of the set of all equivalent reductions of any precedence relation system. The parameterization will indicate that contrary to the problem of finding the maximum redundant edge set which is NP-hard, every equivalent reduction can be computed in polynomial-time. In the following, we first present some preparatory results in Section~\ref{subsec:er_notations}. Next, the main result on the parameterization of equivalent reduction is described in Section~\ref{subsec:er_parameterization}. Section~\ref{subsec:ve_er} describes a further simplification of equivalent reduction, which is connected to the maximum redundant edge set of the condensation of the original precedence graph.  

Because of the correspondence between a precedence relation system and its precedence graph, in this section we shall extend the notion of equivalent reduction in Definition~\ref{def:equivalent_reduction} to precedence graphs. Given a precedence graph $G$, we call another precedence graph $G'$ equivalent to $G$, with notation $G' \equiv G$, if the precedence relation systems corresponding to $G$ and $G'$ are equivalent (i.e., they have the same solution set). Consequently, an equivalent reduction of a precedence graph $G$ is a precedence graph $G' \equiv G$ such that $G'$ has the minimum possible number of edges.

\subsection{Preparatory results} \label{subsec:er_notations}
The following statements will be used in the proof of the main result to establish some graph-based necessary conditions for the equivalent reductions of a precedence graph.

\begin{lemma} \label{thm:rp_equiv}
Let $G = (V,E,c(E))$ and $G^\prime = (V,E^\prime,c^\prime(E^\prime))$ be two equivalent precedence graphs (i.e., $G \equiv G'$). Then, if $(f,g) \in E$ with weight $c_{fg}$, then there exists a path $p^\prime_{f g}: f \leadsto g$ in $G^\prime$ with weight $c^\prime_{p^\prime_{f g}} \le c_{fg}$.
\end{lemma}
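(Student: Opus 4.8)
The plan is to reduce the statement to a single application of Lemma~\ref{thm:implication}, invoked with $G'$ (rather than $G$) as the ambient precedence graph. The guiding idea is that the edge $(f,g) \in E$ encodes, inside the inequality system of $G$, the constraint $x_f - x_g \le c_{fg}$, and that this constraint must be implied by the system of $G'$ because the two systems have identical solution sets. Lemma~\ref{thm:implication} then converts this algebraic implication into the existence of a dominating path in $G'$.

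First I would fix an arbitrary edge $(f,g) \in E$ with weight $c_{fg}$. By Assumption~\ref{asm:self_loop} there are no self-loops, so $f \neq g$, which is exactly the hypothesis $u \neq v$ needed to invoke Lemma~\ref{thm:implication}. Since $x_f - x_g \le c_{fg}$ is literally one of the defining inequalities of the system associated with $G$, every $x \in \real^n$ satisfying the system of $G$ automatically satisfies $x_f - x_g \le c_{fg}$. Using $G \equiv G'$ (equal solution sets), I would transfer this to $G'$: every $x$ satisfying the system of $G'$ satisfies $x_f - x_g \le c_{fg}$. This is precisely condition~\ref{lem:implication} of Lemma~\ref{thm:implication}, stated for $G'$ with the substitutions $u \leftarrow f$, $v \leftarrow g$, and $b_{uv} \leftarrow c_{fg}$.

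Next I would apply Lemma~\ref{thm:implication} in the direction \ref{lem:implication} $\Rightarrow$ \ref{lem:rp}. To do so I must confirm that $G'$ meets the standing hypotheses of that lemma, in particular feasibility (Assumption~\ref{asm:feasibility}): since $G$ is a feasible precedence graph and $G'$ has the same solution set, $G'$ is feasible as well, so the primal program appearing in the proof of Lemma~\ref{thm:implication} has a nonempty feasible set. The lemma then delivers a path $p'_{fg} : f \leadsto g$ in $G'$ with weight $c'_{p'_{fg}} \le c_{fg}$, which is exactly the desired conclusion.

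There is essentially no serious obstacle: the entire content is carried by Lemma~\ref{thm:implication}, and the only points requiring care are the bookkeeping hypotheses — namely $f \neq g$ (from Assumption~\ref{asm:self_loop}) and the feasibility of $G'$ (inherited from $G$ through $G \equiv G'$) — together with the elementary observation that membership $(f,g) \in E$ is precisely what makes $x_f - x_g \le c_{fg}$ a consequence of the $G$-system and hence, by equivalence, of the $G'$-system.
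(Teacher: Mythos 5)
Your proposal is correct and follows exactly the paper's own argument: recognize $x_f - x_g \le c_{fg}$ as a consequence of the $G'$-system via $G \equiv G'$, then invoke Lemma~\ref{thm:implication} (direction \ref{lem:implication} $\Rightarrow$ \ref{lem:rp}) applied to $G'$ to extract the dominating path. Your extra checks of $f \neq g$ and the feasibility of $G'$ are sound and, if anything, slightly more careful than the paper's one-line treatment.
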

\begin{proof}
Corresponding to $(f,g) \in E$ is an inequality $x_f - x_g \le c_{fg}$ in $G$. Since $G^\prime \equiv G$, whenever $x \in \real^n$ satisfies $G^\prime$, $x$ also satisfies $x_f - x_g \le c_{fg}$. This is condition Lemma~\ref{lem:implication} applied to $G'$. Hence, Lemma~\ref{thm:implication} implies that in $G^\prime$ there exists a path $p^\prime_{fg} : f \leadsto g$ with weight $c^\prime_{p^\prime_{fg}} \le c_{fg}$.
\end{proof}

\begin{remark}
Lemma~\ref{thm:rp_equiv} can in fact be extended (the details omitted) to show that the equivalent reduction problem has the following graph interpretation: given precedence graph $G = (V,E,c(E))$, find (possibly another) precedence graph $G' = (V,E',c'(E'))$ satisfying
\begin{enumerate}
\item For each $(i,j) \in E$ with weight $c_{ij}$ there exists a path $p'_{ij} : i \leadsto j$ in $G'$ with weight $c'_{p'_{ij}} \le c_{ij}$. Conversely, for each $(u,v) \in E'$ with weight $c'_{uv}$ there exists a path $p_{uv} : u \leadsto v$ in $G$ with weight $c_{p_{uv}} \le c'_{uv}$,
\item with respect to the first bullet, $E'$ has minimum cardinality.
\end{enumerate}
From this graph interpretation it is also possible to see that the equivalent reduction problem is a generalization of the transitive reduction problem for unweighted directed graphs in \cite{AGU_transitive_reduction}. That is, an instance of transitive reduction problem is an instance of equivalent reduction problem with edge weights $c_{ij}$ set to zero or a negative constant.
\end{remark}

\begin{lemma} \label{thm:sim_preservation_equiv}
Let $G = (V,E,c(E))$ and $G^\prime = (V,E^\prime,c^\prime(E^\prime))$ be two equivalent precedence graphs (i.e., $G \equiv G'$). Then, for $i,j \in V$, $i \sim j$ in $G$ if and only if $i \sim j$ in $G^\prime$.
\end{lemma}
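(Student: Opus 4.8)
The plan is to reduce the claim to the statement that $G$ and $G'$ induce the same minimum walk weights between every pair of nodes, and then to invoke the distance characterization of $\sim$ in Lemma~\ref{thm:eq_class_id}. Write $d_{ij}$ for the minimum walk weight from $i$ to $j$ in $G$ (Definition~\ref{def:min_walk_weight}) and $d'_{ij}$ for the analogous quantity in $G'$, with the convention that the value is $+\infty$ when no such walk exists. Since both $G$ and $G'$ are precedence graphs, Assumption~\ref{asm:no_negative_cycle} guarantees that neither contains a negative weight closed walk, so both $d_{ij}$ and $d'_{ij}$ are well defined and Lemma~\ref{thm:eq_class_id} applies to each graph: $i \sim j$ in $G$ if and only if $d_{ij} + d_{ji} = 0$, and $i \sim j$ in $G'$ if and only if $d'_{ij} + d'_{ji} = 0$. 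The case $i = j$ is immediate from Definition~\ref{def:eq_re}, so it suffices to treat $i \neq j$ and to prove $d_{ij} = d'_{ij}$ for all such pairs.

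The key step is the equality $d_{ij} = d'_{ij}$, which I would establish by two symmetric inequalities. To show $d'_{ij} \le d_{ij}$, assume $d_{ij}$ is finite (otherwise the inequality is trivial) and take a walk $w_{ij} = (i = i_0, i_1, \ldots, i_m = j)$ in $G$ attaining weight $d_{ij}$; such a walk exists because, by Lemma~\ref{thm:walk2path} and the finiteness of the set of paths, the minimum is attained by a path. Each traversed edge $(i_k, i_{k+1}) \in E$, so by Lemma~\ref{thm:rp_equiv} applied to the equivalent pair $G \equiv G'$ there is a path $p'_{i_k i_{k+1}}$ in $G'$ with weight at most $c_{i_k i_{k+1}}$. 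Concatenating these paths produces a walk from $i$ to $j$ in $G'$ whose weight is at most $\sum_{k} c_{i_k i_{k+1}} = d_{ij}$, whence $d'_{ij} \le d_{ij}$. Exchanging the roles of $G$ and $G'$ (equivalence being symmetric, $G' \equiv G$) and repeating the argument with Lemma~\ref{thm:rp_equiv} yields $d_{ij} \le d'_{ij}$. Therefore $d_{ij} = d'_{ij}$, and by the same token $d_{ji} = d'_{ji}$.

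Combining the two parts finishes the proof: $d_{ij} + d_{ji} = 0$ holds in $G$ if and only if $d'_{ij} + d'_{ji} = 0$ holds in $G'$, so Lemma~\ref{thm:eq_class_id} gives that $i \sim j$ in $G$ if and only if $i \sim j$ in $G'$. The main obstacle I anticipate is the bookkeeping around the degenerate case in which no walk joins $i$ and $j$: there the relevant minimum weight must be read as $+\infty$, the concatenation argument becomes vacuous, and one must verify that $d_{ij} = d'_{ij}$ still holds (both being $+\infty$) so that neither side of the $\sim$ equivalence is spuriously triggered. Beyond this, the only delicate point is ensuring that the minimum walk weight in $G$ is genuinely attained by a finite walk before its edges are replaced, which is precisely where the no-negative-weight-closed-walk assumption is invoked.
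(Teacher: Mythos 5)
Your proof is correct. It differs mildly but genuinely from the paper's: the paper takes the zero-weight closed walk in $G$ witnessing $i \sim j$, replaces each of its edges by a replacement path in $G'$ via Lemma~\ref{thm:rp_equiv}, and concludes from Assumption~\ref{asm:no_negative_cycle} that the resulting closed walk in $G'$ has weight exactly zero; you instead prove the stronger invariant that the minimum walk weights coincide, $d_{ij} = d'_{ij}$ for all pairs, and then read off the equivalence from the distance characterization of $\sim$ in Lemma~\ref{thm:eq_class_id}. The edge-by-edge replacement via Lemma~\ref{thm:rp_equiv} and the appeal to the no-negative-weight-closed-walk assumption are the shared engine of both arguments, and since Lemma~\ref{thm:eq_class_id} is established earlier in the paper there is no circularity. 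What your route buys is a reusable fact (equivalent precedence graphs induce identical shortest walk distances, including the unreachable case read as $+\infty$) at the cost of slightly more bookkeeping; the paper's direct transfer of the zero-weight closed walk is shorter and avoids having to discuss attainment of the minimum and the infinite-distance convention. Your handling of both of those delicate points (attainment via Lemma~\ref{thm:walk2path} plus finiteness of the set of paths, and the symmetric $+\infty$ case) is sound.
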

\begin{proof}
The statement holds trivially when $i = j$. Therefore, only the cases when $i \neq j$ are considered. Suppose $i \sim j$ in $G$. Then there exists a zero-weight closed walk $w_{iji}: (i = i_0, i_1, \ldots, i_k = j, i_{k+1}, \ldots, i_m = i)$ in $G$. Since Lemma~\ref{thm:rp_equiv} and Lemma~\ref{thm:sim_preservation_equiv} have the same statement assumption, Lemma~\ref{thm:rp_equiv} implies that for each $q = 0,1,\ldots,m-1$ there exists a path $p^\prime_{i_q i_{q+1}} : i_q \leadsto i_{q+1}$ in $G^\prime$ whose path weight is less than or equal to the weight of the corresponding edge $(i_q, i_{q+1})$ in $G$. Hence, in $G^\prime$ there is a closed walk $w^\prime_{iji} : i \leadsto j \leadsto i$ with weight less than or equal to that of $w_{iji}$ (which is zero). Since $G^\prime$ is a precedence graph, Assumption~\ref{asm:no_negative_cycle} implies that the weight of $w^\prime_{iji}$ cannot be negative. Hence, $i \sim j$ in $G^\prime$. The above argument can be repeated, with appropriate modifications, to show that $i \sim j$ in $G$, whenever $i \sim j$ in $G^\prime$. Hence, the desired statement is established.
\end{proof}

\begin{remark}
Lemma~\ref{thm:sim_preservation} is in fact a corollary of Lemma~\ref{thm:sim_preservation_equiv}.
\end{remark}

\begin{lemma} \label{thm:E0_equiv}
Let $G = (V,E,c(E))$ be a precedence graph. Let $G^\star = (V,E^\star,c(E^\star))$ with $E^\star \subseteq E$, and $G^\prime = (V,E^\prime,c^\prime(E^\prime))$ with $E^\prime \subseteq V \times V$ and $c^\prime(E^\prime) \in \real^{|E'|}$ be given. Assume that
\begin{itemize}
\item $G \equiv G^\star \equiv G^\prime$.
\item Neither $G^\star$ nor $G^\prime$ has any redundant edge (cf.~Definition~\ref{def:redundant_edge}).
\item For $k \in \{1,2,\ldots,K\}$, $[v_k]$ is the equivalence class induced by relation $\sim$ in $G$, $G^\star$ and $G^\prime$ (the three precedence graphs having the same equivalence class partitioning is a consequence of Lemma~\ref{thm:sim_preservation_equiv} with $G \equiv G^\star \equiv G^\prime$).
\item For $i,j \in V$ and $i \neq j$, let $E^\star_{ij} := E^\star \cap ([v_i] \times [v_j])$, and $E^\prime_{ij} := E^\prime \cap ([v_i] \times [v_j])$. That is, $E^\star_{ij}$ is the subset of $E^\star$ whose member edges go from $[v_i]$ to $[v_j]$, and $E^\prime_{ij}$ is defined analogously.
\end{itemize}
Then, 
\begin{equation} \label{eqn:Eij_equiv}
E^\star_{ij} = \emptyset \iff E^\prime_{ij} = \emptyset, \quad \forall \; 1 \le i \neq j \le K.
\end{equation}
In addition, suppose $(f,g) \in E^\star_{ij}$ with weight $c_{fg}$ and $(u,v) \in E^\prime_{ij}$ with weight $c^\prime_{uv}$. Then,
\begin{equation} \label{eqn:cuv_equiv}
c^\prime_{uv} = d_{uf} + c_{fg} + d_{gv},
\end{equation}
with $d_{uf}$ and $d_{gv}$ being the minimum walk weights in $G$ defined in Definition~\ref{def:min_walk_weight}.
\end{lemma}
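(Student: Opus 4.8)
The plan is to isolate three facts and then combine them: that the minimum walk weights agree across the three graphs, that non-redundancy forces each surviving inter-class edge to carry its own minimum walk weight, and—separately for each conclusion—a shortest-walk identity for (\ref{eqn:cuv_equiv}) together with a condensation argument for (\ref{eqn:Eij_equiv}). First I would prove that, writing $d$, $d^\star$, $d^\prime$ for the minimum walk weights (Definition~\ref{def:min_walk_weight}) of $G$, $G^\star$, $G^\prime$, one has $d_{st}=d^\star_{st}=d^\prime_{st}$ for all $s,t$. Each direction comes from the replacement trick behind Lemma~\ref{thm:rp_equiv}: a minimum-weight walk in one graph has every edge $(a,b)$ replaced by a path of weight at most its edge weight in the equivalent graph, and concatenation gives a walk of no greater weight; applying this both ways forces equality (and equal reachability). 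I would then record that a non-redundant inter-class edge carries its minimum walk weight: for $(f,g)\in E^\star_{ij}$ one has $d^\star_{fg}\le c_{fg}$, and $d^\star_{fg}<c_{fg}$ would give a walk $f\leadsto g$ of weight below $c_{fg}$ which, since by Assumption~\ref{asm:no_negative_cycle} the edge $(f,g)$ cannot lie on any cheaper $f\leadsto g$ walk, can be taken to avoid $(f,g)$ and (via Lemma~\ref{thm:walk2path}) certifies $(f,g)$ redundant in $G^\star$ (Definition~\ref{def:redundant_edge}), a contradiction; hence $c_{fg}=d^\star_{fg}=d_{fg}$, and symmetrically $c^\prime_{uv}=d_{uv}$ for $(u,v)\in E^\prime_{ij}$.

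Given these preliminaries, (\ref{eqn:cuv_equiv}) reduces to the purely $G$-internal identity $d_{uv}=d_{uf}+c_{fg}+d_{gv}$. The inequality $\le$ is witnessed by concatenating a minimum intra-class walk $u\leadsto f$ (weight $d_{uf}$, existing by Lemma~\ref{lem:wij}), the edge $(f,g)$, and a minimum intra-class walk $g\leadsto v$ (weight $d_{gv}$). For $\ge$, assuming strict inequality I would form in $G^\star$ the walk $f\leadsto u\leadsto v\leadsto g$, flanking a minimum $u\leadsto v$ walk by intra-class walks of weights $-d_{uf}$ and $-d_{gv}$ (Lemma~\ref{lem:dij_dji}); its weight would be strictly below $c_{fg}=d_{fg}$, contradicting that $d_{fg}$ is minimal among $f\leadsto g$ walks in $G^\star$. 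Combining the identity with $c^\prime_{uv}=d_{uv}$ yields (\ref{eqn:cuv_equiv}); this step stays inside $G$ and $G^\star$ and uses only the shortest-walk relations of Lemma~\ref{thm:eq_class}.

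The emptiness equivalence (\ref{eqn:Eij_equiv}) is the main obstacle, because $G^\prime$ might connect $[v_i]$ to $[v_j]$ through intermediate equivalence classes, and excluding this requires the global structure rather than a local walk. My plan is to pass to the condensations $\tilde G^\star$, $\tilde G^\prime$ (Definition~\ref{def:condensation}) over the common representing nodes $v_1,\dots,v_K$, noting that $E^\star_{ij}\ne\emptyset$ iff $v_i\to v_j$ is an edge of $\tilde G^\star$ (and likewise for $G^\prime$). Using the walk/condensation correspondence from the proof of Lemma~\ref{thm:E0} together with $d=d^\star=d^\prime$, I would show that minimum walk weights between representing nodes are preserved by condensation, so that $\tilde G^\star\equiv\tilde G^\prime$; and that a redundant edge $v_i\to v_j$ of $\tilde G^\star$ lifts—through the minimum-contribution edges of the crossed class pairs, all distinct from any edge of $E^\star_{ij}$—to a walk $f\leadsto g$ in $G^\star$ of weight at most $c_{fg}$ that avoids $(f,g)$, making $(f,g)$ redundant, which is impossible; hence neither $\tilde G^\star$ nor, symmetrically, $\tilde G^\prime$ has a redundant edge.

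Since both condensations have strictly positive cycle weights by Lemma~\ref{thm:tildeG_positive_cycle}, I would finish with the general fact that equivalent, redundant-edge-free precedence graphs with all cycles positive share the same edge set: for an edge $v_i\to v_j$ of $\tilde G^\star$, Lemma~\ref{thm:rp_equiv} supplies a $v_i\leadsto v_j$ path in $\tilde G^\prime$; if it were not the single edge, converting it back to $\tilde G^\star$ would produce a walk of weight at most the weight of $v_i\to v_j$ that, by non-redundancy of that edge, must traverse $v_i\to v_j$, whereupon positivity of all cycles collapses the surrounding closed subwalks and forces the walk to be exactly the single edge, contradicting that the converted walk visits the intermediate node of the $\tilde G^\prime$ path. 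This yields $E^\star_{ij}\ne\emptyset\Rightarrow E^\prime_{ij}\ne\emptyset$, and symmetry establishes (\ref{eqn:Eij_equiv}). I expect the condensation step of the third paragraph to be the delicate part, since it is exactly there that the positive-cycle property must be invoked to forbid rerouting through intermediate classes.
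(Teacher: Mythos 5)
Your proposal is correct, but it reaches the two conclusions by a genuinely different route than the paper. For (\ref{eqn:cuv_equiv}) you first isolate two reusable facts: that equivalent precedence graphs share the same minimum-walk-weight matrix $d$ (via the replacement argument behind Lemma~\ref{thm:rp_equiv}, applied in both directions), and that a non-redundant edge must carry exactly the minimum walk weight between its endpoints (otherwise a cheaper walk exists which, by Assumption~\ref{asm:no_negative_cycle} and Lemma~\ref{thm:walk2path}, yields a certifying path avoiding the edge -- this is essentially Lemma~\ref{thm:cij_gr_dij} with $R=\emptyset$). This reduces (\ref{eqn:cuv_equiv}) to the metric identity $d_{uv}=d_{uf}+c_{fg}+d_{gv}$ inside $G$, proved with Lemma~\ref{lem:wij} and \ref{lem:dij_dji}; this is cleaner and more transparent than the paper's two-sided contradiction, which instead shuttles explicit walks between $G^\star$ and $G'$ and rules out the cases where the walk revisits the edge. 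For (\ref{eqn:Eij_equiv}) your route is considerably heavier than the paper's: the paper simply notes that if $E^\star_{ij}=\emptyset$ then any replacement path for $(u,v)\in E'_{ij}$ in $G^\star$ must exit $[v_i]\cup[v_j]$ through some node $t$, maps it back to $G'$, and derives a contradiction with non-redundancy or with the zero-weight-closed-walk structure (which would force $t\sim v$). You instead pass to the condensations, show they are equivalent and redundant-edge-free, invoke positivity of their cycles (Lemma~\ref{thm:tildeG_positive_cycle}), and prove a rigidity fact that equivalent redundant-edge-free precedence graphs with positive cycles have identical edge sets. Each step of that chain is justifiable with the machinery of Lemma~\ref{thm:E0} and Lemma~\ref{thm:eq_class}, and it buys a reusable structural statement about condensations, but it requires carefully re-deriving the walk/condensation weight correspondence in both directions, whereas the paper's local argument avoids condensations entirely.
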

\begin{proof}
To show (\ref{eqn:Eij_equiv}) by contradiction, first assume that $E^\star_{ij} = \emptyset$ but $E^\prime_{ij} \neq \emptyset$. Let $(u,v) \in E^\prime_{ij}$ with weight $c_{uv}^\prime$. Since $G^\star \equiv G^\prime$, Lemma~\ref{thm:rp_equiv} guarantees the existence of a path $p_{uv}: u \leadsto v$ in $G^\star$ such that $c_{p_{uv}} \le c^\prime_{uv}$. Further, since $E^\star_{ij} = \emptyset$, $p_{uv}$ must traverse a node $t \notin ([v_i] \cup [v_j])$ and hence it is of the form $p_{uv} : u \leadsto t \leadsto v$. Applying Lemma~\ref{thm:rp_equiv} to each edge in $p_{uv}$ results in a walk $w^\prime_{uv} : u \leadsto t \leadsto v$ in $G^\prime$ such that $c^\prime_{w^\prime_{uv}} \le c_{p_{uv}} \le c^\prime_{uv}$. If $(u,v)$ is not part of $w^\prime_{uv}$, then $(u,v)$ is a redundant edge in $G^\prime$. This is a contradiction. Thus, $(u,v)$ is part of $w^\prime_{uv}$, implying that $w^\prime_{uv}$ is either (a) $u \leadsto u \rightarrow v \leadsto t \leadsto v$ or (b) $u \leadsto t \leadsto u \rightarrow v \leadsto v$. In the case of (a), we distinguish two cases:
\begin{itemize}
\item the weight of the closed walk $v \leadsto t \leadsto v$ is zero ($G'$ cannot have negative weight closed walk because of Assumption~\ref{asm:no_negative_cycle}). Then, $t \sim v$ in $G^\prime$, and this is a contradiction since $t \notin [v_j]$.
\item the weight of the closed walk $v \leadsto t \leadsto v$ is positive. Then, $c^\prime_{w^\prime_{uv}} \le c^\prime_{uv}$ implies that the weight of the closed walk $u \leadsto u$ is negative. This is also a contradiction of Assumption~\ref{asm:no_negative_cycle}.
\end{itemize}
A similar argument can show that case (b) also leads to contradictory conclusions. Therefore, the original assumption that $E^\star_{ij} = \emptyset$ but $E^\prime_{ij} \neq \emptyset$ cannot hold. Further, an analogous argument can show that $E^\prime_{ij} = \emptyset$ but $E^\star_{ij} \neq \emptyset$ cannot hold, and hence (\ref{eqn:Eij_equiv}) is established.

Now we show (\ref{eqn:cuv_equiv}) by contradiction. First, assume that
\begin{equation} \label{eqn:cuv_equiv_gr}
c^\prime_{uv} > d_{uf} + c_{fg} + d_{gv}.
\end{equation}
Because $E^\star \subseteq E$, $G^\star$ is a subgraph of $G$. In addition, since $u,f \in [v_i]$, $g,v \in [v_j]$, $[v_i]$ and $[v_j]$ are equivalence classes of $G^\star$ (as argued in the statement), Lemma~\ref{lem:wij} specifies that there are two walks $u \leadsto f$ and $g \leadsto v$ in $G^\star$ with weights $d_{uf}$ and $d_{gv}$ respectively. Therefore, the right-hand side of (\ref{eqn:cuv_equiv_gr}) is the weight of a walk $w_{uv} : u \leadsto f \rightarrow g \leadsto v$ in $G^\star$. Applying Lemma~\ref{thm:rp_equiv} to each edge of $w_{uv}$ yields a walk $w^\prime_{uv} : u \leadsto f \leadsto g \leadsto v$ in $G^\prime$ such that 
\begin{equation} \label{eqn:cuv_equiv_gr1}
c^\prime_{uv} > d_{uf} + c_{fg} + d_{gv} \ge c^\prime_{w^\prime_{uv}}.
\end{equation}
If $(u,v)$ is not part of $w^\prime_{uv}$ then (\ref{eqn:cuv_equiv_gr1}) implies that $(u,v)$ is a redundant edge in $G^\prime$. This contradicts the statement assumption. On the other hand, if $(u,v)$ is part of $w^\prime_{uv}$, then $w^\prime_{uv}$ is of the form $u \leadsto u \rightarrow v \leadsto v$. (\ref{eqn:cuv_equiv_gr1}) implies that $w^\prime_{uv} \neq (u,v)$ and at least one of the closed walks $u \leadsto u$ and $v \leadsto v$ have negative weight. This contradicts Assumption~\ref{asm:no_negative_cycle}. In conclusion, (\ref{eqn:cuv_equiv_gr}) cannot hold. Next, we assume that
\begin{equation} \label{eqn:cuv_equiv_le}
c^\prime_{uv} < d_{uf} + c_{fg} + d_{gv}.
\end{equation}
By Lemma~\ref{lem:dij_dji}, $d_{uf} = - d_{fu}$ and $d_{gv} = - d_{vg}$. Hence, (\ref{eqn:cuv_equiv_le}) is equivalent to
\begin{equation} \label{eqn:cuv_equiv_le1}
c_{fg} > d_{fu} + c^\prime_{uv} + d_{vg}.
\end{equation}
In addition, Lemma~\ref{thm:rp_equiv} applied to $(u,v) \in E_{ij}^\prime$ yields a walk $w_{uv} : u \leadsto v$ in $G^\star$ such that $c_{w_{uv}} \le c^\prime_{uv}$. This, together with (\ref{eqn:cuv_equiv_le1}), implies
\begin{equation} \label{eqn:cuv_equiv_le2}
c_{fg} > d_{fu} + c_{w_{uv}} + d_{vg}.
\end{equation}
By a similar argument as in (\ref{eqn:cuv_equiv_gr}), Lemma~\ref{lem:wij} guarantees the existence of two walks $f \leadsto u$ and $v \leadsto g$ in $G^\star$ with weights $d_{fu}$ and $d_{vg}$ respectively. Consequently, in $G^\star$ there exists a walk $w_{fg} : f \leadsto g$ such that $c_{w_{fg}} < c_{fg}$. If $(f,g)$ is not part of $w_{fg}$ then (\ref{eqn:cuv_equiv_le2}) implies that $(f,g)$ is a redundant edge in $G^\star$. This contradicts the statement assumption. On the other hand, if $(f,g)$ is part of $w_{fg}$, then $w_{fg}$ is of the form $f \leadsto f \rightarrow g \leadsto g$. (\ref{eqn:cuv_equiv_le2}) implies that $w_{fg} \neq (f,g)$ and at least one of the closed walks $f \leadsto f$ and $g \leadsto g$ have negative weight. This again contradicts Assumption~\ref{asm:no_negative_cycle}. Hence, (\ref{eqn:cuv_equiv_le}) does not hold. Consequently, (\ref{eqn:cuv_equiv}) must hold.
\end{proof}

\subsection{Parameterization of equivalent reduction} \label{subsec:er_parameterization}
Analogous to the maximum redundant edge set problem, an equivalent reduction can be decomposed into $K+1$ components. However, all components of equivalent reduction can be computed in polynomial-time. The following statement summarizes the decomposition result related to equivalent reduction:
\begin{theorem} \label{thm:equiv_reduction}
Let $G = (V,E,c(E))$ be a precedence graph. Let the following be defined in the context of $G$:
\begin{itemize}
\item $K$ is the number of equivalence classes induced by relation $\sim$ in Definition~\ref{def:eq_K}.
\item For $k \in \{1,2,\ldots,K\}$, $[v_k]$ denotes the equivalence class defined in Definition~\ref{def:eq_vk}, with $v_k$ being the representing node of $[v_k]$.
\item For $i, j \in V$, $d_{ij}$ is the minimum walk weight defined in Definition~\ref{def:min_walk_weight}.
\item For $1 \le i \neq j \le K$, $E_{ij}$ and $(v_{ij}^i, v_{ij}^j)$ are defined in Definition~\ref{def:eq_Eij}.
\item $\tilde{G}$ is the condensation of $G$ defined in Definition~\ref{def:condensation}, in accordance with the designation of representing nodes $v_k$'s.
\end{itemize}
Then, every equivalent reduction of $G$ (defined in Definition~\ref{def:equivalent_reduction}) can be parameterized by
\begin{equation} \label{eqn:equiv_reduction_parameterization}
(V,E^{er}, c^{er}(E^{er})).
\end{equation}
In (\ref{eqn:equiv_reduction_parameterization}), the edge set $E^{er}$ is parameterized by
\begin{displaymath}
E^{er} = E_0^{er} \cup (\mathop{\cup}\limits_{k = 1}^K E_k^{er}),
\end{displaymath}
where 
\begin{equation} \label{eqn:Ek_cycle_equiv}
\begin{array}{l}
\text{for $k \in \{1,\ldots,K\}$, $E_k^{er} = \emptyset$ if $|[v_k]| = 1$, otherwise $E_k^{er}$ contains $|[v_k]|$} \vspace{1mm} \\ \text{edges forming a zero weight directed cycle traversing all nodes in $[v_k]$}.
\end{array}
\end{equation}
In addition, $E_0^{er}$ can be decomposed into
\begin{displaymath}
E_0^{er} = \mathop{\cup}\limits_{1 \le i \neq j \le K} E_{ij}^{er},
\end{displaymath}
where
\begin{equation} \label{eqn:Eij_er_edge}
E_{ij}^{er} = \begin{cases} (u,v) \in [v_i] \times [v_j], & \text{if $E_{ij} \neq \emptyset$ and $(v_i, v_j) \notin \tilde{R}_0^\star$} \\ \emptyset, & \text{if $E_{ij} = \emptyset$ or $(v_i, v_j) \in \tilde{R}_0^\star$} \end{cases},
\end{equation}
and $\tilde{R}_0^\star$ is the maximum redundant edge set of $\tilde{G}$, the condensation of $G$. In (\ref{eqn:equiv_reduction_parameterization}), the edge weights $c^{er}(E^{er})$ are defined by
\begin{equation} \label{eqn:Eij_er_weight}
c_{uv}^{er} = \begin{cases} d_{u v_{ij}^i} + c_{v_{ij}^i v_{ij}^j} + d_{v_{ij}^j v}, & \text{if} \; (u,v) \in [v_i] \times [v_j], \; i \neq j, \\ d_{uv}, & \text{if} \; (u,v) \in [v_i] \times [v_i]. \end{cases}
\end{equation}
\end{theorem}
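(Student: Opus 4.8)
The plan is to prove both directions of the parameterization: that every equivalent reduction has the stated form, and that every graph of the stated form is an equivalent reduction. I would begin with the \emph{necessity} direction. Let $G' = (V, E', c'(E'))$ be an arbitrary equivalent reduction. First I would observe that $G'$ can have no redundant edge, since otherwise Definition~\ref{def:redundant_edge} would let us delete an edge and obtain an equivalent precedence graph with strictly fewer edges, contradicting the minimality clause of Definition~\ref{def:equivalent_reduction}. By Lemma~\ref{thm:sim_preservation_equiv}, $G'$ has exactly the same $\sim$-equivalence classes $[v_1], \ldots, [v_K]$ as $G$, so I can split $E'$ into the intra-class edges $E'_k := E' \cap ([v_k] \times [v_k])$ and the remaining inter-class edges collected in $E'_0$.

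For the intra-class part, applying Lemma~\ref{lem:wij} inside $G'$ shows that each $[v_k]$ with $|[v_k]| \ge 2$ is strongly connected using only edges of $E'_k$, so $|E'_k| \ge |[v_k]|$, and equality (forced by minimality, since each node then has in- and out-degree one) makes $E'_k$ a single directed cycle traversing all of $[v_k]$. To pin the weights, I would show $c'_{uv} = d_{uv}$ for every $(u,v) \in E'_k$: the inequality $c'_{uv} \ge d_{uv}$ is immediate from Lemma~\ref{thm:rp_equiv}, while a strict inequality $c'_{uv} > d_{uv}$ would, after lifting a minimum-weight $u \leadsto v$ walk of $G$ into $G'$ via Lemma~\ref{thm:rp_equiv}, either make $(u,v)$ a redundant edge or force a negative-weight closed walk, both impossible (the latter by Assumption~\ref{asm:no_negative_cycle}). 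Summing $d_{i_q i_{q+1}}$ around the cycle and using Lemma~\ref{lem:dis_dsj} then gives total weight $d_{v_k v_k} = 0$, establishing (\ref{eqn:Ek_cycle_equiv}) and the second case of (\ref{eqn:Eij_er_weight}).

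For the inter-class part, the plan is to reduce everything to the condensation. I would fix an auxiliary subgraph $G^\star = (V, E^\star, c(E^\star))$ with $E^\star \subseteq E$ obtained by deleting redundant edges of $G$ one at a time until none remain; this $G^\star$ is equivalent to $G$, has no redundant edge, and together with $G'$ satisfies the hypotheses of Lemma~\ref{thm:E0_equiv}. That lemma then yields $E^\star_{ij} = \emptyset \iff E'_{ij} = \emptyset$ and forces the inter-class weights through (\ref{eqn:cuv_equiv}); using Lemma~\ref{lem:dis_dsj} and the identity (\ref{eqn:viji_vijj}) I would rewrite (\ref{eqn:cuv_equiv}) into the representing-edge form of (\ref{eqn:Eij_er_weight}). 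A short redundancy argument (two inter-class edges between the same class pair would make one a detour of the other, via the intra-class cycles and the now-forced weights) shows $|E'_{ij}| \le 1$, so the condensation-projection $\tilde{G}'$ of $G'$ is literally a \emph{subgraph} of $\tilde{G}$ carrying the original weights $\tilde{c}$. Since by Lemma~\ref{thm:tildeG_positive_cycle} all cycles of $\tilde{G}$ are positive, Theorem~\ref{thm:max_redundant_edge_set} gives a \emph{unique} minimum equivalent subgraph $\tilde{E}_0 \setminus \tilde{R}_0^\star$; after checking that $\tilde{G}'$ (resp.\ $\tilde{G}^\star$) inherits ``no redundant edge'' from $G'$ (resp.\ $G^\star$), I conclude $E'_{ij} \ne \emptyset \iff (v_i,v_j) \in \tilde{E}_0 \setminus \tilde{R}_0^\star$, which is exactly (\ref{eqn:Eij_er_edge}).

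Finally I would prove \emph{sufficiency}: any $G'$ built as in (\ref{eqn:equiv_reduction_parameterization})--(\ref{eqn:Eij_er_weight}) is an equivalent reduction. Equivalence $G' \equiv G$ I would verify through the path characterization in the remark following Lemma~\ref{thm:rp_equiv}, lifting each edge of $E$ to a walk of $G'$ (intra-class cycles reproduce every $d_{ij}$; retained inter-class edges reproduce the condensation weights, and condensation edges in $\tilde{R}_0^\star$ are recovered from replacement paths in $\tilde{E}_0 \setminus \tilde{R}_0^\star$) and conversely; the minimum-walk-weight bookkeeping here is routine but lengthy. Minimality then follows because the necessity direction shows every equivalent reduction has $\sum_{|[v_k]| \ge 2}|[v_k]| + |\tilde{E}_0 \setminus \tilde{R}_0^\star|$ edges, a count the construction attains. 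I expect the \textbf{main obstacle} to be the faithful transfer between the $G$-level and the condensation level, simultaneously controlling reachability and exact walk weights, and, relatedly, the minimality lower bound for the inter-class edges, where (as flagged in Assumption~\ref{asm:feasibility}) one must momentarily reason about an auxiliary precedence graph that need not be feasible, i.e.\ may contain a negative-weight cycle.
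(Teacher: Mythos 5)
Your proposal is correct and its overall architecture coincides with the paper's proof: equivalent reductions have no redundant edges, preserve the $\sim$-classes (Lemma~\ref{thm:sim_preservation_equiv}), carry a zero-weight Hamiltonian cycle with weights $d_{uv}$ inside each nontrivial class, carry at most one inter-class edge per ordered class pair with weight pinned by Lemma~\ref{thm:E0_equiv}, and the whole thing is closed off by exhibiting a candidate attaining the edge-count lower bound and verifying its equivalence to $G$ (with the same feasibility caveat you flag, which is exactly the one exception to Assumption~\ref{asm:feasibility} in the paper). The one place where you genuinely diverge is the auxiliary subgraph $G^\star$ and the identification of $\tilde{R}_0^\star$. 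The paper does not build $G^\star$ by greedy deletion; it takes $E^\star = E \setminus R^\star$ with $R^\star$ the explicit maximum redundant edge set from Theorem~\ref{thm:decomposition}, choosing the surviving inter-class edge to be the representing edge $(v_{ij}^i,v_{ij}^j)$. This makes the pattern ``$E^\star_{ij}\neq\emptyset$ iff $E_{ij}\neq\emptyset$ and $(v_i,v_j)\notin\tilde{R}_0^\star$'' true by construction, so Lemma~\ref{thm:E0_equiv} transfers it to $G'$ with no further work. Your route instead recovers $\tilde{R}_0^\star$ a posteriori, by projecting the greedily-pruned $G^\star$ (or $G'$) to the condensation and invoking uniqueness from Lemma~\ref{thm:tildeG_positive_cycle} and Theorem~\ref{thm:max_redundant_edge_set}; this works (a no-redundant-edge equivalent subgraph of $\tilde{G}$ must be exactly $\tilde{E}_0\setminus\tilde{R}_0^\star$, since any unremoved redundant edge of $\tilde{G}$ keeps its replacement path in $\tilde{E}_0\setminus\tilde{R}_0^\star$), but it obliges you to re-establish the $G$-to-$\tilde{G}$ transfer of ``equivalent'' and ``no redundant edge'' — essentially re-deriving the content of Lemma~\ref{thm:E0} — and to check separately that every surviving inter-class edge of the greedy $G^\star$ lies in $E_{ij}^c$ (which follows from Lemma~\ref{thm:cij_gr_dij}, since an edge outside $E_{ij}^c$ has $c_{fg}>d_{fg}$ and stays redundant after any redundant-edge-set deletion). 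So your approach trades the paper's reuse of the decomposition theorem for a more self-contained but longer condensation-level argument; you correctly identified this transfer as the main technical burden.
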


\begin{remark} \label{rmk:equiv_reduction}
It appears that the statement of Theorem~\ref{thm:equiv_reduction} may depend on some arbitrary choices. For instance, $\tilde{R}_0^\star$ may depend on the choices of representing nodes $v_1, v_2, \ldots, v_K$ for the equivalence classes, and the designation of $(v_{ij}^i, v_{ij}^j)$ in (\ref{eqn:Eij_er_edge}) is arbitrary (see Definition~\ref{def:eq_Eij}). However, it turns out that Theorem~\ref{thm:equiv_reduction} is independent of arbitrary choices. In particular, we can show (in Appendix~\ref{app:er_well_defined}) that
\begin{itemize}
\item As in the case of Remark~\ref{rmk:decomposition}, for all $1 \le i \neq j \le K$, $\big| \tilde{R}_0^\star \cap ([v_i] \times [v_j]) \big|$ (which can only be 0 or 1) is independent of the choices of $v_1, \ldots, v_K$.
\item In (\ref{eqn:Eij_er_weight}), $c_{uv}^{er}$ is independent of the designation of $(v_{ij}^i, v_{ij}^j)$ because
\begin{displaymath}
d_{u v_{ij}^i} + c_{v_{ij}^i v_{ij}^j} + d_{v_{ij}^j v} = \underset{(s,t) \in E_{ij}}{\min} \; d_{us} + c_{st} + d_{tv}.
\end{displaymath}
\end{itemize}
\end{remark}

The main difference between the computations in Theorem~\ref{thm:decomposition} and Theorem~\ref{thm:equiv_reduction} is that in an equivalent reduction, the edges in each equivalence class with more than one node form a zero-weight cycle. The formation of these cycles requires only $O(|[v_k]|)$ time for each equivalence class $[v_k]$. Hence, forming all cycles for all equivalent classes requires only $O(|V|)$ time. This difference is analogous to the distinction between the solutions to minimum equivalent graph problem in \cite{Moyles:1969:AFM:321526.321534} and transitive reduction problem in \cite{AGU_transitive_reduction}. Figure~\ref{fig:ER_example} shows two equivalent reductions of the example graph in Figure~\ref{fig:eq_classes}.
\begin{figure}
        \centering
        \begin{subfigure}[b]{0.45\textwidth}
                \includegraphics[width=\textwidth]{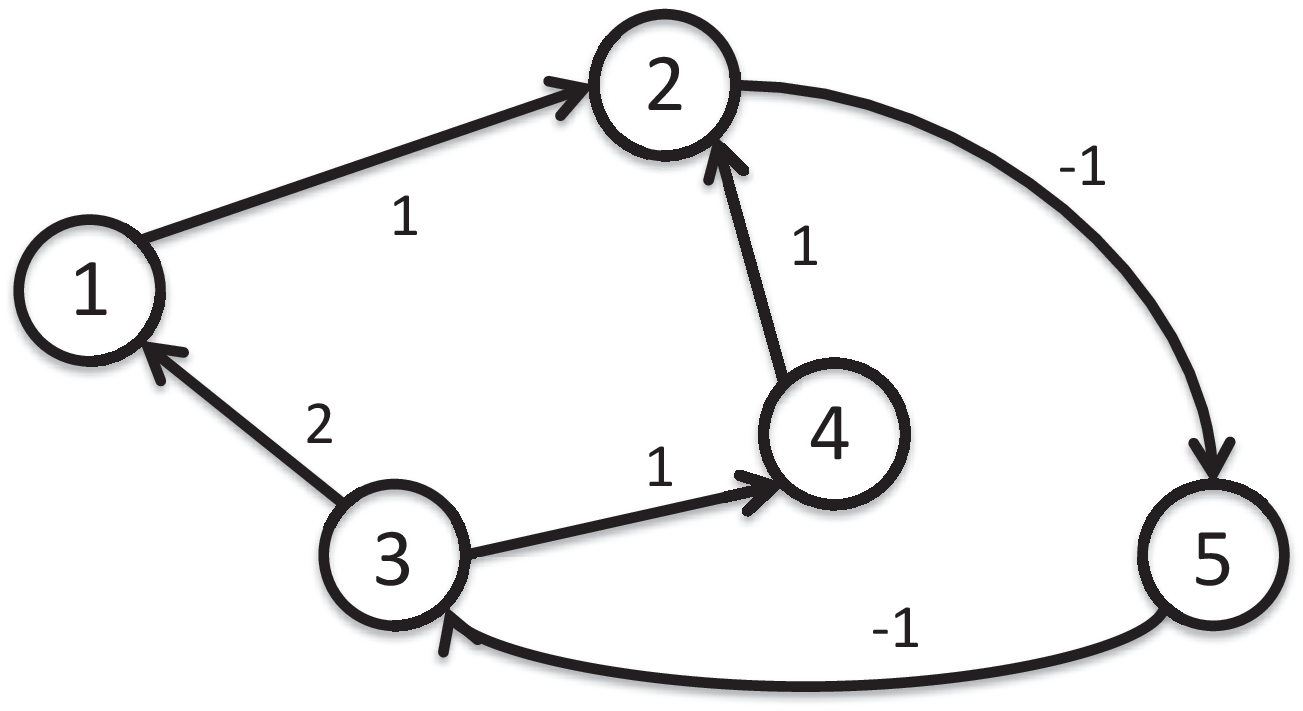}
                \caption{}
                \label{fig:ER_example1}
        \end{subfigure}
         \begin{subfigure}[b]{0.45\textwidth}
                \includegraphics[width=\textwidth]{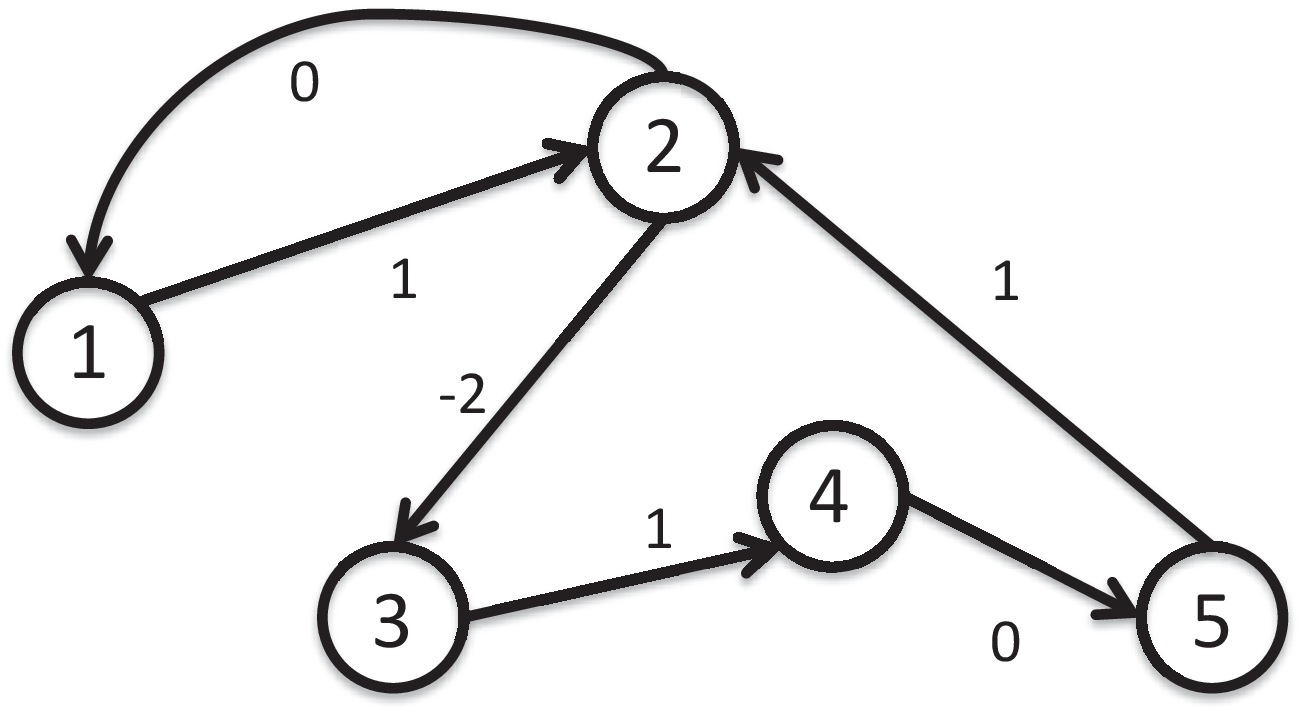}
                \caption{}
                \label{fig:ER_example2}
        \end{subfigure}
        \caption{Two equivalent reductions of the graph in Figure~\ref{fig:eq_classes}. Figure~\ref{fig:ER_example1} is the graph in Figure~\ref{fig:eq_classes} without edge $(3,2)$ since it is not needed to form the zero-weight cycle traversing nodes 2,3,4,5. It can be verified that Figure~\ref{fig:ER_example1} satisfies the parameterization in (\ref{eqn:equiv_reduction_parameterization}). On the other hand, while Figure~\ref{fig:ER_example2} resembles less the graph in Figure~\ref{fig:eq_classes}, it is in fact another equivalent reduction. The cycle $(2,3,4,5,2)$ and the corresponding edge weights satisfy (\ref{eqn:Ek_cycle_equiv}) and (\ref{eqn:Eij_er_weight}). In addition, by (\ref{eqn:Eij_er_edge}) there should be one edge each from $[1]$ to $[2]$ and from $[2]$ to $[1]$, since $E_{12} \neq \emptyset, E_{21} \neq \emptyset$ and $\tilde{R}_0^\star = \emptyset$. The edges $(1,2)$ and $(2,1)$ together with their weights satisfy (\ref{eqn:Eij_er_edge}) and (\ref{eqn:Eij_er_weight}).
        } \label{fig:ER_example}
\end{figure}

\begin{proof}[Proof of Theorem~\ref{thm:equiv_reduction}]
The proof is divided into three parts. In the first part, some necessary conditions of equivalent reduction are listed. In the second part, an alternative characterization of the set of all equivalent reductions is introduced. Finally, in the third part it is established that the alternative characterization is in fact the parameterization provided in the statement of Theorem~\ref{thm:equiv_reduction}. To begin with the first part, let 
\begin{equation} \label{eqn:G_star}
G^\star = (V,E^\star, c(E^\star)) := (V, E \setminus R^\star, c(E \setminus R^\star)),
\end{equation}
with $R^\star$ being a maximum redundant edge set of $G$ parameterized in Theorem~\ref{thm:decomposition}, where in (\ref{eqn:Rij_star}) the edge $(g,h)$ is chosen to be $(v_{ij}^i, v_{ij}^j)$ for $1 \le i \neq j \le K$. Then, because $R^\star$ is a redundant edge set Definition~\ref{def:redundant_edge_set}, Lemma~\ref{thm:redundant_edge_set} and Definition~\ref{def:redundant_relations} imply that $G^\star \equiv G$. In addition, by definition of maximum redundant edge set $G^\star$ does not have any redundant edge. Further, by (\ref{eqn:Rij_star})
\begin{equation} \label{eqn:Eij_star_equiv}
E^\star \cap ([v_i] \times [v_j]) = \begin{cases} (v_{ij}^i, v_{ij}^j), & \text{if $E_{ij} \neq \emptyset$ and $(v_i, v_j) \notin \tilde{R}_0^\star$} \\ \emptyset, & \text{if $E_{ij} = \emptyset$ or $(v_i, v_j) \in \tilde{R}_0^\star$} \end{cases},
\end{equation}
for $1 \le i \neq j \le K$. Now we analyze the properties of equivalent reduction. Let $\mathcal{G}^{ER}$ denote the set of all equivalent reductions of $G$. That is,
\begin{equation} \label{eqn:GER_opt}
\mathcal{G}^{ER} = \underset{G' = (V,E',c'(E')) \equiv G}{\text{argmin}} \; |E'|.
\end{equation}
Let $G'  = (V,E',c'(E')) \in \mathcal{G}^{ER}$. In the following, three properties of $E'$ will be discussed regarding
\begin{itemize}
\item self-loops and parallel edges,
\item edges between two different equivalence classes,
\item edges between two different nodes in an equivalence class.
\end{itemize}
First, we claim that $E'$ does not contain any self-loops or parallel edges. That is,
\begin{equation} \label{eqn:er_no_self-loop}
(i,i) \notin E', \;\; \forall \; i \in V, \quad \text{and all elements of $E'$ are distinct}.
\end{equation}
To see (\ref{eqn:er_no_self-loop}), first note that $G'$ has nonempty solution set because $G' \equiv G$ and $G$ has nonempty solution set because of Assumption~\ref{asm:feasibility}. Hence, the weight $c'_{ii}$ of any self-loop $(i,i)$ must be nonnegative, and the corresponding precedence inequality $x_i - x_i = 0 \le c'_{ij}$ is redundant and should not be present in $G' \in \mathcal{G}^{ER}$. It is also clear that if $G'$ has the minimum number of edges as characterized by (\ref{eqn:GER_opt}), it is impossible to have more than one edge connecting two nodes.

Second, for the edges between equivalent classes, note that $G' \in \mathcal{G}^{ER}$ implies $G' \equiv G$ and $G'$ does not have any redundant edge. Hence, by applying Lemma~\ref{thm:E0_equiv} with $G'$ and $G^\star$, it can be concluded from (\ref{eqn:Eij_equiv}) and (\ref{eqn:Eij_star_equiv}) that
\begin{equation} \label{eqn:Eij_prime_equiv}
E' \cap ([v_i] \times [v_j]) = \begin{cases} (u,v) \in [v_i] \times [v_j], & \text{if $E_{ij} \neq \emptyset$ and $(v_i, v_j) \notin \tilde{R}_0^\star$} \\ \emptyset, & \text{if $E_{ij} = \emptyset$ or $(v_i, v_j) \in \tilde{R}_0^\star$} \end{cases},
\end{equation}
for $1 \le i \neq j \le K$. In addition, by (\ref{eqn:cuv_equiv}) the edge weight in (\ref{eqn:Eij_prime_equiv}) is
\begin{equation} \label{eqn:cuv_prime_equiv}
c_{uv}^\prime = d_{u v_{ij}^i} + c_{v_{ij}^i v_{ij}^j} + d_{v_{ij}^j v}.
\end{equation}
(\ref{eqn:Eij_prime_equiv}) implies that
\begin{equation} \label{eqn:er_E0_edge_count}
\Big| E' \cap (\mathop{\cup}\limits_{1 \le i \neq j \le K} [v_i] \times [v_j]) \Big| = |\tilde{E}_0| - |\tilde{R}_0^\star|,
\end{equation}
where $\tilde{E}_0$ is the edge set of any condensation of $G$. 

Third, we consider the edges in equivalence class $[v_k]$ where $|[v_k]| \ge 2$. Since $G' \equiv G$, Lemma~\ref{thm:sim_preservation_equiv} states that
\begin{equation} \label{eqn:isimj_equiv}
\forall \; i,j \in V, \quad \text{$i \sim j$ in $G' \iff i \sim j$ in $G$}.
\end{equation}
Condition (\ref{eqn:isimj_equiv}) implies that for $k \in \{1,2,\ldots,K\}$, 
\begin{subequations} \label{eqn:isimj_consequences}
\begin{align}
& \text{The graph $H_k := ([v_k], E' \cap ([v_k] \times [v_k]))$ is connected}, \label{eqn:isimj_connected} \\
& |[v_k]| \ge 2 \implies \text{every node in $H_k$ is incident to at least two edges}, \label{eqn:isimj_deg_geq2}\\
& |[v_k]| \ge 2 \implies \text{$H_k$ cannot be a tree}.
\end{align}
\end{subequations}
Thus, $G' \in \mathcal{G}^{ER}$ implies (\ref{eqn:isimj_consequences}), which in turn implies that
\begin{equation} \label{eqn:Ek_er_LB}
|[v_k]| \ge 2 \implies | E' \cap ([v_k] \times [v_k]) | \ge |[v_k]|,
\end{equation}
because if a connected graph has more than one node and it is not a tree, then it has at least as many edges as the number of nodes (e.g., \cite{graph_theory_diestel}). This concludes the first part of the proof (of Theorem~\ref{thm:equiv_reduction}), listing of properties of $G' \in \mathcal{G}^{ER}$.

For the rest of the proof, $G'$ is not assumed to be a member of $\mathcal{G}^{ER}$. We begin the second part of the proof by considering a (to be shown to be) alternative description of $\mathcal{G}^{ER}$. Denote the set
\begin{equation} \label{eqn:Gstar_er}
\mathcal{G}^\star := \left\{ G' = (V,E',c'(E')) \mid \text{$G' \equiv G$, $G'$ satisfies (\ref{eqn:er_no_self-loop}), (\ref{eqn:Eij_prime_equiv}), (\ref{eqn:cuv_prime_equiv}), (\ref{eqn:Ek_er_min})} \right\},
\end{equation}
where condition (\ref{eqn:Ek_er_min}) is defined as
\begin{equation} \label{eqn:Ek_er_min}
|[v_k]| \ge 2 \implies | E' \cap ([v_k] \times [v_k]) | = |[v_k]|.
\end{equation}
In essence, $\mathcal{G}^\star$ in (\ref{eqn:Gstar_er}) is the (to be shown to be nonempty) subset of the feasible set in the optimization problem in (\ref{eqn:GER_opt}), whose elements attain the lower bound of the total number of edges of $G'$ specified by (\ref{eqn:er_no_self-loop}), (\ref{eqn:er_E0_edge_count}) and (\ref{eqn:Ek_er_LB}). In other words, (\ref{eqn:GER_opt}), (\ref{eqn:er_no_self-loop}), (\ref{eqn:er_E0_edge_count}) and (\ref{eqn:Ek_er_LB}) together imply that
\begin{equation} \label{eqn:Gstar_optimal}
\mathcal{G}^\star \neq \emptyset \implies \mathcal{G}^\star = \mathcal{G}^{ER}.
\end{equation}
The set $\mathcal{G}^\star$ in (\ref{eqn:Gstar_er}) can be described in a more convenient form. Its derivation is based on the establishment of two properties of $\mathcal{G}^\star$ to be shown in (\ref{eqn:Ek_cycle_equiv1}) and (\ref{eqn:Ek_weight_equiv1}). As (\ref{eqn:Ek_cycle_equiv1}), it is claimed that if $G' \in \mathcal{G}^\star$ then for all $k \in \{1,2,\ldots,K\}$,
\begin{equation} \label{eqn:Ek_cycle_equiv1}
\begin{array}{rcl}
|[v_k]| = 1 & \implies & E' \cap ([v_k] \times [v_k]) = \emptyset, \vspace{2mm} \\
|[v_k]| \ge 2 & \implies & \text{$E' \cap ([v_k] \times [v_k])$ forms a zero weight directed cycle} \vspace{1mm} \\ & & \text{with $|[v_k]|$ edges traversing all nodes in $[v_k]$}.
\end{array}
\end{equation}
The first implication in (\ref{eqn:Ek_cycle_equiv1}) is a specialization of (\ref{eqn:er_no_self-loop}). For the second implication in (\ref{eqn:Ek_cycle_equiv1}), note that $G' \in \mathcal{G}^\star$ satisfies (\ref{eqn:Ek_er_min}). Hence, restricted to the graph $H_k := ([v_k], E' \cap ([v_k] \times [v_k]))$ the sum of degrees (in-degrees and out-degrees together) of all nodes is $2 \times |[v_k]|$ (e.g., \cite[Theorem 15.3]{Biggs_discrete_math}). If in $H_k$ there is a node with degree greater than two, then there exists at least one other node in $H_k$ with degree less than two. This violates (\ref{eqn:isimj_deg_geq2}), and hence conditions (\ref{eqn:isimj_equiv}) and $G' \equiv G$ are violated as well. Thus, every node in $H_k$ has degree exactly two. This, together with the no self-loop condition in (\ref{eqn:er_no_self-loop}), implies that
\begin{equation} \label{eqn:isimj_deg2}
|[v_k]| \ge 2 \implies \text{every node in $H_k$ is incident to exactly two edges}.
\end{equation}
Note that $G' \equiv G$ since $G' \in \mathcal{G}^\star$. Hence, $G'$ satisfies (\ref{eqn:isimj_equiv}) and (\ref{eqn:isimj_connected}). Then, it is claimed that (\ref{eqn:isimj_equiv}), (\ref{eqn:isimj_connected}) and (\ref{eqn:isimj_deg2}) together imply the second implication in (\ref{eqn:Ek_cycle_equiv1}). (\ref{eqn:isimj_equiv}) and (\ref{eqn:isimj_deg2}) imply that for each node in $[v_k]$ there is one outgoing edge and one incoming edge in $H_k$. This, together with the fact that $H_k$ is a finite graph, suggests that starting from any node $i_0 \in [v_k]$ and following the edges along their directions it is possible to trace a cycle $w = (i_0, i_1, \ldots, i_m = i_0)$ in $H_k$. If there exists $i_{m+1} \in [v_k]$ that is not part of cycle $w$, then by following the outgoing edge of $i_{m+1}$ and subsequent edges along their directions either one of the following is resulted: (i) another cycle $w'$ which is disjointed from $w$ is traced, or (ii) there exists $0 < q \le m$ such that $i_q$ is connected to some $v \in [v_k]$, $v$ not part of $w$. Case (i) contradicts (\ref{eqn:isimj_connected}). On the other hand, case (ii) implies that $i_q$ is connected to $i_{q-1}$, $i_{((q+1) \mod m)}$ and $v$. This suggests that $i_q$ has degree three in $H_k$, and contradicts (\ref{eqn:isimj_deg2}). Thus, both (i) and (ii) lead to contradictory conclusions, and hence $w$ is a directed cycle traversing all nodes in $[v_k]$. Finally, the fact that $w$ is a zero weight cycle is due to (\ref{eqn:isimj_equiv}). Thus, the second implication in (\ref{eqn:Ek_cycle_equiv1}) is established. Furthermore, as a consequence of $G' \in \mathcal{G}^\star$ (in particular $G' \equiv G$ and (\ref{eqn:Ek_cycle_equiv1})), it is claimed that
\begin{equation} \label{eqn:Ek_weight_equiv1}
\text{$(u,v) \in E' \cap ([v_k] \times [v_k])$ for some $k$} \implies c_{uv}^\prime = d_{uv},
\end{equation}
where we note that $d_{uv}$ is the minimum walk weight from $u$ to $v$ in $G$ (Definition~\ref{def:min_walk_weight}). The proof of (\ref{eqn:Ek_weight_equiv1}) is as follows: let $(i_0,i_1,\ldots,i_m = i_0)$ be the directed cycle in $[v_k]$ in (\ref{eqn:Ek_cycle_equiv1}). Then, (\ref{eqn:Ek_weight_equiv1}) holds if it is true that
\begin{equation} \label{eqn:er_proof_cyclew}
c_{i_q i_{q+1}}^\prime = d_{i_q i_{q+1}}, \quad \forall \; q = 0,1,\ldots,m-1.
\end{equation}
The proof of (\ref{eqn:er_proof_cyclew}) is as follows: since $G^\prime \equiv G$, by Lemma~\ref{thm:rp_equiv} for each $q = 0,1,\ldots,m-1$ there exists a walk $w_{i_q i_{q+1}}: i_q \leadsto i_{q+1}$ in $G$ such that $c_{w_{i_q i_{q+1}}} \le c_{i_q i_{q+1}}^\prime$. In addition, since $d_{i_q i_{q+1}} \le c_{w_{i_q i_{q+1}}}$ by Definition~\ref{def:min_walk_weight}, it holds that 
\begin{equation} \label{eqn:er_proof1}
\forall q = 0,\ldots,m-1, \quad d_{i_q i_{q+1}} \le c_{w_{i_q i_{q+1}}} \le c_{i_q i_{q+1}}^\prime \implies c_{i_q i_{q+1}}^\prime - d_{i_q i_{q+1}} \ge 0.
\end{equation}
In addition, by Lemma \ref{lem:dij_dji} and \ref{lem:dis_dsj}
\begin{equation} \label{eqn:er_proof2}
d_{i_0 i_1} + d_{i_1 i_2} + \ldots + d_{i_{m-1} i_m} = d_{i_0 i_{m-1}} + d_{i_{m-1} i_m} = d_{i_0 i_{m-1}} - d_{i_0 i_{m-1}} = 0.
\end{equation}
Since $(i_0,i_1,\ldots,i_m = i_0)$ is a zero weight cycle, it holds that
\begin{equation} \label{eqn:er_proof3}
c_{i_0 i_1}^\prime + c_{i_1 i_2}^\prime + \ldots + c_{i_{m-1} i_m}^\prime = 0.
\end{equation}
Combining (\ref{eqn:er_proof2}) and (\ref{eqn:er_proof3}), it can be concluded that
\begin{equation} \label{eqn:er_proof4}
(c_{i_0 i_1}^\prime - d_{i_0 i_1}) + (c_{i_1 i_2}^\prime - d_{i_1 i_2}) + \ldots + (c_{i_{m-1} i_m}^\prime - d_{i_={m-1} i_m}) = 0.
\end{equation}
Then, (\ref{eqn:er_proof1}) and (\ref{eqn:er_proof4}) together lead to (\ref{eqn:er_proof_cyclew}). In summary,
\begin{displaymath}
G' \in \mathcal{G}^\star \implies \text{$G' \equiv G$, $G'$ satisfies (\ref{eqn:er_no_self-loop}), (\ref{eqn:Eij_prime_equiv}), (\ref{eqn:cuv_prime_equiv}), (\ref{eqn:Ek_cycle_equiv1}), (\ref{eqn:Ek_weight_equiv1})}.
\end{displaymath}
Since (\ref{eqn:Ek_cycle_equiv1}) leads to (\ref{eqn:Ek_er_min}), it holds that
\begin{equation} \label{eqn:Gstar_er1}
\mathcal{G}^\star = \left\{ G' \mid \text{$G' \equiv G$, $G'$ satisfies (\ref{eqn:er_no_self-loop}), (\ref{eqn:Eij_prime_equiv}), (\ref{eqn:cuv_prime_equiv}), (\ref{eqn:Ek_cycle_equiv1}), (\ref{eqn:Ek_weight_equiv1})} \right\}.
\end{equation}
This concludes the second part of the proof (of Theorem~\ref{thm:equiv_reduction}), defining and characterizing $\mathcal{G}^\star$ as a possible alternative description of $\mathcal{G}^{ER}$, the set of all equivalent reductions of $G$.

In the last part of the proof, we establish the connection between $\mathcal{G}^\star$ and the parameterization in the statement of Theorem~\ref{thm:equiv_reduction}. Denote $\mathcal{G}^{er}$ as the set of all precedence graphs satisfying (\ref{eqn:equiv_reduction_parameterization}) to (\ref{eqn:Eij_er_weight}) (i.e., the parameterization in the statement of Theorem~\ref{thm:equiv_reduction}). Then, it can be seen that
\begin{equation} \label{eqn:Ger1}
\mathcal{G}^{er} = \left\{ G' \mid \text{$G'$ satisfies (\ref{eqn:er_no_self-loop}), (\ref{eqn:Eij_prime_equiv}), (\ref{eqn:cuv_prime_equiv}), (\ref{eqn:Ek_cycle_equiv1}), (\ref{eqn:Ek_weight_equiv1})} \right\}.
\end{equation}
That is, the set $\mathcal{G}^{er}$ in (\ref{eqn:Ger1}) includes $\mathcal{G}^\star$ in (\ref{eqn:Gstar_er1}), since the definition of the $\mathcal{G}^{er}$ is the same as $\mathcal{G}^\star$ except that the requirement $G' \equiv G$ is removed. As a side note, $G' \in \mathcal{G}^{er}$ is the only precedence graph considered in this paper where Assumption~\ref{asm:feasibility} (i.e., feasibility) cannot be taken for granted because the condition $G' \equiv G$ has not been shown. In the remaining part of the proof we will show that
\begin{equation} \label{eqn:Gpp_equiv_Gstar}
G' = (V, E', c'(E')) \in \mathcal{G}^{er} \implies G' \equiv G^\star \; (\equiv G),
\end{equation}
where $G^\star$ is defined in the earlier part of the proof in (\ref{eqn:G_star}). If (\ref{eqn:Gpp_equiv_Gstar}) holds, then the desired statement in the theorem is shown because
\begin{displaymath}
\mathcal{G}^{er} = \mathcal{G}^\star \implies \mathcal{G}^\star \neq \emptyset \; (\text{since $\mathcal{G}^{er} \neq \emptyset$}) \overset{(\ref{eqn:Gstar_optimal})}{\implies} \mathcal{G}^{er} = \mathcal{G}^\star = \mathcal{G}^{ER}.
\end{displaymath}
Now we start to show (\ref{eqn:Gpp_equiv_Gstar}), and assume that $G' \in \mathcal{G}^{er}$. We first show one direction of ``$\equiv$'' in (\ref{eqn:Gpp_equiv_Gstar}) by proving
\begin{equation} \label{eqn:Gstar2Gpp}
\text{$x \in \real^n$ satisfies $G^\star$} \implies x_u - x_v \le c'_{uv}, \quad \forall (u,v) \in E'.
\end{equation}
There are two cases in (\ref{eqn:Gstar2Gpp}): (a) $u \neq v, u,v \in [v_k]$ for some $k$, or (b) $u \in [v_k]$ and $v \in [v_q]$ with $k \neq q$. For case (a), (\ref{eqn:Ek_weight_equiv1}) requires that the weight of $(u,v)$ is $c'_{uv} = d_{uv}$. Since $G^\star \equiv G$, Lemma~\ref{thm:sim_preservation_equiv} implies that $G^\star$ and $G$ have the same equivalence class partitioning. Thus, by Lemma~\ref{lem:wij} there exists a walk $u \leadsto v$ in $G^\star$ with weight $d_{uv}$, which is the same as the weight of $(u,v)$ in $G'$. Then, Lemma~\ref{thm:implication} (applied to $G^\star$) establishes (\ref{eqn:Gstar2Gpp}) in case (a). For case (b), by (\ref{eqn:Eij_star_equiv}) and (\ref{eqn:Eij_prime_equiv}) $(u,v) \in E'$ implies that $(v_{kq}^k, v_{kq}^q) \in E^\star$. In addition, (\ref{eqn:cuv_prime_equiv}) implies that $c'_{uv} = d_{u v_{kq}^k} + c_{v_{kq}^k v_{kq}^q} + d_{v_{kq}^q v}$. Since $u,v_{kq}^k \in [v_k]$, $v,v_{kq}^q \in [v_q]$, Lemma~\ref{lem:wij} specifies that there exists two walks in $G^\star$, $u \leadsto v_{kq}^k$ and $v_{kq}^q \leadsto v$, with weights $d_{u v_{kq}^k}$ and $d_{v_{kq}^q v}$ respectively. Hence, in $G^\star$ there exists a walk $u \leadsto v_{kq}^k \rightarrow v_{kq}^q \leadsto v$ with weight equal to $c'_{uv}$. Again, by Lemma~\ref{thm:implication} it can be concluded that (\ref{eqn:Gstar2Gpp}) holds under case (b). In conclusion, (\ref{eqn:Gstar2Gpp}) holds. Next, we show the other direction of ``$\equiv$'' in (\ref{eqn:Gpp_equiv_Gstar}) by establishing 
\begin{equation} \label{eqn:Gpp2Gstar}
\text{$x \in \real^n$ satisfies $G'$} \implies x_i - x_j \le c_{ij}, \quad \forall (i,j) \in E^\star.
\end{equation}
First, we note that (\ref{eqn:Gstar2Gpp}) implies that $G'$ has at least one solution since $G^\star$ has at least one. Therefore, $G'$ also satisfies Assumption~\ref{asm:graph_assumptions} and Lemma~\ref{thm:implication} can be applied to $G'$. There are two cases in (\ref{eqn:Gpp2Gstar}): (c) $i \neq j, i,j \in [v_k]$ for some $k$, or (d) $i \in [v_k]$ and $j \in [v_q]$ with $k \neq q$. For case (c), by (\ref{eqn:Ek_cycle_equiv1}) and (\ref{eqn:Ek_weight_equiv1}) there exists a path $(i = i_0, i_1, \ldots, i_m = j)$ in $G'$, as a part of the cycle traversing nodes in $[v_k]$, with weight $d_{i_0 i_1} + \ldots + d_{i_{m-1} i_m} = d_{ij} \le c_{ij}$ (the equality is due to Lemma~\ref{lem:dis_dsj} and the inequality is due to Definition~\ref{def:min_walk_weight}). Thus, by Lemma~\ref{thm:implication} (\ref{eqn:Gpp2Gstar}) holds in case (c). For case (d), (\ref{eqn:Eij_star_equiv}), (\ref{eqn:Eij_prime_equiv}) and (\ref{eqn:cuv_prime_equiv}) specify that $i = v_{kq}^k$, $j = v_{kq}^q$ and there exists $(u,v) \in E'$ with $u \in [v_k]$, $v \in [v_q]$ such that $c_{ij} = d_{iu} + c'_{uv} + d_{vj}$. Since $i,u \in [v_k]$ and $v,j \in [v_q]$, (\ref{eqn:Ek_cycle_equiv1}) and (\ref{eqn:Ek_weight_equiv1}) imply the existence of two paths $i \leadsto u$ and $v \leadsto j$ in $G'$ with weights $d_{iu}$ and $d_{vj}$ respectively. Hence, in $G'$ there exists a path $i \leadsto u \rightarrow v \leadsto j$ with weight $d_{iu} + c'_{uv} + d_{vj} = c_{ij}$. Consequently, by Lemma~\ref{thm:implication} (\ref{eqn:Gpp2Gstar}) holds in case (d). Therefore, (\ref{eqn:Gpp2Gstar}) holds and consequently (\ref{eqn:Gpp_equiv_Gstar}) holds. This concludes the proof of Theorem~\ref{thm:equiv_reduction}.
\end{proof}

\subsection{Variable elimination in equivalent reduction} \label{subsec:ve_er}
Let $G^{er} = (V, E^{er}, c^{er}(E^{er}))$ denote an equivalent reduction of a precedence graph $G$. By Theorem~\ref{thm:equiv_reduction} (i.e., (\ref{eqn:Ek_cycle_equiv}) and (\ref{eqn:Eij_er_weight})) in $G^{er}$ if $i, j \in [v_k]$ for some $k$ then $x_i - x_j = d_{ij}$ for all $x \in \real^n$ satisfying $G$. In some applications (e.g., optimization), it is beneficial to use these equalities to eliminate all except one variable in each equivalence class (e.g., keeping only the representing node $v_k$). The resulted simplified inequality system can be considered as a ``condensation'' of $G^{er}$ constructed as follows:
\begin{algorithm}[Condensation of equivalent reduction] \label{alg:er_condensation} \hspace{0cm}
\begin{enumerate}
\item Designate representing nodes in the equivalence classes as $v_1, v_2, \ldots, v_K$.
\item Define $\tilde{V} := \{v_1, v_2, \ldots, v_K\}$.
\item Define $\hat{E}_0^{er} := \{(v_i, v_j) \mid E^{er} \cap ([v_i] \times [v_j]) = E_{ij}^{er} \neq \emptyset, \; i \neq j \}$.
\item For each $(v_i, v_j) \in \hat{E}_0^{er}$, define edge weight
\begin{equation} \label{eqn:er_con_edge_weight}
\hat{c}_{v_i v_j}^{er} := d_{v_i u} + c^{er}_{uv} + d_{v v_j}, 
\end{equation}
where $\{(u,v)\} = E^{er} \cap ([v_i] \times [v_j])$.
\item Define the condensation of $G^{er}$ to be $(\tilde{V}, \hat{E}_0^{er}, \hat{c}^{er}(\hat{E}_0^{er}))$.
\end{enumerate}
\end{algorithm}
It turns out that the condensation of an equivalent reduction of $G$ can be interpreted in an alternative way related to the condensation of $G$:

\begin{lemma}
Let $G^{er} = (V, E^{er}, c^{er}(E^{er}))$ be an equivalent reduction of a precedence graph $G$. Let $\hat{E}_0^{er}$ and $\hat{c}^{er}(\hat{E}_0^{er})$, defined in Algorithm~\ref{alg:er_condensation}, describe the condensation of $G^{er}$. Let $\tilde{G} = (\tilde{V}, \tilde{E}_0, \tilde{c}(\tilde{E}_0))$ denote the condensation of $G$, with the same designation of $v_1,v_2,\ldots,v_K$ as in the choice of Algorithm~\ref{alg:er_condensation}. Let $\tilde{R}_0^\star$ be the maximum redundant edge set of $\tilde{G}$. Then,
\begin{equation} \label{eqn:equiv_cons}
(\tilde{V}, \hat{E}_0^{er}, \hat{c}^{er}(\hat{E}_0^{er})) = (\tilde{V}, \tilde{E}_0 \setminus \tilde{R}_0^\star, \tilde{c}(\tilde{E}_0 \setminus \tilde{R}_0^\star)). 
\end{equation}
That is, the condensation of an equivalent reduction of $G$ is the condensation of $G$ with its maximum redundant edge set removed.
\end{lemma}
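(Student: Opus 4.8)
The plan is to verify the claimed identity (\ref{eqn:equiv_cons}) componentwise: first the node sets, then the edge sets, and finally the edge weights. The node sets are identical by construction, since both condensations use the same designated representing nodes $\tilde{V} = \{v_1,\ldots,v_K\}$ (Algorithm~\ref{alg:er_condensation} is explicitly instructed to reuse the designation of $\tilde{G}$).

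For the edge sets, I would combine the definition of $\hat{E}_0^{er}$ in Algorithm~\ref{alg:er_condensation} with the parameterization of $E_{ij}^{er}$ in (\ref{eqn:Eij_er_edge}) from Theorem~\ref{thm:equiv_reduction}. By definition, $\hat{E}_0^{er}$ contains $(v_i,v_j)$ exactly when $E_{ij}^{er} \neq \emptyset$, and (\ref{eqn:Eij_er_edge}) shows this occurs precisely when $E_{ij} \neq \emptyset$ and $(v_i,v_j) \notin \tilde{R}_0^\star$. On the other side, Definition~\ref{def:condensation} gives $(v_i,v_j) \in \tilde{E}_0 \iff E_{ij} \neq \emptyset$, and since $\tilde{R}_0^\star \subseteq \tilde{E}_0$, removing $\tilde{R}_0^\star$ yields exactly $\{(v_i,v_j) \mid E_{ij} \neq \emptyset,\ (v_i,v_j) \notin \tilde{R}_0^\star\}$. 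These two descriptions coincide, establishing $\hat{E}_0^{er} = \tilde{E}_0 \setminus \tilde{R}_0^\star$.

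For the edge weights, fix $(v_i,v_j) \in \hat{E}_0^{er}$ and let $\{(u,v)\} = E_{ij}^{er}$, so that $u \in [v_i]$ and $v \in [v_j]$. I would substitute the equivalent-reduction weight formula (\ref{eqn:Eij_er_weight}), namely $c^{er}_{uv} = d_{u v_{ij}^i} + c_{v_{ij}^i v_{ij}^j} + d_{v_{ij}^j v}$, into the condensation-weight formula (\ref{eqn:er_con_edge_weight}) to obtain $\hat{c}^{er}_{v_i v_j} = d_{v_i u} + d_{u v_{ij}^i} + c_{v_{ij}^i v_{ij}^j} + d_{v_{ij}^j v} + d_{v v_j}$. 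Since $v_i, u, v_{ij}^i$ all lie in $[v_i]$ and $v_{ij}^j, v, v_j$ all lie in $[v_j]$, applying Lemma~\ref{lem:dis_dsj} twice telescopes the distance terms: $d_{v_i u} + d_{u v_{ij}^i} = d_{v_i v_{ij}^i}$ and $d_{v_{ij}^j v} + d_{v v_j} = d_{v_{ij}^j v_j}$. The result is $d_{v_i v_{ij}^i} + c_{v_{ij}^i v_{ij}^j} + d_{v_{ij}^j v_j}$, which is exactly $\tilde{c}_{v_i v_j}$ by (\ref{eqn:viji_vijj}) in Definition~\ref{def:condensation}.

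The one point requiring care — and the closest thing to an obstacle — is to confirm that the minimum walk weights $d$ appearing in Algorithm~\ref{alg:er_condensation} agree with those used in Definition~\ref{def:condensation}. All relevant $d$-terms here are intra-equivalence-class distances; for $s,t$ in a common class, $x_s - x_t = d_{st}$ is forced on the entire (nonempty) solution set by Lemma~\ref{thm:implication} together with $d_{st} = -d_{ts}$ (Lemma~\ref{lem:dij_dji}), so $d_{st}$ is intrinsic to the solution set. Because $G^{er} \equiv G$ and the equivalence class partition is preserved (Lemma~\ref{thm:sim_preservation_equiv}), these distances are identical in $G$ and $G^{er}$, and the relevant walks realizing them exist in both by Lemma~\ref{lem:wij}. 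Granting this, the telescoping step is unambiguous, and the three component equalities together yield (\ref{eqn:equiv_cons}).
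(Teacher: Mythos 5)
Your proposal is correct and follows essentially the same route as the paper: both verify the identity componentwise, matching the edge sets via (\ref{eqn:Eij_er_edge}) and Definition~\ref{def:condensation}, and matching the weights by substituting (\ref{eqn:Eij_er_weight}) into (\ref{eqn:er_con_edge_weight}), telescoping with Lemma~\ref{lem:dis_dsj}, and invoking (\ref{eqn:viji_vijj}). Your closing remark about the intra-class distances $d_{st}$ agreeing in $G$ and $G^{er}$ is a harmless extra precaution; the paper simply takes $d$ to be the minimum walk weight in $G$ throughout.
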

\begin{proof}
To see (\ref{eqn:equiv_cons}), we compare the following:
\begin{itemize}
\item $\tilde{V}$ is the node set in the left-hand and right-hand sides of (\ref{eqn:equiv_cons}).
\item For the edge sets, by the definition in Algorithm~\ref{alg:er_condensation} and Theorem~\ref{thm:equiv_reduction} (i.e., (\ref{eqn:Eij_er_edge})) it holds that
\begin{displaymath}
(v_i, v_j) \in \hat{E}_0^{er} \iff E_{ij}^{er} \neq \emptyset \iff (v_i, v_j) \in (\tilde{E}_0 \setminus \tilde{R}_0^\star).
\end{displaymath}
Hence, the left-hand and right-hand sides of (\ref{eqn:equiv_cons}) have the same edge sets (i.e., $\hat{E}_0^{er} = \tilde{E}_0 \setminus \tilde{R}_0^\star$).
\item For each $(v_i, v_j) \in \hat{E}_0^{er}$, the edge weight satisfies
\begin{displaymath}
\begin{array}{rcl}
\hat{c}_{v_i v_j}^{er} & \overset{(\ref{eqn:er_con_edge_weight})}{=} & d_{v_i u} + c^{er}_{uv} + d_{v v_j} \vspace{0mm} \\
& \overset{(\ref{eqn:Eij_er_weight})}{=} & d_{v_i u} + d_{u v_{ij}^i} + c_{v_{ij}^i v_{ij}^j} + d_{v_{ij}^j v} + d_{v v_j} \vspace{0mm} \\
& \overset{\text{Lemma~\ref{lem:dis_dsj}}}{=} & d_{v_i v_{ij}^i} + c_{v_{ij}^i v_{ij}^j} + d_{v_{ij}^j v_j} \vspace{0mm} \\
& \overset{(\ref{eqn:viji_vijj})}{=} & \tilde{c}_{v_i v_j}
\end{array}.
\end{displaymath}
\end{itemize}
\end{proof}

%Corresponding to $\hat{G}^{er}$ is a system of equalities and inequalities
%\begin{equation} \label{eqn:Ger_con}
%\begin{array}{rcll}
%x_u - x_{v_k} & = & d_{u v_k}, & \quad \forall u \in [v_k], \;\; \forall k \in \{1,2,\ldots,K\}, \vspace{2mm} \\
%x_{v_i} - x_{v_j} & \le & \hat{c}_{v_i v_j}, & \quad \forall (v_i, v_j) \in \hat{E}_0.
%\end{array}
%\end{equation}
%By definition, the system in (\ref{eqn:Ger_con}) has the same solution set as $G^{er}$ (and $G$).

%\begin{center}
%\line(1,0){350}
%\end{center}

\section{Conclusions} \label{sec:conclusion}
The maximum index set of redundant relations problem (i.e., maximum redundant edge set problem) is a generalization of the minimum equivalent graph problem. Similarly, the equivalent reduction problem is a generalization of the transitive reduction problem. Nevertheless, the generalizations are shown to possess analogous computation properties of the respective restrictions. The maximum redundant edge set problem is NP-hard in general, and it is solvable in polynomial time if the graph does not have zero-weight cycles. This is analogous to the statement that the minimum equivalent graph problem is NP-hard in general, and it is solvable in polynomial time if the graph is acyclic. In addition, the decomposition of the maximum redundant edge set problem based on the equivalence classes defined by the ``on-a-zero-weight-closed-walk'' relation is analogous to the decomposition of the minimum equivalent graph problem based on strongly connected components. Further, in the decomposition, the subproblems dealing with the edges between equivalence classes are both solvable in polynomial time. The subproblem within an equivalence class for maximum redundant edge set problem is in fact equivalent to the (NP-hard) minimum equivalent graph problem within the corresponding equivalence class, with the implication that all available exact or inexact algorithms for minimum equivalent graph problem can be utilized. Finally, analogous results also hold between the equivalent reduction problem and the transitive reduction problem. The structure of decompositions of the solutions are analogous, and both problems can be solved in polynomial-time using similar shortest path calculations.

%\section{Numerical case study} \label{sec:case_study}

\appendix

\section{Decomposing a walk into a path and cycles} \label{app:walk_decomp} % B
In a directed graph $G = (V,E)$ without parallel edges, a walk can be represented by a sequence of node indices $w = (i_0, i_1, \ldots, i_m)$ where $(i_k, i_{k+1}) \in E$ for $k = 0,1,\ldots,m-1$. To decompose a walk $w$, the following ``scan'' operation is necessary:
\begin{flushleft}
$S = {\rm scan}(w)$
\end{flushleft}
\begin{itemize}
\item Initialize $S \leftarrow \emptyset$ and $k \leftarrow 0$.
\item While $k < {\rm length}(w)$, do
\begin{enumerate}
\item If there exists $r$ as the smallest index such that $r > k $ and $i_r = i_k$, then update $S \leftarrow S \cup \{(i_k, i_{k+1}, \ldots, i_r)\}$. The sequence $w$ is also updated according to
\begin{displaymath}
w \leftarrow 
\begin{cases} (\;\;) & \textrm{if $k = 0$ and $r = m$} \\
 (i_0, i_1, \ldots, i_k, i_{r+1}, i_{r+2}, \ldots, i_m) & \textrm{otherwise}
\end{cases}.
\end{displaymath}
In updating $w$, the sub-walk $(i_{r+1},\ldots,i_m)$ is empty by convention if $r = m$.
\item Increase $k \leftarrow k+1$.
\end{enumerate}
End (of While)
\item Update $S \leftarrow S \cup \{w\}$.
\end{itemize}
For any two nodes $u$ and $v$, applying the scan operation to a walk $w = (u = i_0, i_1, \ldots, i_m = v)$ results in $S$ a path from $u$ to $v$ (can be a degenerate path containing a single node if $u = v$) and a finite number of closed walks (if any). The scan operation can be applied to each closed walk and all closed walks generated subsequently that contain more than one cycle. The recursive application of the scan operation eventually decomposes all closed walks into cycles in finite number of steps. To see this, each time when ``children'' closed walks are generated by passing a ``parent'' closed walk through scan, the number of edges of the children closed walks must be smaller than that of the parent. The scan operation can be applied to each children which is not a cycle, which might generate more ``grand-children'' closed walks with even fewer edges. However, the recursive application of scan cannot continue indefinitely since all closed walks with two edges are cycles.

\section{Theorem~\ref{thm:decomposition} independent of choices of $v_k$'s} \label{app:decompose_well_defined}
%There are possible dependency of Theorem~\ref{thm:decomposition} on the choices of $v_k$'s. In the first place, the set $E_{ij}^c$ according to (\ref{def:Eijc}) in Definition~\ref{def:eq_Eij}, might depend on the choices of $v_k$'s. In addition, since the definition of the condensation $\tilde{G}$ depend on the choices of $v_k$'s, the set of possible $R_0^\star$ can potentially be affected by the choices of $v_k$'s. We will next disprove these dependencies.

For $E_{ij}^c$, we claim that $E_{ij}^c$ is independent of the choices of $v_k$'s. For any $1 \le i \neq j \le K$, (\ref{def:Eijc}) specifies that $(u,v) \in E_{ij}^c$ if and only if
\begin{equation} \label{eqn:app_Eijc}
d_{v_i s} + c_{st} + d_{t v_j} \ge d_{v_i u} + c_{uv} + d_{v v_i}, \quad \forall (s,t) \in E_{ij}.
\end{equation}
However, by lemma~\ref{lem:dij_dji} and \ref{lem:dis_dsj}, (\ref{eqn:app_Eijc}) can be rewritten as
\begin{displaymath}
c_{st} \ge d_{s v_i} + d_{v_i u} + c_{uv} + d_{v v_j} + d_{v_j t} = d_{s u} + c_{uv} + d_{v t}.
\end{displaymath}
This shows that $E_{ij}^c$ is independent of the choices of $v_k$'s. 

Next, we establish that the set of $R_0^\star$ is independent of the choices of $v_k$'s. Let $\tilde{G}$ be the condensation, where the representing nodes of the equivalence classes are $v_1, \ldots, v_K$. Now instead, let $\hat{v}_1, \ldots, \hat{v}_K$, where $\hat{v}_k$ can be different from $v_k$, be the representing nodes. Let $\hat{G}$ be the corresponding condensation. By Definition~\ref{def:condensation}, $(v_i, v_j)$ is an edge of $\tilde{G}$ if and only if $(\hat{v}_i, \hat{v}_j)$ is an edge of $\hat{G}$. In addition, it is claimed that if $\tilde{c}_{\hat{v}_i \hat{v}_j}$ is the edge weight of $(\hat{v}_i, \hat{v}_j)$ in $\hat{G}$, then it holds that 
\begin{equation} \label{eqn:app_tilde_c_eq}
\tilde{c}_{\hat{v}_i \hat{v}_j} = d_{\hat{v}_i v_i} + \tilde{c}_{v_i v_j} + d_{v_j \hat{v}_j}.
\end{equation}
First, we claim that 
\begin{equation}  \label{eqn:app_tilde_c_ineq}
\tilde{c}_{\hat{v}_i \hat{v}_j} \le d_{\hat{v}_i v_i} + \tilde{c}_{v_i v_j} + d_{v_j \hat{v}_j}.
\end{equation}
Note that, by (\ref{eqn:viji_vijj}) and Lemma~\ref{lem:dis_dsj}, it holds that
\begin{displaymath}
\begin{aligned}
d_{\hat{v}_i v_i} + \tilde{c}_{v_i v_j} + d_{v_j \hat{v}_j} & = d_{\hat{v}_i v_i} + d_{v_i v_{ij}^i} + c_{v_{ij}^i v_{ij}^j} + d_{v_{ij}^j v_j} + d_{v_j \hat{v}_j} \\
& = d_{\hat{v}_i v_{ij}^i} + c_{v_{ij}^i v_{ij}^j} + d_{v_{ij}^j \hat{v}_j}.
\end{aligned}
\end{displaymath}
Hence, the right-hand side of (\ref{eqn:app_tilde_c_ineq}) is in fact the weight of the walk $\hat{v}_i \leadsto v_{ij}^i \rightarrow v_{ij}^j \leadsto \hat{v}_j$. Consequently, by (\ref{eqn:tilde_c_ij}) (applied to $\hat{G}$), (\ref{eqn:app_tilde_c_ineq}) holds. Next, to complete the proof of (\ref{eqn:app_tilde_c_eq}), suppose that
\begin{equation} \label{eqn:app_tildec}
\tilde{c}_{\hat{v}_i \hat{v}_j} = d_{\hat{v}_i \hat{v}_{ij}^i} + c_{\hat{v}_{ij}^i \hat{v}_{ij}^j} + d_{\hat{v}_{ij}^j \hat{v}_j} < d_{\hat{v}_i v_i} + \tilde{c}_{v_i v_j} + d_{v_j \hat{v}_j}.
\end{equation}
Then, with Lemma~\ref{lem:dij_dji} and \ref{lem:dis_dsj}, (\ref{eqn:app_tildec}) implies that
\begin{displaymath}
d_{v_i \hat{v}_{ij}^i} + c_{\hat{v}_{ij}^i \hat{v}_{ij}^j} + d_{\hat{v}_{ij}^j v_j} < \tilde{c}_{v_i v_j}.
\end{displaymath}
This contradicts (\ref{eqn:tilde_c_ij}) in Definition~\ref{def:condensation}. Consequently, (\ref{eqn:app_tilde_c_eq}) holds. With (\ref{eqn:app_tilde_c_eq}), it can be shown that for any sequence $\hat{v}_{k(1)}, \hat{v}_{k(2)}, \ldots, \hat{v}_{k(m)}$, the inequality
\begin{equation} \label{eqn:app_tilde_hat}
\tilde{c}_{\hat{v}_{k(0)} \hat{v}_{k(m)}} \ge \tilde{c}_{\hat{v}_{k(0)} \hat{v}_{k(1)}} + \tilde{c}_{\hat{v}_{k(1)} \hat{v}_{k(2)}} + \ldots + \tilde{c}_{\hat{v}_{k(m-1)} \hat{v}_{k(m)}}
\end{equation}
is equivalent to
\begin{equation} \label{eqn:app_tilde_hat2}
\begin{aligned}
& d_{\hat{v}_{k(0)} v_{k(0)}} + \tilde{c}_{v_{k(0)} v_{k(m)}} + d_{v_{k(m)} \hat{v}_{k(m)}} \\
\ge \; & d_{\hat{v}_{k(0)} v_{k(0)}} + \tilde{c}_{v_{k(0)} v_{k(1)}} + d_{v_{k(1)} \hat{v}_{k(1)}} + d_{\hat{v}_{k(1)} v_{k(1)}} + \tilde{c}_{v_{k(1)} v_{k(2)}} + d_{v_{k(2)} \hat{v}_{k(2)}} \\
& + \ldots + d_{\hat{v}_{k(m-1)} v_{k(m-1)}} + c_{v_{k(m-1)} v_{k(m)}} + d_{v_{k(m)} \hat{v}_{k(m)}}.
\end{aligned}
\end{equation}
In (\ref{eqn:app_tilde_hat2}), Lemma~\ref{lem:dij_dji} yields $d_{v_{k(q)} \hat{v}_{k(q)}} + d_{\hat{v}_{k(q)} v_{k(q)}} = 0$ for $q = 1,\ldots,m-1$. Thus, (\ref{eqn:app_tilde_hat2}) becomes
\begin{equation} \label{eqn:app_tilde}
\tilde{c}_{v_{k(0)} v_{k(m)}} \ge \tilde{c}_{v_{k(0)} v_{k(1)}} + \tilde{c}_{v_{k(1)} v_{k(2)}} + \ldots + \tilde{c}_{v_{k(m-1)} v_{k(m)}}.
\end{equation}
That is, inequalities (\ref{eqn:app_tilde_hat}) and (\ref{eqn:app_tilde}) are equivalent. The equivalence implies that $\tilde{R}$ is a redundant edge set in $\tilde{G}$ if and only if $\hat{R} := \{(\hat{v}_i, \hat{v}_j) \mid (v_i, v_j) \in \tilde{R} \iff (\hat{v}_i, \hat{v}_j) \in \hat{R}\}$ is a redundant edge set in $\hat{G}$ with the same cardinality, and vice versa. Therefore, for $1 \le i \neq j \le K$, either there is an edge from $[v_i]$ to $[v_j]$ in the maximum redundant edge set of all condensations, or there is no edge from $[v_i]$ to $[v_j]$ in the maximum redundant edge set of any condensation. This is the second bullet of Remark~\ref{rmk:decomposition}. Consequently, the definition of $R_{ij}^\star$ (and hence $R_0^\star$) in Theorem~\ref{thm:decomposition} is independent of the choices of $v_k$'s.

\section{Theorem~\ref{thm:equiv_reduction} independent of arbitrary choices} \label{app:er_well_defined}
The first bullet has been shown in Appendix~\ref{app:decompose_well_defined}. The second bullet is shown as follows:
\begin{displaymath}
\begin{array}{cl}
& \underset{(s,t) \in E_{ij}}{\min} \; d_{us} + c_{st} + d_{tv} \vspace{1mm} \\
\overset{\text{Lemma~\ref{lem:dis_dsj}}}{=} & \underset{(s,t) \in E_{ij}}{\min} \; d_{u v_i} + d_{v_i s} + c_{st} + d_{t v_j} + d_{v_j v} \vspace{1mm} \\
= & d_{u v_i} + \big( \underset{(s,t) \in E_{ij}}{\min} \; d_{v_i s} + c_{st} + d_{t v_j} \big) + d_{v_j v} \vspace{1mm} \\
\overset{\text{Definition~\ref{def:eq_Eij}}}{=} & d_{u v_i} + d_{v_i v_{ij}^i} + c_{v_{ij}^i v_{ij}^j} + d_{v_{ij}^j v_j} + d_{v_j v} \vspace{1mm} \\
\overset{\text{Lemma~\ref{lem:dis_dsj}}}{=} & d_{u v_{ij}^i} + c_{v_{ij}^i v_{ij}^j} + d_{v_{ij}^j v}.
\end{array}
\end{displaymath}

\bibliographystyle{IEEEtran}
\bibliography{pgs}

\end{document}